\numberwithin{equation}{section}
\definecolor{PaColor}{RGB}{77,7,255}
\definecolor{ClaColor}{RGB}{46,0,128}
\definecolor{NiColoRed}{RGB}{255,77,77}
\definecolor{NiCitation}{RGB}{0,181,26}
\definecolor{AlColor}{RGB}{77,0,154}
\definecolor{BeaColor}{rgb}{0.0,0.44,1.0}
\newtheoremstyle{TheoremStyle}
{3pt}
{3pt}
{\slshape}
{}
{\bf}
{:}
{.5em}
{}
\theoremstyle{TheoremStyle}
\newtheorem{theorem}{Theorem}[section]
\newtheorem{corollary}[theorem]{Corollary}
\newtheorem{proposition}[theorem]{Proposition}
\newtheorem{lemma}[theorem]{Lemma}
\newtheorem{definition}[theorem]{Definition}
\newtheorem{remark}[theorem]{Remark}
\newtheorem{example}[theorem]{Example} 
\newcommand{\fish}{\begin{tikzpicture}[thick,scale=1.2]
		\draw[red] (0,0) edge [out=30,in=-30] node[above] {} (0,.35);
		\draw (0,0) edge [out=150,in=210] node[above] {} (0,.35);
\end{tikzpicture}}
\newcommand{\fiammifero}{\begin{tikzpicture}[thick,scale=1.2]
		\draw (0,0) -- (0,0.3);
		\filldraw (0,0.3)circle (1pt);
\end{tikzpicture}}
\newcommand{\propagatore}{\begin{tikzpicture}[thick,scale=1.2]
		\draw (0,0) -- (0,0.3);
\end{tikzpicture}}
\newcommand{\fiammiferoCC}{\begin{tikzpicture}[thick,scale=1.2]
		\draw[red] (0,0) -- (0,0.3);
		\filldraw[red] (0,0.3)circle (1pt);
\end{tikzpicture}}
\newcommand{\propagatoreCC}{\begin{tikzpicture}[thick,scale=1.2]
		\draw[red] (0,0) -- (0,0.3);
\end{tikzpicture}}
\title{A microlocal investigation of stochastic partial differential equations for spinors with an application to the Thirring model}
\author{
 Alberto Bonicelli\thanks{AB:
			Dipartimento di Fisica,
		Universit\`a degli Studi di Pavia \& INFN, Sezione di Pavia, 
		Via Bassi 6,
		I-27100 Pavia,
		Italia;
		\mbox{alberto.bonicelli01@universitadipavia.it}}
\and
	Beatrice Costeri\thanks{BC:
		Dipartimento di Fisica,
		Universit\`a degli Studi di Pavia,
		Via Bassi 6,
		I-27100 Pavia,
		Italia;
		\mbox{beatrice.costeri01@universitadipavia.it}}
	\and
	\underline{Claudio Dappiaggi}\thanks{CD: \underline{Corresponding Author},
		Dipartimento di Fisica,
		Universit\`a degli Studi di Pavia \& INFN, Sezione di Pavia, 
		Via Bassi 6,
		I-27100 Pavia,
		Italia;
		\mbox{claudio.dappiaggi@unipv.it}}
\and
Paolo Rinaldi\thanks{PR: Institute for Applied Mathematics, Universit\"at Bonn, 
	Endenicher Allee 60,
	D-53115 Bonn,
	Germany;
	\mbox{rinaldi@iam.uni-bonn.de}
}}
\date{\today}
\begin{document}
\maketitle
\begin{abstract}
	 On a $d$-dimensional Riemannian, spin manifold $(M,g)$ we consider non-linear, stochastic partial differential equations for spinor fields, driven by a Dirac operator and coupled to an additive Gaussian, vector-valued white noise. We extend to the case in hand a procedure, introduced in \cite{Dappiaggi:2020gge} for the scalar counterpart, which allows to compute at a perturbative level the expectation value of the solutions as well as the associated correlation functions accounting intrinsically for the underlying renormalization freedoms. This framework relies strongly on tools proper of microlocal analysis and it is inspired by the algebraic approach to quantum field theory. As a concrete example we apply it to a stochastic version of the Thirring model proving in particular that it lies in the subcritical regime if $d\leq 2$. 
\end{abstract}

\paragraph*{Keywords: stochastic partial differential equations, microlocal analysis, algebraic quantum field theory, Thirring model}
\small{}
\paragraph*{MSC 2020:} 81T05, 60H17.

\tableofcontents

\section{Introduction}\label{Sec: introduction}

Stochastic partial differential equations (SPDEs) play a prominent r\^{o}le in many models, since they allow to encode in a systematic and mathematically efficient way the random behaviour which is inherent to several physical phenomena. They occur especially in the realm of complex systems, such as interface dynamics and turbulence, but also in quantum field theory as the building block of a procedure known as stochastic quantization \cite{PW81,DGR23}. In addition to these considerations, in the past years, we have witnessed several remarkable leaps forward in our understanding of the solution theory of non-linear SPDEs, thanks to new successful approaches, among which noteworthy are the theory of regularity structures \cite{Hairer14, Hairer15} and paracontrolled calculus \cite{Gubinelli}. On the one hand, all these methods allow to establish local existence and uniqueness of the solutions of the underlying equations for a large class of scenarios, many of which are at the heart of important physical models, {\it e.g.}, the KPZ equation \cite{Hairer13}. On the other hand, the non-linear character of the equations under investigation, combined with the presence of a random source, entails that all the existing frameworks encompass a renormalization procedure, necessary to tame otherwise ill-defined products of distributions. 

In addition to these considerations, it is worth emphasizing that, for physical applications, it is important also to have as much information as possible on the explicit form of the solutions of the non-linear SPDE under scrutiny, as well as on the associated correlation functions. This necessity prompted the development of an approach complementary to those highlighted above and first outlined in \cite{Dappiaggi:2020gge}. The inspiration, as well as the starting point, for such work comes from the algebraic approach to quantum field theory, see \cite{BFDY15, Rejzner:2016hdj}. This has allowed to devise a versatile framework to analyze at a perturbative level a large class of non-linear models, accounting intrinsically for the underlying renormalization freedoms thanks to an approach, first advocated by Epstein and Glaser \cite{EpsteinGlaser}. 

In \cite{Dappiaggi:2020gge} the will to translate the algebraic approach to the realm of SPDEs has lead to consider as starting point scalar semi-linear SPDEs of the form $E\widehat{\Phi}=\xi+F[\widehat\Phi]$, where  $\widehat{\Phi}$ is a random distribution on an underlying manifold $M$, $\xi$ denotes the standard  Gaussian, real white noise centered at $0$ whose covariance is $\mathbb{E}[\xi(x)\xi(y)]=\delta(x-y)$. Furthermore $F:\mathbb{R}\to\mathbb{R}$ is a non-linear potential which can be considered for simplicity of polynomial type, while $E$ is a linear operator either of elliptic or of parabolic type -- see \cite{Dappiaggi:2020gge} for more details and comments. The rationale of the algebraic approach is to replace $\widehat{\Phi}$ with a suitable distribution on $M$ with values in the polynomial functionals over $C^\infty(M)$ which acts as the generator of a commutative algebra with respect to the pointwise product. The stochastic nature codified by the centered Gaussian white noise $\xi$ is encoded subsequently by means of a two-step procedure. First of all, one deforms the product of the algebra of functionals in such a way to encode the information of the two-point correlation function. The r\^{o}le of the expectation value is taken by the evaluation of the algebra elements at the zero configuration. Without entering into technical details at this stage, for which we refer to \cite{Dappiaggi:2020gge}, we stress that, not only one can encompass at an algebraic level the data brought by the white noise, but it is also possible to encode in the deformation map all the information on the underlying renormalization freedoms and ambiguities. More precisely we adapt to the case in hand the approach of \cite{Brunetti-Fredenhagen-00} which is in turn based on techniques proper of microlocal analysis. 

Among the limitations of \cite{Dappiaggi:2020gge} we highlight that the algebraic framework has been devised so to be applicable only to real scalar SPDEs, although, recently in \cite{BDR23}, it has been extended to the analysis of the stochastic non linear Schr\"odinger equation. Yet, among the plethora of physically relevant models, a notable class is the one whose underlying kinematic configurations are codified by spinors. More precisely, in the realm of differential geometry a distinguished class of manifolds is that admitting a spin structure, see Section \ref{Sec: geometric setting} for a succinct review. Among the data which it encompasses, notable is $SM$ the spin bundle which is a principal bundle whose underlying structure group is the double cover either of $SO(d)$ or of $SO(d-1,1)$ depending whether the underlying $d$-dimensional manifold is endowed with a Riemannian or with a Lorentzian structure. On top of $SM$, one can construct a natural, associated vector bundle, denoted in this paper by $DM$. Its sections, namely elements of $\Gamma(DM)$, are known as {\em spinor fields}, while those of the dual vector bundle $D^*M$ are called {\em cospinor fields}. In addition, on $\Gamma(DM)$ one can introduce a natural, one parameter, family of linear differential operators which are known in the physics literature as the massive Dirac operators, see Equation \eqref{Eq: Dirac Operator}, out of which one can construct the renown Dirac equation. 

In many quantum field theoretical models, this rules the dynamics of Fermionic particles, whose behaviour is notably different from their Bosonic counterpart. This does not stem only from the underlying, different, spin-statics, but it is also manifest at the level of model building. Dirac fields are never considered in combination with a self-interaction term, since, when the underlying dimension is $d\geq 4$, these contributions to the dynamics fail to be renormalizable. Yet, this distinguished feature is no longer present when one considers lower dimensional manifolds and specific self-interacting models can be built. Among these, the most renown are the Gross-Neveu \cite{Gross:1974jv} and the Thirring models, \cite{Thirring}. The former is especially noteworthy as a toy model for QCD while the latter is one of the building blocks of Coleman's equivalence, which is in turn a special instance of a phenomenon known as Bosonization \cite{Col75}. In this paper we will be particularly interested in the Euclidean version of the Thirring model as a prototypical example of a self-interacting theory for Fermions. We will be considering a scenario where the spinor field is kept classical, but the underlying model is modified by the addition of an additive vector valued Gaussian noise. On the one hand, this takes into account that spinors and cospinors are vector valued fields, but, on the other hand, it is devised in such a way to induce non vanishing correlations only between spinor and cospinor fields. Via the random contribution to the model we introduce a behaviour typical in quantum field theories of the two-point correlation functions of Fermionic fields. 

In particular, we shall prove that the algebraic approach to stochastic partial differential equations can be extended to this class of models. This allows not only to construct the expectation value and the correlation functions of the solutions of the underlying equations at every order of perturbation theory, but it accounts intrinsically for the renormalization freedoms, extending to the case in hand the techniques devised in \cite{Dappiaggi:2020gge}. Furthermore, we are able to give a sharp estimate on the spacetime dimension for which the non linear SPDE under investigation lies in the subcritical regime. It is worth mentioning that our method strongly enjoys from the analysis in \cite{Rej11} concerning the algebra of functionals for Fermionic field theories, although, contrary to this paper, we feel unnecessary to make an explicit use of anticommuting variables of Grassmann type to account for the specific behaviour of the underlying fields. To conclude, we also stress that our investigation on SPDEs associated to Fermions is not an unicum in the literature. Recent works on these models have highlighted their relevance, especially in connection to the stochastic quantization program, see \cite{Albeverio, DeVecchi}, although the model considered in these works is different from the one analyzed in this paper.

\vskip .2cm

The paper is organized as follows: In Section \ref{Sec: geometric setting} we give an overview of the key geometric structures necessary to define spinor and cospinor fields, while in Section \ref{Sec: on the model} we outline the stochastic Thirring model. Section \ref{Sec: algebras} is devoted to introducing the algebra of vector valued distributions and its main structural properties. In Section \ref{Sec: wavefront set} we discuss suitable constraints on the singular structure of the functional derivatives of the admissible algebra elements. Section \ref{Sec: local deformation} contains the first main result of this work, namely Definition \ref{Def: deformed product} which introduces the formal structure of the deformed product of the algebra under investigation, while Theorem \ref{Prop: deformation map} is one of the main results of this work, namely it proves the existence of the sought deformation. The multi-local counterpart of this construction is discussed in Section \ref{Sec: nonlocal deformation algebra}, while in Section \ref{Sec: Thirring} we investigate the application to the stochastic Thirring model, at the level both of expectation values in Section \ref{Sec: expectation values} and of correlation functions, see Section \ref{Sec: Two-point correlation functions}. In Section \ref{Sec: Renormalized Equation} we discuss the ensuing renormalized equation while in Section \ref{Sec: Renormalizability} we prove that the Euclidean stochastic Thirring model lies in the subcritical regime if the manifold has dimension $d\leq 2$. Eventually, in the appendices, we recollect some basic facts about vector valued white noises and about Clifford algebras.

\subsection{Geometric Setting}\label{Sec: geometric setting}

In the following, we introduce the key ingredients, which are at the heart of our analysis. Henceforth, with $(M,g)$ we denote a smooth and oriented Riemannian manifold of dimension $\dim M=d\geq 1$. In addition, since we are interested in working with Dirac fields, we assume that $M$ is {\em spin}, namely it admits a spin structure $SM\equiv (SM,\pi)$ which amounts to the underlying second Stiefel-Whitney class $w_2(M):=\dim H^2(M;\mathbb{Z}_2)$ being trivial -- see \cite[Chap. II \S 2]{Lawson}. Here with $(SM,\pi)$ we denote a principal $Spin(d)$-bundle, where $Spin(d)$ is the {\em spin group} associated to the Euclidean space $\mathbb{R}^d$, together with a morphism of principal bundles $\pi:SM\to\mathfrak{F}(M)$ covering the identity. Here $\mathfrak{F}(M)$ is the bundle of oriented, orthonormal frames associated to $(M,g)$ and, for all $(x,s)\in SM\times Spin(d)$, it holds that
\begin{equation*}
	\pi(r_{SM}(x,s))=r_{\mathfrak{F}(M)}(\pi(x),\Lambda(s)),\quad\forall x\in SM, s\in Spin(d),
\end{equation*}
where $r_{SM}$ and $r_{\mathfrak{F}(M)}$ are the right action of $Spin(d)$ on $SM$ and of $SO(d)$ on $\mathfrak{F}(M)$ respectively, while $\Lambda:Spin(d)\to SO(d)$ is the double covering map. Although $M$ can admit more than one non-equivalent spin structure, depending on the dimension of $H^1(M;\mathbb{Z}_2)$, we assume that this is arbitrary, but fixed. 

On top of $(M,g)$ we are interested in introducing spinor fields which shall represent the kinematic configurations of our models. To this end we assume that the reader is familiar with the basic notions concerning Clifford algebras, see {\it e.g.}, \cite[Chap. I]{Lawson} and we outline only the concepts which are strictly necessary to keep this paper self-contained. 
\begin{definition}\label{Def: spinor space}
	Given a spin manifold $(M,g)$ such that $\dim M=d$, we call \textbf{spinor space} $\Sigma_d=\mathbb{C}^{N_d}$, where $N_d=2^{\lfloor\frac{d}{2}\rfloor}$ where $\lfloor, \rfloor$ is the floor function. Denoting by $Cl(d)$ the Clifford algebra on the Euclidean space $\mathbb{R}^d$ and by $\mathbb{C}l(d):=Cl(d)\otimes\mathbb{C}$ its complexification, a \textbf{spinor representation} of $\mathbb{C}l(d)$ is an isomorphism
	\begin{equation*}
		\sigma_d:\mathbb{C}l(d)\rightarrow End(\Sigma_d),
	\end{equation*}
where $End(\Sigma_d)$ denotes the collection of all endomorphisms on $\Sigma_d$.
\end{definition}

\noindent As a byproduct of Definition \ref{Def: spinor space} we can introduce the main geometric structures of this work.

\begin{definition}[Spinor bundle]\label{Def: Dirac bundle}
	Given a $d$-dimensional Riemannian Spin manifold $(M,g)$ with spin structure $SM$ and given a spinor representation $\sigma_d$ as per Definition \ref{Def: spinor space}, the (Dirac) spinor bundle $DM$ is the associated vector bundle
	\begin{equation*}
		DM:=SM\times_{\sigma_d}\Sigma_d.
	\end{equation*}
	Consequently, a \textbf{spinor field} on $M$ is a smooth section of $DM$, {\it i.e.}, $\psi\in\Gamma(DM)$. At the same time, a \textbf{cospinor field} is a section of $D^\ast M$, the vector bundled dual to $DM$, {\it i.e.}, $\bar{\psi}\in\Gamma(D^\ast M)$.
\end{definition}

Having introduced the kinematic configurations in terms of spinor and cospinor fields, we are interested in associating to them a dynamical model, such as the Thirring one \cite{Thirring}. To this end, one must individuate suitable differential operators which are naturally tied to the underlying spin structure. This is a topic which has been thoroughly studied in the literature and we summarize here the main concepts, while a reader can still consult \cite{Jost} for additional information. Let us fix an oriented, orthochronous, orthonormal co-frame $e=(e^\mu)_{\mu=1,\dots,d}$ on $M$ once and for all. In other words all $e^\mu$ are no-where vanishing one-forms on $M$ such that
$g=\delta_{\mu\nu}e^\mu\otimes e^\nu$, where $g$ is the underlying Riemannian metric. Fixing $e$ is completely equivalent to choosing a frame $\epsilon=(\epsilon_\mu)$, 
namely a section of the frame bundle $\mathfrak{F}(M)$ which can be obtained from $e$ setting $\epsilon_\mu=\delta_{\mu\nu} e^\nu$. 
Using a fixed co-frame $e$ of $M$, one can specify a one-form $\gamma$ over $M$  taking values in the bundle of endomorphisms of the Dirac bundle $DM$: 
\begin{align}
	\gamma:TM\to\textit{End}(DM),\quad v\mapsto e^\mu(v)\gamma_\mu,
\end{align}
where $\{e^\mu(v)\in\mathbb{R}\}_{\mu=1,\dots,d}$ are the components of $v\in T_xM$ with respect to the frame $\epsilon$ obtained raising the indices of the fixed co-frame $e$, 
namely $v=e^\mu(v)\epsilon_\mu$. Here, the collection $\{\gamma_\mu\}_{\mu=1,\dots, d}$ are the gamma-matrices on $\mathbb{R}^d$ as discussed in Appendix \ref{App: Clifford Algebras}.

The second ingredient that we need is a suitable covariant derivative on the spinor and cospinor bundles. Considering $DM$ for definiteness, this can be defined as follows
\begin{align}\label{Eq: spin connection}
	\nabla:\Gamma(TM)\otimes\Gamma(DM)\to\Gamma(DM),\quad (X,\sigma)\mapsto \nabla_X\sigma
	=\partial_X\sigma+\frac{1}{4}X^\mu\Gamma_{\mu\nu}^\rho\gamma_\rho\gamma^\nu\sigma,
\end{align}
where, considering a local trivialization of $DM$, $\sigma$ is here regarded as a smooth $\Sigma_d$-valued function on a subset of $M$, 
$X^\mu=e^\mu(X)$ are the components of $X$ in the fixed frame and $\Gamma_{\mu\nu}^\rho=e^\rho(\nabla_{\epsilon_\mu}\epsilon_\nu)$ 
are the Christoffel symbols of the Levi-Civita connection with respect to the given frame. 
The covariant derivative is naturally extended to cospinors by imposing the identity 
\begin{equation*}
	\partial_X(\omega(\sigma))=(\nabla_X\omega)(\sigma)+\omega(\nabla_X\sigma), 
\end{equation*}
where $\omega\in\Gamma(D^*M)$.

\noindent Starting from these premises we can introduce eventually the (massive) Dirac operator $\slashed{D}:\Gamma(DM)\to\Gamma(DM)$ and its dual counterpart $\slashed{D}^*:\Gamma(D^\ast M)\to\Gamma(D^\ast M)$
  \begin{equation}\label{Eq: Dirac Operator}
  	\slashed{D}:=i\slashed{\nabla}-m,\hspace{2cm} \slashed{D}^\ast:=i\slashed{\nabla}+m,
  \end{equation}
where $\nabla$ represent the covariant derivative as per Equation \eqref{Eq: spin connection} while $\slashed{\nabla}:=\gamma^\mu\nabla_\mu$. Here $m$ represents a mass parameter which is assumed to lie in the interval $[0,\infty)$. 

\begin{remark}\label{Rem: D and D*}
	Observe that one can introduce the counterparts of $\slashed{D}$ and $\slashed{D}^\ast$ acting respectively on $\Gamma(D^\ast M)$ and on $\Gamma(DM)$. Since their formal expression is the same as in Equation \eqref{Eq: Dirac Operator}, with a slight abuse of notation, we shall denote them by the same symbol. In each instance, it will be clear from the context which is the specific operator to which we are referring to.
\end{remark}

\noindent A direct application of the Lichnerowicz-Weizenb\"ock formula, see \cite[Thm 8.8]{Lawson}, entails that 

\begin{equation}\label{Eq: Dirac to Laplacian}
	\slashed{D}^\ast \slashed{D}=\slashed{D}\,\slashed{D}^\ast=\left(-\Delta+\frac{R}{4}+m^2\right)\mathbb{I}_{\Sigma_d}, 
\end{equation}
where $\mathbb{I}_{\Sigma_d}$ is the identity matrix acting on $\Sigma_d$, $\Delta=g^{\mu\nu}\nabla_\mu\nabla_\nu$ is the Laplacian built out of the metric $g$ while $R$ is the associated scalar curvature. To conclude the section and for future convenience we highlight a few structural properties concerning the Dirac operator and its dual as per Equation \eqref{Eq: Dirac Operator}. In particular we observe that Equation \eqref{Eq: Dirac to Laplacian} entails that the operators $\slashed{D}^*\slashed{D}$ and $\slashed{D}\slashed{D}^*$ are manifestly elliptic. In the following we shall only consider Riemmanian manifolds $(M,g)$ such that the operator $-\Delta+\frac{R}{4}+m^2$ admits a fundamental solution, namely $G_{\Delta}:\Gamma_0(DM)\to\Gamma (DM)$ such that 
$$(\slashed{D}^*\slashed{D})G_\Delta=(\slashed{D}\slashed{D}^*)G_{\Delta}=\mathbb{I},$$
where $\mathbb{I}$ is the identity operator acting on scalar functions. A direct inspection once more of Equation \eqref{Eq: Dirac to Laplacian} entails that
\begin{equation}\label{Eq: fundamental solution Dirac}
	G_\psi:=\slashed{D}^* G_\Delta,
\end{equation}
is a fundamental solution for the Dirac operator acting on spinors. As a matter of fact the subscript $\psi$ serves as a visual remainder of the space on which $G_\psi$ is acting. Therefore we shall also make an extensive use of $G_{\bar{\psi}}:=\slashed{D} G_\Delta:\Gamma_0(D^\ast M)\to\Gamma (D^\ast M)$ which is here read as an operator acting on cospinors. 

\begin{remark}\label{Rem: Parametrix}
From a structural viewpoint, we could drop the hypothesis that we consider Riemannian manifolds on which the Dirac operator admits a fundamental solution since, in the worst case scenario, the operator $-\Delta+\frac{R}{4}+m^2$ admits a parametrix \cite[Thm 4.4]{Wells}, namely $P:\Gamma_0(DM)\to\Gamma (DM)$ such that 
$$(\slashed{D}^*\slashed{D})P=(\slashed{D}\slashed{D}^*)P=\mathbb{I}+K,$$
where $\mathbb{I}$ is the identity operator while $K$ is a smoothing operator. In turn $S_\psi:=\slashed{D}^*P$ and $S^*_{\bar{\psi}}:=\slashed{D}P$ are parametrices of $\slashed{D}$ and $\slashed{D}^\ast$ acting respectively on spinors and on cospinors.
\end{remark}

\begin{remark}\label{Rem: Trivial Bundle}
	In order to make the main results and concepts at the heart of our analysis more accessible to the readers, henceforth we shall only consider {\em trivial} Dirac bundles, namely $DM$ is isomorphic as a vector bundle to $M\times\Sigma_d$, see Definition \ref{Def: spinor space}. Consequently $D^\ast M$ is isomorphic to $M\times\Sigma^*_d$, while $\Gamma_{(0)}(DM)\simeq C^\infty_{(0)}(M;\Sigma_d)\simeq C^\infty_{(0)}(M)\otimes\Sigma_d$. We highlight that this assumption does not entail a reduction in generality of our results. Using a suitable and standard partition of unity argument, subordinated to a local trivialization of both $DM$ and $D^\ast M$, we can extend all the following discussions also to this scenario. Yet, the price to pay is a heavy notation and we feel that it might be discomforting for a reader with no benefit in clarity.
\end{remark}

\subsection{The Thirring Model}\label{Sec: on the model}
As mentioned in the introduction, in the following we shall devise a framework to analyze stochastic partial differential equations whose kinematic configurations are spinor fields. As a specific example, we shall consider a stochastic variation of the Thirring model, which we will introduce succinctly in the following. More precisely, we consider the Euclidean space $\mathbb{R}^d$ and we denote by $\psi,\bar{\psi}$ a section respectively of the Dirac bundle $D\mathbb{R}^d$ and of its dual counterpart $D^*\mathbb{R}^d$, where $D\mathbb{R}^d\equiv\mathbb{R}^d\times \Sigma_d$ and, on account of Definition \ref{Def: spinor space}, $\Sigma_d=\mathbb{C}^{\lfloor\frac{d}{2}\rfloor}$. Consequently $\psi,\bar{\psi}:\mathbb{R}^d\to\Sigma_d$ and, working at the level of components, the dynamics is ruled by
\begin{equation}
	\label{Dirac}
	\begin{cases}
		\slashed{D}^{\rho}_{\sigma} \psi^{\sigma} + \lambda (\bar{\psi} \gamma^{\mu} \psi)^n(\gamma_{\mu})_{\sigma}^{\rho}
		\psi^{\sigma} = \xi^{\rho}, \\
		\slashed{D}^{*\, \sigma}_{\rho} \bar{\psi}_{\sigma} + \lambda (\bar{\psi} \gamma^{\mu} \psi)^n \bar{\psi}_{\sigma} (\gamma_{\mu})_{\rho}^{\sigma} = \bar{\xi}_{\rho}, \\
	\end{cases}
\end{equation}
where $\lambda \in \mathbb{R}$ is the coupling constant. In the following it might be necessary to multiply $\lambda$ by a function $g\in C^{\infty}_0(\mathbb{R}^d)$ which denotes an arbitrary spacetime cutoff, necessary to avoid infrared divergences. Furthermore, we denote by
\begin{flalign*}
	& \slashed{D}= (i \slashed{\nabla}-m) := (i \gamma^{\mu} \nabla_{\mu} - m) \\
	& \slashed{D}^* = (i \slashed{\nabla}+m) := (i \gamma^{\mu} \nabla_{\mu} + m)
\end{flalign*}
the Dirac operator and its formal adjoint as per Equation \eqref{Eq: Dirac Operator}. We recall that $m\geq 0$ is here an arbitrary, but fixed mass parameter. The source term is represented by a vector-valued spinor white noise, see Definition \eqref{Def: fermionic white noise}, denoted by 
\begin{equation*}
	\underline{\xi} := \begin{pmatrix}
		\xi \\
		\bar{\xi}
	\end{pmatrix}
	\in \mathcal{D}'(\mathbb{R}^d; L^2((\mathfrak{W}, \mathcal{F}, \mathbb{P}); \Sigma_d \times \Sigma_d^*)),
\end{equation*}
which is completely characterized by its covariance as per Equation \eqref{Eq: noncommutative white noise}.

In absence of the white noise source, Equation \eqref{Dirac} is known as \emph{Thirring model} \cite{Thirring}, which describes the dynamics of a massive, interacting Fermionic model in $d = 2$ Euclidean dimensions. A rigorous study of this model can be found in \cite{FMRS86}. As mentioned in the introduction its relevance in physics resides in its notable applications within the realm of statistical mechanics, especially when discussing the phenomenon of Bosonization, which is widely known in the physics literature \cite{BFM09, Col75} and it plays a prominent r\^ole in theoretical condensed matter physics as well as in quantum field theory \cite{FrSo93}. 

\section{Analytical and Algebraic Tools}
Since the goal of this work is to extend the algebraic and microlocal approach to SPDEs advocated in \cite{CDDR20} and \cite{BDR23} to models having spinors as underlying configurations, we devote this section to introducing the key analytic tools, fixing in the meanwhile all notations and conventions. We recall that, as highlighted in Remark \ref{Rem: Trivial Bundle}, we shall only consider trivial Dirac bundles.

\subsection{Algebras of Functional Valued Distributions}\label{Sec: algebras}
In the following we build upon the geometric structures outlined in Section \ref{Sec: geometric setting} to extend to the case in hand the definitions of functional-valued distributions introduced in \cite{Dappiaggi:2020gge} for scalar theories and in \cite{BDR23} for complex-valued fields. We assume that the reader is familiar with the basic notions of microlocal analysis referring to \cite{Hormander-I-03} for additional details. In addition we shall enjoy from \cite[App. A \& App. B]{Dappiaggi:2020gge}, which is a succinct account of the key results, to which we shall refer often.

Having in mind an application to the Thirring model, see Section \ref{Sec: Thirring}, in the following we consider $\Sigma_d$ as per Definition \ref{Def: spinor space}, although, as far as the content of this section is concerned, it could be replaced by an arbitrary, finite dimensional, complex vector space. Furthermore we define for later convenience
\begin{equation}\label{Eq: W space}
W:= \bigoplus_{p, q=0}^\infty W^{(p,q)}:=\bigoplus_{p,q=0}^\infty ((\Sigma_d)^{\otimes p} \oplus (\Sigma_d^\ast)^{\otimes q}),
\end{equation}
where $\Sigma_d^\ast$ is the vector space, dual to $\Sigma_d$. 

In the following we introduce the basic kinematic structures at the heart of our investigation and, as already mentioned in the introduction, we stress that we rely on an approach which is based on considering a class of suitable functionals. This is an adaptation to the analysis of SPDEs of a framework which has been first introduced in the algebraic approach to quantum field theory since it allows to encode efficiently the existence of non-trivial, underlying correlation functions in terms of a deformation of the pointwise product between functionals -- see {\it e.g.}, \cite{BFDY15,Fredenhagen:2014lda,Rejzner:2016hdj}. Observe that, throughout the paper, we shall employ the notation $\mathcal{D}(M):=C^\infty_0(M)$ to denote the space of scalar test functions. 

\begin{definition}\label{Def: functional-valued vector distribution} 
	We call \textbf{functional-valued vector distribution} $u\in\mathcal{D}'(M; \mathsf{Fun}_W)$ a map
	\begin{align}
		u:\mathcal{D}(M)\times \Gamma(DM)\times&\Gamma(D^\ast M)\rightarrow W\,,\\
		&(f;\psi,\bar{\psi})\mapsto u(f;\psi,\bar{\psi})\nonumber\,,
	\end{align}
which is linear in $\mathcal{D}(M)$ and continuous in the locally convex topology of $\mathcal{D}(M)\times \Gamma(DM)\times\Gamma(D^\ast M)$.
\end{definition}

\begin{remark}
	With reference to Definition \ref{Def: functional-valued vector distribution} two comments are in due course. On the one hand we stress that the notion of vector-valued distribution that we adopt is mildly different from that of distributional sections of a vector bundle, see \textit{e.g.} \cite{BGP08}. On the other hand, we highlight that an alternative option consists of introducing Grassman variables to discuss spinor fields. This has been considered elsewhere, see \cite{Rej11}, but, in our work, it does not seem to offer any specific advantage and, therefore, we refrain from introducing an additional structure.
\end{remark}

On top of $\mathcal{D}'(M; \mathsf{Fun}_W)$ we introduce a notion of derivation with respect to smooth configurations. 
\begin{definition}\label{Def: functional derivatives}
Let $(k,k')\in\mathbb{N}$, we define the $(k,k')$-th functional derivative of $u\in\mathcal{D}'(M; \mathsf{Fun}_W)$ as the map
$$u^{(k,k^\prime)}:	\mathcal{D}(M)\otimes\underbrace{\Gamma(DM)\otimes\dots\otimes\Gamma(DM)}_k\otimes\underbrace{\Gamma(D^*M)\otimes\dots\Gamma(D^*M)}_{k^\prime}\times \Gamma(DM)\times\Gamma(D^\ast M)\rightarrow W,$$	
such that
	\begin{align*}
		&u^{(k,k')}(f\otimes\eta_1\otimes\ldots\otimes\eta_k\otimes\bar{\eta}_1\otimes\ldots\otimes\bar{\eta}_{k'};\eta,\bar{\eta})\\
		&:=\frac{\partial^{k+k'}}{\partial s_1\ldots\partial s_k\partial \bar{s}_1\ldots\partial\bar{s}_{k'}} u(f;\eta+s_1\eta_1+\ldots+s_k\eta_k,\bar{\eta}+\bar{s}_1\bar{\eta_1}+\ldots+\bar{s}_{k'}\bar{\eta}_{k'})\Big\vert_{s_1=\ldots=\bar{s}_{k'}=0},
	\end{align*}
for all $\eta,\eta_1,\ldots,\eta_k\in\Gamma(DM)$, $\bar{\eta},\bar{\eta}_1,\ldots,\bar{\eta}_{k'}\in\Gamma(D^\ast M)$, $f\in\mathcal{D}(M)$. For conciseness we shall denote it by $u^{(k,k')}\in\mathcal{D}'(M^{k+k'+1};\mathsf{Fun}_W)$. In addition, we define the \textbf{directional derivatives} along the directions $\varphi\in\Gamma(DM),\bar{\varphi}\in\Gamma(D^\ast M)$ as
\begin{align*}
	&\delta_\varphi:\mathcal{D}'(M;\mathsf{Fun}_W)\rightarrow\mathcal{D}'(M \times M;\mathsf{Fun}_W),\qquad[\delta_\varphi u](f;\eta,\bar{\eta}):=u^{(1,0)}(f\otimes\varphi;\eta,\bar{\eta}),\\
	&\delta_{\bar{\varphi}}:\mathcal{D}'(M;\mathsf{Fun}_W)\rightarrow\mathcal{D}'(M \times M;\mathsf{Fun}_W),\qquad[\delta_{\bar{\varphi}} u](f;\eta,\bar{\eta}):=u^{(0,1)}(f\otimes\bar{\varphi};\eta,\bar{\eta}).
\end{align*}
 A functional-valued distribution $u\in\mathcal{D}'(M; \mathsf{Fun}_W)$ is said to be \textbf{polynomial}, denoted by $u\in\mathcal{D}^\prime(M;\mathsf{Pol}_W)$, if there exists  $(n,n^\prime)\in\mathbb{N}_0\times\mathbb{N}_0$ such that $u^{(k,k^\prime)}=0$ in $\{k\geq n\}\cup\{k^\prime\geq n^\prime\}$.
\end{definition}


\begin{remark}\label{Rem: Components of a functional}
	Let $u\in\mathcal{D}'(M; \mathsf{Fun}_W)$ be as per Definition \ref{Def: functional-valued vector distribution} and let $\{e_\lambda\}_{\lambda=1,\dots, N_d}$ be an arbitrary basis of $\Sigma_d$ whose dual counterpart is denoted as $\{e^\lambda\}_{\lambda=1,\dots, N_d}$. For any $(f;\psi,\bar{\psi})\in\mathcal{D}(M)\times \Gamma(DM)\times\Gamma(D^\ast M)$ we can define the {\bf component} of $u$ with values in $W^{(p,q)}=\Sigma_d^{\otimes p}\otimes(\Sigma^\ast_d)^{\otimes q}$ as 
	\begin{equation}\label{Eq: component of a functional}
		u^{\lambda_1\dots\lambda_p}_{\rho_1\dots\rho_q}(f;\psi,\bar{\psi}):=\langle u(f;\psi,\bar{\psi}), e^{\lambda_1}\otimes\dots\otimes e^{\lambda_p}\otimes e_{\rho_1}\otimes\dots\otimes e_{\rho_q}\rangle,
	\end{equation}
where $\langle,\rangle$ denotes the extension to the tensor product of the duality pairing between $\Sigma_d$ and $\Sigma^*_d$. In the following we shall identify implicitly a functional with one of its components, {\it e.g.}, $u\equiv u^{\lambda_1\dots\lambda_p}_{\rho_1\dots\rho_q}$ if and only if for any $(f;\psi,\bar{\psi})\in\mathcal{D}(M)\times \Gamma(DM)\times\Gamma(D^\ast M)$ 
$$u(f;\psi,\bar{\psi})=u^{\lambda_1\dots\lambda_p}_{\rho_1\dots\rho_q}(f;\psi,\bar{\psi})e_{\lambda_1}\otimes\dots\otimes e_{\lambda_p}\otimes e^{\rho_1}\otimes\dots\otimes e^{\rho_q}.$$
More generally, making use of the natural embedding $W^{(p,q)}\hookrightarrow W$, $p,q\geq 0$, which will be therefore left implicit, for any $u\in\mathcal{D}^\prime(M;\mathsf{Pol}_W)$, we can decompose it as 
\begin{equation}\label{Eq: decomposing functional}
	u=\sum\limits_{p,q\geq 0}u^{\lambda_1\dots\lambda_p}_{\rho_1\dots\rho_q} e_{\lambda_1}\otimes\dots\otimes e_{\lambda_p}\otimes e^{\rho_1}\otimes\dots\otimes e^{\rho_q},
\end{equation}
where the right hand side is defined as per Equation \eqref{Eq: component of a functional}.
\end{remark}

\begin{example}\label{Ex: basic functionals}
	In the following we give some notable examples of polynomial functional-valued vector distributions which will constitute the building blocks of our construction. Bearing in mind Remark \ref{Rem: Components of a functional} and the notation introduced therein, we set
	\begin{align*}
		&\mathbf{1}(f;\psi,\bar{\psi})=\int_Md\mu_g(x)\,f(x)\,,\\
		&\hspace{2cm}\Phi^\lambda(f;\psi,\bar{\psi}):=\int_Md\mu_g(x)\,f(x)\psi^\lambda(x)\,,\\
		&\hspace{4cm}\bar{\Phi}_\lambda(f;\psi,\bar{\psi}):=\int_Md\mu_g(x)\,f(x)\bar{\psi}_\lambda(x)\,,\quad f\in\mathcal{D}(M),\psi\in\Gamma(DM),\bar{\psi}\in\Gamma(D^\ast M),
	\end{align*}
where $d\mu_g$ is the measure induced by the metric $g$.

Furthermore, following the same line of reasoning and using $\Phi^\lambda$ and $\bar{\Phi}_\lambda$ as building blocks, one can introduce functionals built out of single components with values in $(\Sigma_d^{\otimes k})\otimes(\Sigma^\ast_d)^{\otimes k^\prime}$ with either or both $k\geq 1$ and $k^\prime\geq 1$, such as
\begin{equation*}
	\bar{\Phi}_\alpha\Phi^{\beta\gamma}(f;\psi,\bar{\psi}):=\int_M d\mu_g(x)\,f(x)\bar{\psi}_\alpha(x)\psi^\beta(x)\psi^\gamma(x).
\end{equation*}
It is important to observe that all the above examples lie in $\mathcal{D}'(M;\mathsf{Pol}_W)$. As a matter of fact, a direct calculation yields for all $(f;\psi,\bar{\psi})\in\mathcal{D}(M)\times \Gamma(DM)\times\Gamma(D^\ast M)$
\begin{align*}
	&{\Phi^\lambda}^{(1,0)}(f\otimes\psi_1;\psi,\bar{\psi})=\int_Md\mu_g(x)f(x)\psi_1^\lambda(x),\quad \psi_1\in\Gamma(DM),\\
	&{\bar{\Phi}_{\lambda'}}^{(0,1)}(f\otimes\bar{\psi}_1;\psi,\bar{\psi})=\int_Md\mu_g(x)f(x)\bar{\psi}_{1\lambda'}(x),\quad \bar{\psi}_1\in\Gamma(D^\ast M),
\end{align*}
while all the other functional derivatives vanish. A similar result holds true for $\bar{\Phi}_\alpha\Phi^{\beta\gamma}$.
\end{example}

\begin{remark}	
Observe that the functionals $\Phi^\lambda$ and $\bar{\Phi}_{\lambda}$ in Example \ref{Ex: basic functionals} can be interpreted as encoding the information of the underlying field configuration. Yet, mirroring \cite{BDR23}, at this level we do not impose any constitutive relation connecting $\bar{\Phi}$ to $\Phi$, treating them as independent objects. Only at a later stage, when constructing the perturbative solutions to the stochastic Thirring model, we shall relate them by means of a suitable conjugation matrix $\beta\in Spin(d)$, see \cite{DHP09}. 
\end{remark}

For later convenience, we highlight that we can introduce a tensor product between functional-valued distributions, namely, for all $u,v\in\mathcal{D}'(M;\mathsf{Pol}_W)$, we denote by $u\otimes v$ the complex valued functional on $\mathcal{D}(M)\otimes\mathcal{D}(M)\times \Gamma(DM)\times\Gamma(D^\ast M)$ such that 
	\begin{equation}
		(u\otimes v)\,(f_1\otimes f_2;\psi,\bar{\psi})=u(f_1;\psi,\bar{\psi})\otimes_W v( f_2;\psi,\bar{\psi}),\quad\forall f_1,f_2\in\mathcal{D}(M),\psi\in\Gamma(DM),\bar{\psi}\in\Gamma(D^\ast M),
	\end{equation}
where the symbol $\otimes_W$ highlights that the tensor product is taken between elements lying in the finite dimensional vector space $W$. Furthermore we shall need a further generalization which is often referred to as \textbf{pointwise tensor product}: for al  $u,v\in\mathcal{D}'(M;\mathsf{Pol}_W)$ 
\begin{equation}\label{Eq: pointwise product}
	uv(f;\psi,\bar{\psi})=\Delta^*[u(\psi,\bar{\psi})\otimes v(\psi,\bar{\psi})](f),\qquad\forall f\in\mathcal{D}(M),\psi\in\Gamma(DM),\bar{\psi}\in\Gamma(D^\ast M),		
\end{equation}
where $\Delta^*$ denotes the pull-back along the diagonal map $\Delta:M\to M\times M$, $x\mapsto\Delta(x):=(x,x)$. Hence $\Delta^*[u(\psi,\bar{\psi})\otimes v(\psi,\bar{\psi})]:\mathcal{D}(M)\to W\otimes_W W$. We observe that Equation \eqref{Eq: pointwise product} is a priori ill-defined unless the distributions $u(\psi,\bar{\psi}), v(\psi,\bar{\psi})$ abide by suitable constraints. In the following we shall use techniques proper of microlocal analysis to restrict the class of functionals in such a way that this obstruction does not occur.

An additional, relevant step in the approach, that we advocate, consists of encoding at the algebraic level the information that the linear component of the underlying dynamics is ruled by the Dirac operator $\slashed{D}$ and by its adjoint as per Equation \eqref{Eq: Dirac Operator} and per Remark \ref{Rem: D and D*}. Denoting by $G_\psi\in\mathcal{D}^\prime\left(M\times M;\Sigma_d\otimes\Sigma^*_d\right)$ the fundamental solution of $\slashed{D}$ as per Equation \eqref{Eq: fundamental solution Dirac} and, working at the level of components, see Remark \ref{Rem: Components of a functional}, given $u\equiv u^{\lambda_1\dots\lambda_p}_{\rho_1\dots\rho_q}:\mathcal{D}(M)\times \Gamma(DM)\times\Gamma(D^\ast M)\to W^{p,q}$, we define 
\begin{gather}
(G_\psi\circledast u):\mathcal{D}(M)\times \Gamma(DM)\times\Gamma(D^\ast M)\to W^{p+1,q+1}\notag\\
(G_\psi\circledast u)(f;\psi,\bar{\psi})=u^{\lambda_1\dots\lambda_p}_{\rho_1\dots\rho_q}(G^{\lambda_{p+1}}_{\rho_{q+1}}\circledast f;\psi,\bar{\psi}) e_{\lambda_{p+1}}\otimes e_{\lambda_1}\otimes\dots\otimes e_{\lambda_p}\otimes e^{\rho_{q+1}}\otimes e^{\rho_1}\otimes\dots\otimes e^{\rho_q},\label{Eq: Convolution with a component}
\end{gather}
where we have written $G_\psi\equiv G^{\lambda^{p+1}}_{\rho_{q+1}} e^{\rho_{q+1}}\otimes e_{\lambda_{p+1}}$ and where $G^{\lambda^{p+1}}_{\rho_{q+1}}\in\mathcal{D}^\prime(M\times M)$. 
Here $G^{\lambda_{p+1}}_{\rho_{q+1}}\circledast f$ is such that, for all $h\in\mathcal{D}(M)$, $(G^{\lambda_{p+1}}_{\rho_{q+1}}\circledast f)(h):=G^{\lambda_{p+1}}_{\rho_{q+1}}(f\otimes h)$.

Combining Remark \ref{Rem: Components of a functional} and Equation \eqref{Eq: Convolution with a component}, we can extend per linearity the convolution of $G_\psi$ with a generic $u\in\mathcal{D}^\prime (M;\mathsf{Pol}_W)$ and we shall denote it by
\begin{equation}\label{Eq: functional convolution}
	(G_\psi\circledast u)(f;\eta,\bar{\eta}):=u(G_\psi\circledast f;\eta,\bar{\eta}).
\end{equation}
An analogous definition can be given for $G_{\bar{\psi}}$, the fundamental solution of $\slashed{D}^\ast$ acting on cospinors.

\begin{remark}\label{Rem: cut-off}
	If the underlying manifold $M$ is not compact, the convolution in \eqref{Eq: functional convolution} is not well-defined unless we introduce a cut-off function $\chi\in\ C^\infty_0(M)$. Focusing on Equation \eqref{Eq: Convolution with a component} and suppressing the subscript $\rho_{q+1}$ and the superscript $\lambda_{p+1}$, it holds that
	\begin{equation}
		G_{\psi}^\chi:=G_\psi\cdot(\chi\otimes\mathbf{1}),\qquad G_{\bar{\psi}}^\chi:=G_{\bar{\psi}}\cdot(\chi\otimes\mathbf{1}).
	\end{equation}
	Since $\chi$ does not affect the singular behaviour and the microlocal properties of the distribution involved, see \cite[Chap. 8]{Hormander-I-03}, for the sake of conciseness of the notation, we will continue to use the symbols $G_\psi$ and $G_{\bar{\psi}}$, leaving the superscript $\chi$ understood.
\end{remark}

We are in a position to define a distinguished algebra that encompasses all operations introduced. In what follows, given $F_1,\dots, F_n\in\mathcal{D}'(M;\mathsf{Pol}_W)$, $n\geq 1$, we denote by $\mathcal{E}[F_1,\ldots, F_n]$ the polynomial ring on $\mathcal{E}(M;W)$ generated by $F_1,\dots,F_n$ via the pointwise product of functional-valued vector distributions defined in Equation \eqref{Eq: pointwise product}. Adapting to the case in hand the construction of \cite{BDR23} we start from
\begin{equation}\label{Eq: A_0}
	\mathcal{A}_0^W:=\mathcal{E}[\mathbf{1},\Phi,\bar{\Phi}],
\end{equation}
and we encompass the information carried by the fundamental solutions of $\slashed{D}$ and $\slashed{D}^*$ via an inductive procedure, namely, for any $j\geq 1$ we set
\begin{align}\label{Eq: A_j}
	\mathcal{A}_j^W:=\mathcal{E}[\mathcal{A}_{j-1}^W\cup G_\psi\circledast\mathcal{A}_{j-1}^W\cup G_{\bar{\psi}}\circledast \mathcal{A}_{j-1}^W],
\end{align}
where
\begin{align*}
	& G_\psi\circledast\mathcal{A}_{j}^W:=\{u\in\mathcal{D}'(M; \mathsf{Pol}_W)\;\vert\; u=G_\psi\circledast v,\;\text{with }v\in\mathcal{A}_{j}^W\},\\
	& G_{\bar{\psi}}\circledast\mathcal{A}_{j}^W:=\{u\in\mathcal{D}'(M; \mathsf{Pol}_W)\;\vert\; u=G_{\bar{\psi}}\circledast v,\;\text{with }v\in\mathcal{A}_{j}^W\}.
\end{align*}
A direct inspection shows that the natural inclusion $\mathcal{A}_{j-1}^W\subset \mathcal{A}_{j}^W$ for all $ j\in\mathbb{N}$ entails that the following definition is meaningful.
\begin{definition}\label{Def: A}
	Let $\mathcal{A}_{j}^W$ be defined in Equations \eqref{Eq: A_0} and \eqref{Eq: A_j}. We denote with $\mathcal{A}^W$ the unital, commutative $\mathbb{C}$-algebra obtained as the direct limit
	\begin{equation}
		\mathcal{A}^W:=\lim_{\longrightarrow}\mathcal{A}_j^W.
	\end{equation}
\end{definition}

\begin{remark}\label{Rem: M}
	Due to its relevance in our construction, we point out that $\mathcal{A}^W$ is a graded algebra over $\mathcal{E}(M;W)$, where the grading accounts for the number of occurrences of $\Phi$ and $\bar{\Phi}$, see Example \ref{Ex: basic functionals}. In other words
	\begin{equation*}
		 \mathcal{A}^{W} = \bigoplus_{r,l,r',l' \in \mathbb{N}_0} \mathcal{M}_{r,r',l,l'}, \qquad\qquad \mathcal{A}^{W}_j = \bigoplus_{r,l,r',l' \in \mathbb{N}_0} \mathcal{M}^j_{r,r',l,l'},  \, \, \forall j \ge 0,
	\end{equation*}
where we denoted by $\mathcal{M}_{r,r',l,l'}$ the $\mathcal{E}(M;W)-$module generated by elements of the algebra with an overall polynomial degree of $r$ in $\psi^{\rho}$ and $r'$ in $\bar{\psi}_{\rho'}$ in which the fundamental solution $G_{\psi}$ acts $l$ times whilst $G_{\bar{\psi}}$ acts $l'$ times. Moreover, we have that $\mathcal{M}^j_{r,r',l,l'} := \mathcal{M}_{r,r',l,l'} \cap \mathcal{A}^{W}_j$. 
 For later convenience, let us introduce the following notation, where we ignore the occurrence of $G_{\psi}$ and $G_{\bar{\psi}}$ and focus exclusively on the overall polynomial degree in $\psi^{\rho}$ and $\bar{\psi}_{\rho'}$
\begin{equation*}
    \mathcal{M}_{r, r'} := \bigoplus_{\substack{l, l' \in \mathbb{N}_0 \\ p \le r \\ q \le r'}} \mathcal{M}_{p,q,l,l'}, \, \, \, \mathcal{M}_{r, r'}^j := \bigoplus_{\substack{l, l' \in \mathbb{N}_0 \\ p \le r \\ q \le r'}} \mathcal{M}_{p,q,l,l'}^j, \, \, \forall j \ge 0,
\end{equation*}
The direct limit is well-defined according to the natural inclusions 
\begin{equation*}
   \mathcal{M}_{r,r'} \subset \mathcal{M}_{r+1, r'}, \, \, \,  \mathcal{M}_{r,r'} \subset \mathcal{M}_{r, r'+1}, 
\end{equation*}
and one has that 
\begin{equation*}
    \mathcal{A}^W = \lim_{r, r' \rightarrow \infty} \mathcal{M}_{r, r'}.
\end{equation*}
\end{remark}

\subsubsection{Wavefront set}\label{Sec: wavefront set}
Before delving into the construction of an algebraic approach to singular SPDEs for spinor fields, it is mandatory to set suitable constraints to the singular structure of the derivatives of all functionals which enter the construction, using tools proper of microlocal analysis -- we refer to \cite[Ch. 8]{Hormander-I-03} for a detailed exposition of this topic. To set the nomenclature, observe that these derivatives give rise to multilocal distributions as per Definition \ref{Def: functional derivatives}.

\begin{remark}\label{Rem: vector WF}
	Since in our setting functional derivatives are vector-valued distributions, we recall that, given $u\in\mathcal{D}'(M,V)$, where $V$ is a finite dimensional vector space,
	\begin{equation}\label{Eq: vectorial wf}
		\mathrm{WF}(u):=\bigcup_{j=0}^m \mathrm{WF}(u_j),
	\end{equation}
	where $u_j$ denotes the $j$-th component of $u$ with respect to a chosen basis of $V$. Given the geometric nature of the wavefront set, this extension for distributions taking values in higher dimensional vector spaces turns out to be independent from the choice of a basis \cite{Kra00, SaVe01}. For completeness we remark that one might also consider a \textit{polarised wavefront set}, first introduced by \cite{Den82}, accounting for the possibility of different singular behaviors along the components of $u$. Yet, for our purposes this generalization brings no advantages and only additional technical hurdles. Therefore we shall not consider it further.
\end{remark} 
Since the algebra $\mathcal{A}^W$ encompasses the action of the fundamental solutions $G_\psi$ and $G_{\bar{\psi}}$ via Equation \eqref{Eq: convolution G}, it is natural to ask whether the pointwise product between functional derivatives of elements lying in $\mathcal{A}^W$ is well defined in a distributional sense. To this end a first step consists of characterizing the wavefront set of both $G_\psi$ and $G_{\bar{\psi}}$. Observing that both the Dirac operator $\slashed{D}$ and $\slashed{D}^\ast$ are elliptic, a standard computation using microlocal analysis techniques yields
\begin{equation}\label{Eq: WF Dirac}
	\mathrm{WF}(G_\psi)=\mathrm{WF}(G_{\bar{\psi}})=\mathrm{WF}(\delta_{\mathrm{Diag}_2})=\{(x,y;k_x,k_y)\in T^\ast (M\times M)\setminus\{0\}\;\vert\;x=y, k_x=-k_y\}.
\end{equation}
As it will become clear in the following, Equation \eqref{Eq: WF Dirac} suggests to consider only functional-valued vector distribution with a constraint on the wavefront set of their functional derivatives. To settle the notation, given a set of indices $I=\{1,\ldots,m\}$ consider its partition into disjoint non-empty subsets $I=I_1 \uplus\ldots\uplus I_p$. 

\noindent In what follows $\vert I_i\vert$ denotes the cardinality of the subset $I_i$ while $\mathrm{Diag}_{|I_i|}=\{(x,\ldots,x)\in M^{|I_i|}\;|\;x\in M\}$ is the diagonal of the Cartesian product $M^{|I_i|}$. 
The Dirac delta supported on the diagonal $\mathrm{Diag}_{|I_i|}$ acts as follows
\begin{equation*}
	\delta_{\mathrm{Diag}_{|I_i|}}(f):=\int_{M}d\mu_g(x)f(x,\ldots,x),\qquad\forall f\in\mathcal{D}(M^{|I_i|}).
\end{equation*}
Denoting $\widehat{x}_m=(x_1,\ldots, x_m)\in M^m$ as well as $\widehat{k}_m=(k_1,\ldots, k_m)\in T^\ast_{\widehat{x}_m}M^m$, it holds that
\begin{equation*}
	\mathrm{WF}(\delta_{\mathrm{Diag}_m})=\left\{(\widehat{x}_m;\widehat{k}_m)\in T^\ast M^m\setminus\{0\}\;|\;x_1=\ldots=x_m, \sum_{i=1}^mk_i=0\right\}.
\end{equation*}
\begin{definition}\label{Def: algebra WF}
	For any but fixed $m\in\mathbb{N}$ and $k\leq m$ let us consider the decomposition in disjoint partitions $\{1,\ldots,m\}=I_1\uplus \ldots \uplus I_k$ as well as the set
	\begin{align}\label{Eq: C_m}
		C_m:=\{(\widehat{x}_m,&\widehat{k}_m)\in T^\ast M^m\setminus\{0\}\;|\; \exists p\in\{1,\ldots, m-1\}, \nonumber\\
		&\{1,\ldots, m\}=I_1\uplus\ldots\uplus I_p \text{ such that } \forall i\neq j, \forall (a,b)\in I_i\times I_j\nonumber\\
		& \text{then }x_a\neq x_b \text{ and } \forall j\in\{1,\ldots, p\}, (\widehat{x}_{|I_j|},\widehat{k}_{|I_j|})\in \mathrm{WF}(\delta_{\mathrm{Diag}_{|I_j|}})\}.
	\end{align}
We define
\begin{equation}\label{Eq: constraint wf}
	\mathcal{D}'_C(M;\mathsf{Pol}_W):=\{u\in\mathcal{D}'(M;\mathsf{Pol}_W)\;|\;\forall k_1,k_2\in\mathbb{N}_0, \mathrm{WF}(u^{(k_1,k_2)})\subseteq C_{k_1+k_2+1}\}.
\end{equation}
\end{definition}
\begin{remark}
On account of Equation \eqref{Eq: C_m}, $C_1=\emptyset$.
Since $u\in\mathcal{D}'_C(M;\mathsf{Pol}_W)$ implies $\mathrm{WF}(u)\subseteq C_1=\emptyset$, it descends that the elements of $\mathcal{D}'_C(M;\mathsf{Pol}_W)$ are generated by smooth functions.
\end{remark}
\noindent In the following we show that on the class of functionals abiding by Definition \ref{Def: algebra WF} differentiation and convolution with fundamental solutions are well-defined operations. 
\begin{lemma}
	The space $\mathcal{D}'_C(M;\mathsf{Pol}_W)$ is stable under directional derivatives $\delta_\varphi$, $\delta_{\bar{\varphi}}$ for all $\varphi\in\Gamma(DM),\bar{\varphi}\in\Gamma(D^\ast M)$ , see Definition \ref{Def: functional derivatives}. Moreover, if $u\in\mathcal{D}'_C(M;\mathsf{Pol}_W)$, then $G_\psi\circledast u,G_{\bar\psi}\circledast u\in\mathcal{D}'_C(M;\mathsf{Pol}_W)$.
\end{lemma}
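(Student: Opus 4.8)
The plan is to verify the two claims separately, reducing everything to the behaviour of the wavefront set under the two elementary operations. The key observation is that both operations act on functionals componentwise (Remark~\ref{Rem: Components of a functional}), so it suffices to work with scalar-valued multilocal distributions and invoke Remark~\ref{Rem: vector WF} to conclude for the vector-valued case; in particular neither operation can make the wavefront set worse than a union of the wavefront sets of the constituents, so the constraint ``$\subseteq C_m$'' will be preserved provided the relevant diagonal structure is not enlarged.

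\textbf{Stability under directional derivatives.} First I would unravel Definition~\ref{Def: functional derivatives}: if $u\in\mathcal{D}'_C(M;\mathsf{Pol}_W)$ then $[\delta_\varphi u]^{(k_1,k_2)} = u^{(k_1+1,k_2)}(\,\cdot\,\otimes\varphi;\eta,\bar\eta)$ up to the reordering of the test-function slots, and similarly $[\delta_{\bar\varphi}u]^{(k_1,k_2)}=u^{(k_1,k_2+1)}(\,\cdot\,\otimes\bar\varphi)$. Since $\varphi$ is a smooth compactly supported section, pairing $u^{(k_1+1,k_2)}$ against $\varphi$ in one of its $M$-slots is the action of a distribution against a test function in that slot, which can only shrink the wavefront set (the projection onto the remaining variables of a subset of $\mathrm{WF}(u^{(k_1+1,k_2)})$, cf.\ \cite[Ch.~8]{Hormander-I-03}). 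Hence $\mathrm{WF}([\delta_\varphi u]^{(k_1,k_2)})$ is contained in the image of $\mathrm{WF}(u^{(k_1+1,k_2)})\cap\{$that slot has zero covector$\}$ under the obvious projection, and one checks directly from the definition of $C_m$ in Equation~\eqref{Eq: C_m} that this image lies in $C_{k_1+k_2+1}$: deleting one variable from a configuration satisfying the partition-and-diagonal condition of $C_{k_1+k_2+2}$ yields a configuration satisfying the same condition on the remaining $k_1+k_2+1$ variables. The polynomial property is obvious since a derivative of a polynomial functional is again polynomial.

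\textbf{Stability under convolution with $G_\psi$, $G_{\bar\psi}$.} By linearity and Remark~\ref{Rem: Components of a functional} it suffices to treat a component $u\equiv u^{\lambda_1\dots\lambda_p}_{\rho_1\dots\rho_q}$ and a single scalar entry $G\equiv G^{\lambda_{p+1}}_{\rho_{q+1}}$ of $G_\psi$. From Equation~\eqref{Eq: Convolution with a component}, $(G_\psi\circledast u)^{(k_1,k_2)}$ is, slot by slot, the composition of $u^{(k_1,k_2)}$ with the integral kernel $G$ acting in the distinguished ``$f$'' slot; concretely its kernel is obtained by contracting the kernel of $u^{(k_1,k_2)}$ with $G(x,y)$ against that variable, possibly cut off by $\chi$ as in Remark~\ref{Rem: cut-off}. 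The hard part will be tracking the wavefront set of this composition: one uses the composition-of-kernels theorem \cite[Thm.~8.2.14]{Hormander-I-03}, noting that the composition is legitimate because $\mathrm{WF}(G)=\mathrm{WF}(\delta_{\mathrm{Diag}_2})$ by Equation~\eqref{Eq: WF Dirac} and $\mathrm{WF}(u^{(k_1,k_2)})\subseteq C_{k_1+k_2+1}$, so the transversality/no-zero-section condition holds. The resulting bound gives $\mathrm{WF}((G_\psi\circledast u)^{(k_1,k_2)})\subseteq \mathrm{WF}(u^{(k_1,k_2)})\circ \mathrm{WF}(G) \cup (\text{terms from }G\text{ alone})$, and because $G$ is supported on the diagonal with $k_x=-k_y$, composing with it does \emph{not} move the base point nor change the diagonal/partition structure — it merely relabels the contracted variable. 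Hence the composed wavefront set still satisfies the $C_{k_1+k_2+1}$ condition. One must also check that $G_\psi\circledast u$ is still polynomial, which is immediate since convolution acts only on the $f$-slot and leaves the order of functional differentiation untouched, so $(G_\psi\circledast u)^{(k_1,k_2)}=0$ whenever $u^{(k_1,k_2)}=0$.

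\textbf{Main obstacle.} I expect the genuinely delicate point to be the bookkeeping in the convolution case: one has to be careful that Hörmander's composition theorem applies with the correct identification of ``input'' and ``output'' variables (the $f$-slot of $u^{(k_1,k_2)}$ is an $M$-variable that also appears among the $M^{k_1+k_2+1}$ arguments after the rewriting in Remark~\ref{Rem: Components of a functional}, so the diagonal structure of $C_m$ interacts with the diagonal $\mathrm{Diag}_2$ of $G$), and that the cut-off $\chi$ does not spoil anything — which is exactly the content of Remark~\ref{Rem: cut-off}. Once the composition is set up correctly, the inclusion into $C_{k_1+k_2+1}$ is a matter of verifying that ``contract two coincident points with equal-and-opposite covectors'' is an operation under which the partition-cum-diagonal condition of Equation~\eqref{Eq: C_m} is stable, which is a short combinatorial check. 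The vector-valued upgrade is then automatic by Remark~\ref{Rem: vector WF}.
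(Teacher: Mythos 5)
Your proof is correct and follows essentially the same strategy as the paper's: one checks that contracting a slot against a smooth section shrinks the wavefront set to a set that still satisfies the constraint \eqref{Eq: C_m}, and one invokes the H\"ormander-type convolution/composition calculus for the action of $G_\psi$, $G_{\bar\psi}$. There is, however, one organizational device in the paper that your version does not use and that is worth pointing out, since it streamlines the hardest step. The paper does not compose $G_\psi$ as a bare kernel on $M\times M$ with $u^{(k_1,k_2)}$ in just the $f$-slot; instead it first rewrites
\[
(G_\psi\circledast u)^{(k_1,k_2)}=\bigl(G_\psi\otimes\delta_{\mathrm{Diag}_2}^{\otimes(k_1+k_2)}\bigr)\circledast u^{(k_1,k_2)},
\]
promoting the single-slot operator to a kernel on $M^{k_1+k_2+1}\times M^{k_1+k_2+1}$. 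The point of this lift is that every pair of slots in $G_\psi\otimes\delta_{\mathrm{Diag}_2}^{\otimes(k_1+k_2)}$ carries the relation $k_x=-k_y$, so \emph{both} forward projections of its wavefront set are empty; hence the composition theorem (item 5 of \cite[Thm.\ A.4]{Dappiaggi:2020gge}) produces no extraneous ``one-factor'' terms, and the inclusion in $C_{k_1+k_2+1}$ follows with no further casework. Your set-up composes $G_\psi\in\mathcal{D}'(M\times M)$ with $u^{(k_1,k_2)}$ on the $f$-slot only, and your wavefront bound mentions ``terms from $G$ alone'' but omits the other additional term in H\"ormander's \cite[Thm.\ 8.2.14]{Hormander-I-03}, namely the set of covectors of the form $(x,\widehat{x}_{k_1+k_2};0,\widehat{k}_{k_1+k_2})$ coming from the projection of $\mathrm{WF}(u^{(k_1,k_2)})$ onto the non-contracted variables. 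This term must also be checked to lie in $C_{k_1+k_2+1}$ (a short verification, since it amounts to appending a slot with vanishing covector to a configuration already satisfying the constraint); it is not automatically absorbed into the ``composition part.'' The lift trick is precisely what dispenses with this check.

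Two minor slips in the write-up: $\varphi\in\Gamma(DM)$ is smooth but need \emph{not} be compactly supported, so ``smooth compactly supported section'' overstates the hypothesis; and $G_\psi$ is a fundamental solution of an elliptic operator, hence it is \emph{singular} on the diagonal, i.e.\ $\mathrm{WF}(G_\psi)=\mathrm{WF}(\delta_{\mathrm{Diag}_2})$, but it is not ``supported on the diagonal'' as you wrote. Neither affects the correctness of the overall argument, but both should be fixed.
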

\begin{proof}
	Stability under directional derivatives immediately follows from the definition of $\mathcal{D}'_C(M;\mathsf{Pol}_W)$ and from Equation \eqref{Def: functional derivatives}. Let us focus on the action of $G_\psi$, the one for $G_{\bar{\psi}}$ following suit. Knowing that $\mathrm{WF}(G_\psi)=\mathrm{WF}(\delta_{\mathrm{Diag}_2})$ and that any $u\in\mathcal{D}'_C(M;\mathsf{Pol}_W)$ is generated by a smooth function, it descends that $G_\psi\circledast u\in\mathcal{D}'(M;\mathsf{Pol}_W)$. As a matter of facts, it holds that
	\begin{align*}
		(G_\psi\circledast u)^{(k_1,k_2)}(f\otimes &\eta^{\otimes k_1}\otimes\bar{\eta}^{\otimes k_2};\psi,\bar{\psi})=u^{(k_1,k_2)}((G_\psi\ast f)\otimes\eta^{\otimes k_1}\otimes\bar{\eta}^{\otimes k_2};\psi,\bar{\psi})\\
		&=\left[(G_\psi\otimes \delta_{\mathrm{Diag}_2}^{\otimes k_1+k_2})\circledast u^{(k_1,k_2)}\right](f\otimes\eta^{\otimes k_1}\otimes\bar{\eta}^{\otimes k_2};\psi,\bar{\psi}).
	\end{align*}
Invoking \cite[Eq. (A.1) \& (A.16)]{Dappiaggi:2020gge} and item 5. of \cite[Thm. A.4]{Dappiaggi:2020gge} 
\begin{equation*}
	\mathrm{WF}(G_\psi\otimes \delta_{\mathrm{Diag}_2}^{\otimes k_1+k_2})\subseteq\{(x,x,\widehat{x}_{k_1+k_2},\widehat{x}_{k_1+k_2};k,-k,\widehat{k}_{k_1+k_2},-\widehat{k}_{k_1+k_2})\in T^\ast M^{2+2(k_1+k_2)}\setminus\{0\}\}.
\end{equation*}
The projections of $\mathrm{WF}(G_\psi\otimes \delta_{\mathrm{Diag}_2}^{\otimes k_1+k_2})$ defined in item 5. of \cite[Thm. A.4]{Dappiaggi:2020gge} are empty, so that the sufficient conditions for $(G_\psi\otimes \delta_{\mathrm{Diag}_2}^{\otimes k_1+k_2})\circledast u^{(k_1,k_2)}$ to be well-defined are abode by. Furthermore we can give an estimate of the wavefront set
\begin{align*}
	\mathrm{WF}((G_\psi\otimes \delta_{\mathrm{Diag}_2}^{\otimes k_1+k_2})&\circledast u^{(k_1,k_2)})\subseteq\{(x,\widehat{x}_{k_1+k_2};k_1,\widehat{k}_{k_1+k_2})\in T^\ast M^{1+k_1+k_2}\;|\;\\
	& \exists\,(y,\widehat{y}_{k_1+k_2};p_1,\widehat{p}_{k_1+k_2})\in \mathrm{WF}(u^{(k_1,k_2)}),\\
	&(x,y,\widehat{x}_{k_1+k_2},\widehat{y}_{k_1+k_2};k,p,\widehat{k}_{k_1+k_2},\widehat{p}_{k_1+k_2})\in \mathrm{WF}(G_\psi\otimes \delta_{\mathrm{Diag}_2}^{\otimes k_1+k_2})\}\subseteq C_{k_1+k_2+1},
\end{align*}
which allows to conclude that $G_\psi\circledast u\in\mathcal{D}'_C(M;\mathsf{Pol}_W)$.
\end{proof}
\begin{theorem}\label{Thm: inclusion A}
	Denoting with $\mathcal{A}^W$ the unital, associative algebra as per Definition \ref{Def: A}, it holds that
	\begin{equation*}
		\mathcal{A}^W\subset \mathcal{D}'_C(M; \mathsf{Pol}_W)
	\end{equation*}
\end{theorem}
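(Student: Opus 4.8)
The proof is an induction following the inductive construction of $\mathcal{A}^W$ in Equations \eqref{Eq: A_0}--\eqref{Eq: A_j}, the essential analytic input being that $\mathcal{D}'_C(M;\mathsf{Pol}_W)$ is stable under the pointwise product of Equation \eqref{Eq: pointwise product}. As the base step I would check $\mathcal{A}_0^W\subset\mathcal{D}'_C(M;\mathsf{Pol}_W)$: by Example \ref{Ex: basic functionals} the generators $\mathbf{1},\Phi^\lambda,\bar\Phi_\lambda$ have all their functional derivatives either vanishing or equal to integration against a smooth, compactly supported function, so their wavefront sets are empty and thus trivially contained in every $C_m$. Closure of $\mathcal{D}'_C(M;\mathsf{Pol}_W)$ under finite sums is immediate, and closure under multiplication by the smooth coefficients generating $\mathcal{E}(M;W)$ follows from the same observation; hence $\mathcal{A}_0^W$ lies in $\mathcal{D}'_C(M;\mathsf{Pol}_W)$ once stability under the pointwise product is available.

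For that stability, let $u,v\in\mathcal{D}'_C(M;\mathsf{Pol}_W)$. Since $\mathrm{WF}(u)\subseteq C_1=\emptyset$, both $u(\psi,\bar\psi)$ and $v(\psi,\bar\psi)$ are smooth for every $(\psi,\bar\psi)$, so the pull-back along the diagonal in Equation \eqref{Eq: pointwise product} makes sense and $uv\in\mathcal{D}'(M;\mathsf{Pol}_W)$; the work lies in the wavefront estimate for the functional derivatives, the vector-valued nature being harmless by Remark \ref{Rem: vector WF}. Applying the Leibniz rule implicit in Definition \ref{Def: functional derivatives}, $(uv)^{(k_1,k_2)}$ is a finite sum of terms of the form $\Delta_{\mathrm p}^{*}\big[u^{(j_1,j_2)}\otimes v^{(k_1-j_1,k_2-j_2)}\big]$, where $\Delta_{\mathrm p}^{*}$ pulls back along the embedding $M^{k_1+k_2+1}\hookrightarrow M^{j_1+j_2+1}\times M^{k_1-j_1+k_2-j_2+1}$ that duplicates the base point and distributes the differentiated slots between the two factors. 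One then invokes the microlocal toolkit recollected in \cite[App. A]{Dappiaggi:2020gge} — the wavefront estimate for tensor products together with item 5 of \cite[Thm. A.4]{Dappiaggi:2020gge} for pull-backs along (partial) diagonals — using $\mathrm{WF}(u^{(j_1,j_2)})\subseteq C_{j_1+j_2+1}$ and $\mathrm{WF}(v^{(k_1-j_1,k_2-j_2)})\subseteq C_{k_1-j_1+k_2-j_2+1}$ to verify that the conormal of the diagonal meets no wavefront covector of the tensor product and that the resulting wavefront set is again contained in $C_{k_1+k_2+1}$. The combinatorial point is that $C_m$, as defined in Equation \eqref{Eq: C_m}, is stable under tensoring and restricting to partial diagonals: a covector of the tensor product comes from a pair of admissible cluster decompositions, one for the slots of $u^{(\cdot)}$ and one for those of $v^{(\cdot)}$, and identifying the two base points merges the clusters containing them into one while the momentum-conservation condition attached to each $\delta_{\mathrm{Diag}}$ is preserved, producing precisely an element of $C_{k_1+k_2+1}$.

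With stability under the pointwise product in hand, and recalling that stability under the convolutions $G_\psi\circledast(\cdot)$ and $G_{\bar\psi}\circledast(\cdot)$ is exactly the content of the Lemma preceding this theorem, the induction closes: assuming $\mathcal{A}_{j-1}^W\subset\mathcal{D}'_C(M;\mathsf{Pol}_W)$, the sets $\mathcal{A}_{j-1}^W$, $G_\psi\circledast\mathcal{A}_{j-1}^W$ and $G_{\bar\psi}\circledast\mathcal{A}_{j-1}^W$ all lie in $\mathcal{D}'_C(M;\mathsf{Pol}_W)$, hence so does the polynomial ring $\mathcal{A}_j^W$ they generate over $\mathcal{E}(M;W)$, since the latter is closed under sums, under multiplication by smooth coefficients, and under the pointwise product. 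As these are the only operations involved in passing to the direct limit $\mathcal{A}^W=\lim_{\longrightarrow}\mathcal{A}_j^W$ of Definition \ref{Def: A}, the inclusion $\mathcal{A}^W\subset\mathcal{D}'_C(M;\mathsf{Pol}_W)$ follows.

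I expect the main obstacle to be the bookkeeping in the wavefront estimate for $(uv)^{(k_1,k_2)}$: one must distinguish which slots are glued by the diagonal pull-back arising from the pointwise product from those already glued inside each $u^{(\cdot)}$ and $v^{(\cdot)}$ by the clustering defining $C_m$, and check Hörmander's criterion for the pull-back, namely that no nonzero covector of the tensor product is conormal to the diagonal. This is precisely where the requirement in Equation \eqref{Eq: C_m} that points lying in distinct clusters be genuinely distinct is used, ensuring that the only covectors surviving the restriction come from matching a cluster of $u$ with a cluster of $v$ through the shared base point, whose momentum balances then add up exactly to the conditions defining $C_{k_1+k_2+1}$.
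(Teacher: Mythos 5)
Your overall strategy — base case on the generators, stability of $\mathcal{D}'_C(M;\mathsf{Pol}_W)$ under the pointwise product via the Leibniz rule and H\"ormander's pull-back criterion, stability under $G_\psi\circledast,G_{\bar\psi}\circledast$ from the preceding lemma, then induction along $\mathcal{A}_j^W$ and the direct limit — is exactly the standard route, and coincides with the argument in \cite[Prop.~2.11]{BDR23} that the paper simply cites without reproducing. The handling of the vector-valued nature via Remark \ref{Rem: vector WF} is also what the paper itself points out in the sentence following the theorem.

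One concrete slip in the base case: you assert that all the functional derivatives of $\mathbf{1},\Phi^\lambda,\bar\Phi_\lambda$ are ``integration against a smooth, compactly supported function, so their wavefront sets are empty.'' This is false for the nonvanishing first derivatives. From Example \ref{Ex: basic functionals}, $(\Phi^\lambda)^{(1,0)}(f\otimes\psi_1;\cdot)=\int_M f\,\psi_1^\lambda\,d\mu_g$, which, viewed as a bi-distribution on $M\times M$ as required by Definition \ref{Def: functional derivatives}, has integral kernel $\delta(x-y)$, i.e.\ $(\Phi^\lambda)^{(1,0)}=\delta_{\mathrm{Diag}_2}$, and likewise for $\bar\Phi_\lambda$. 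Its wavefront set is therefore $\mathrm{WF}(\delta_{\mathrm{Diag}_2})\neq\emptyset$. The conclusion you need — containment in $C_2$ — still holds, since by Equation \eqref{Eq: C_m} one has $C_2=\mathrm{WF}(\delta_{\mathrm{Diag}_2})$, but it is not a triviality about empty sets; it is the prototypical (and tight) case of the wavefront constraint. This distinction matters conceptually: it is precisely because even linear functionals already produce $\delta$-type wavefront sets under differentiation that the class $\mathcal{D}'_C$ must be built around the family $C_m$ rather than around smooth kernels. Correcting this sentence, the base step and the rest of your induction go through.
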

The proof of Theorem \ref{Thm: inclusion A} follows the same lines of \cite[Prop. 2.11]{BDR23}, thus we omit it. We highlight that, although functional derivatives of elements of $\mathcal{A}^W$ are distributions taking values in $W$, \emph{cf.} Equation 	\eqref{Eq: W space}, their wavefront set as per Equation \eqref{Eq: vectorial wf} satisfies Equation \eqref{Eq: constraint wf}.

\subsubsection{Scaling degree}\label{Sec: scaling degree}

A fundamental tool adopted in the renormalization of singular distributions is the \textbf{Steinmann scaling degree} \cite{Ste71,Brunetti-Fredenhagen-00, BaWr14}. We recall succinctly its definition. Given $u\in\mathcal{D}'(M)$, $f\in\mathcal{D}(M)$, given $\lambda>0$, we define $f_{x_0}^\lambda(x):=\lambda^{-d}f(\lambda^{-1}(x-x_0))$ and $u_{x_0}^\lambda$ to be the element of $\mathcal{D}^\prime(M)$ such that, for all $f\in\mathcal{D}(M)$, $u_{x_0}^\lambda(f)=u(f_{x_0}^\lambda)$. Consequently we call {\em scaling degree} of $u$ at $x_0$
\begin{equation}\label{Eq: scaling degree}
	\textrm{\textrm{sd}}_{x_0}(u)=\inf\{\omega\in\mathbb{R}\;\vert\;\lim_{\lambda\rightarrow 0^+}\lambda^\omega u^\lambda_{x_0}=0\}.
\end{equation}
Instead of delving into a detailed exposition of the properties of the scaling degree, we refer the reader to \cite[App. B]{Dappiaggi:2020gge}. 

In this work we deal with distributions $u\in\mathcal{D}^\prime(M)$, whose singular support is not an isolated point, rather a submanifold $N\subset M$, the prototypical case being $N=\mathrm{Diag}_2(M)\doteq\{(x,x)\;|\;x\in M\}\subset M\times M$. In this case the notion of scaling degree is the natural extension of Equation \eqref{Eq: scaling degree} and we shall denote it by $\textrm{sd}_{N} (u)$ -- refer to \cite{Brunetti-Fredenhagen-00} for details. As for the wavefront set of vector-valued distributions, see Remark \ref{Rem: vector WF}, we extend the notion scaling degree for vector-valued distributions, considering the worst case scenario, namely, given a submanifold $N\subseteq M$, an $m$-dimensional vector space $V$ and $u\in\mathcal{D'}(M,V)$, we set
\begin{equation}\label{Eq: vector scaling degree}
	\textrm{sd}^V_{N}(u):=\max_{i\in\{1,\ldots, m\}}\textrm{sd}_{N}(u_i)\,.
\end{equation}
Here $u_i$ is the $i$-th component of $u$ with respect to an arbitrary basis, the choice of which does not alter Equation \eqref{Eq: vector scaling degree}. Observe, that Equation \eqref{Eq: vector scaling degree} entails that, mutatis mutandis, all results outlined in \cite[App. B]{DDR20} remain valid. For later convenience, we introduce the \textit{degree of divergence} of a distribution $u\in\mathcal{D}'(M,V)$ with respect to a submanifold $N\subseteq M$
\begin{equation}\label{Eq: degree of divergence}
	\rho_N^V(u):=\textrm{sd}^V_{N}(u)-\mathrm{codim}(N),
\end{equation}
where $\mathrm{codim}(N)$ denotes the codimension of the smooth manifold. In the following we shall drop the superscript $V$ unless there is a chance of confusion for the reader. It is instructive to apply these notions to $G_\psi$, the fundamental solution of the massless Dirac operator. Denoting by $(G_\psi)^{\rho^\prime}_\rho$ its components, $\rho,\rho^\prime=0,1,\dots,d-1$, and using Equation \eqref{Eq: Dirac to Laplacian} in combination with the local expression for the fundamental solution of the operator $-\Delta+\frac{R}{4}+m^2$, it turns out that, working at the level of integral kernel,
\begin{equation}
	\lim_{\lambda\rightarrow 0^+}\lambda^\omega(G_\psi)^\rho_{\rho^\prime}(\lambda x,\lambda y)=0,
\end{equation}
only if $\omega>d-1$, which entails
\begin{equation}\label{Eq: scaling degree fundamental solutions}
	\textrm{sd}_{\mathrm{Diag}_2}(G_\psi)^\rho_{\rho^\prime}=d-1.
\end{equation}
An identical result holds true for all the components of $G_{\psi}$ and of $G_{\bar{\psi}}$. 

A key result that we will exploit when dealing with renormalization of a priori ill-defined analytical structures pertains the scaling degree of functionals of the form $G_\psi\circledast u$ and $G_{\bar{\psi}}\circledast u$, $u\in\mathcal{D}^\prime_C(M; \mathsf{Pol}_W)$ as defined in Equation \eqref{Eq: functional convolution}. Adapting almost slavishly the proof in \cite[Lem. 2.14, Lem. B.15]{DDR20} to the case in hand, the following result holds true.
\begin{lemma}\label{Lem: stability}
	Let $G_\psi,G_{\bar{\psi}}$ be the fundamental solutions introduced in Section \ref{Sec: geometric setting}. 
  For any $u\in\mathcal{D}_C'(M;\mathsf{Pol}_W)$ and denoting with $\circledast$ the operation as per Equation \eqref{Eq: functional convolution},
	\begin{equation*}
		\textrm{sd}_{\mathrm{Diag}_{1+k_1+k_2}}(G_\psi\circledast u)^{(k_1,k_2)},\quad	\textrm{sd}_{\mathrm{Diag}_{1+k_1+k_2}}(G_{\bar{\psi}}\circledast u)^{(k_1,k_2)}<\infty\,,
	\end{equation*}
	whenever $u^{(k_1,k_2)}$ has finite scaling degree with respect to
	\begin{equation*}
		\mathrm{Diag}_{k_1+k_2+1}=\{(x_1,\ldots,x_{k_1+k_2+1})\in M^{k_1+k_2+1}\;\vert\;x_1=\ldots = x_{k_1+k_2+1}\}.
	\end{equation*} 
\end{lemma}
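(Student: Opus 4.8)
The plan is to reduce the statement to a statement about the scaling degree of a tensor product of distributions convolved with the functional derivatives of $u$, mirroring the structure of the proof in \cite[Lem. 2.14]{DDR20}. The first step is to recall, as already computed in the preceding lemma, that
\begin{equation*}
	(G_\psi\circledast u)^{(k_1,k_2)}(f\otimes\eta^{\otimes k_1}\otimes\bar\eta^{\otimes k_2};\psi,\bar\psi)=\left[(G_\psi\otimes\delta_{\mathrm{Diag}_2}^{\otimes k_1+k_2})\circledast u^{(k_1,k_2)}\right](f\otimes\eta^{\otimes k_1}\otimes\bar\eta^{\otimes k_2};\psi,\bar\psi),
\end{equation*}
so that, componentwise, one is looking at a distribution obtained by taking the convolution of the kernel $G_\psi\otimes\delta_{\mathrm{Diag}_2}^{\otimes k_1+k_2}$ with the multilocal distribution $u^{(k_1,k_2)}$. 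The wavefront estimate established earlier shows this convolution is well-defined, and by the discussion around Equations \eqref{Eq: vectorial wf} and \eqref{Eq: vector scaling degree} it suffices to bound the scaling degree of each scalar component with respect to the total diagonal $\mathrm{Diag}_{1+k_1+k_2}$.

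Next I would invoke the standard submanifold-scaling-degree calculus as packaged in \cite[App. B]{DDR20}: the scaling degree of a convolution $K\circledast v$ with respect to a diagonal is controlled by $\mathrm{sd}$ of $K$ on the relevant diagonal, by $\mathrm{sd}_{N}(v)$, and by the codimensions involved, provided the wavefront-set transversality conditions hold — which they do here, since the projections appearing in item 5 of \cite[Thm. A.4]{Dappiaggi:2020gge} were shown to be empty. Concretely, one uses that $\mathrm{sd}_{\mathrm{Diag}_2}(G_\psi)=d-1<\infty$ from Equation \eqref{Eq: scaling degree fundamental solutions}, that $\delta_{\mathrm{Diag}_2}$ has finite scaling degree $d$ on its diagonal, and that by hypothesis $\mathrm{sd}_{\mathrm{Diag}_{k_1+k_2+1}}(u^{(k_1,k_2)})<\infty$. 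Feeding these finite inputs into the composition/convolution estimate for scaling degrees on nested diagonals yields a finite upper bound for $\mathrm{sd}_{\mathrm{Diag}_{1+k_1+k_2}}\big((G_\psi\circledast u)^{(k_1,k_2)}\big)$, and the argument for $G_{\bar\psi}$ is identical after replacing $\slashed D$ with $\slashed D^\ast$, whose fundamental solution has the same wavefront set and the same scaling degree. Taking the maximum over finitely many vector components, via Equation \eqref{Eq: vector scaling degree}, concludes the proof.

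The main obstacle I anticipate is bookkeeping rather than conceptual: one must carefully track how the convolution $\circledast$ acts on the first ($\mathcal{D}(M)$) slot of $u^{(k_1,k_2)}$ while the remaining $k_1+k_2$ slots are simply hit by $\delta_{\mathrm{Diag}_2}$ factors (which identify pairs of points), and then check that the codimension arithmetic in the scaling-degree composition lemma is applied to the correct nested diagonals — in particular that partial diagonals arising from the $\delta_{\mathrm{Diag}_2}$'s and the convolution combine to $\mathrm{Diag}_{1+k_1+k_2}$ rather than some smaller or larger diagonal. Since \cite[Lem. B.15]{DDR20} was designed precisely for this situation in the scalar case, and the vector-valued extension is handled uniformly by \eqref{Eq: vector scaling degree}, I expect the adaptation to be essentially mechanical, which is why the authors' proof can be short.
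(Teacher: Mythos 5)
Your proposal follows exactly the route the paper takes: the paper simply cites an ``almost slavish'' adaptation of \cite[Lem.~2.14, Lem.~B.15]{DDR20}, and your reduction to $(G_\psi\otimes\delta_{\mathrm{Diag}_2}^{\otimes k_1+k_2})\circledast u^{(k_1,k_2)}$, the use of the wavefront estimate from the preceding lemma, the finite scaling degrees $\textrm{sd}_{\mathrm{Diag}_2}(G_\psi)=d-1$ and $\textrm{sd}_{\mathrm{Diag}_{k_1+k_2+1}}(u^{(k_1,k_2)})<\infty$ fed into the submanifold scaling-degree calculus, and the componentwise maximum via Equation \eqref{Eq: vector scaling degree} are precisely the ingredients that adaptation invokes. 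The argument is correct and essentially identical to the paper's.
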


\section{The Local Algebra $\mathcal{A}_{\cdot Q}^W$}\label{Sec: local deformation}
In order to encode in the algebra $\mathcal{A}^W$ as per Definition \ref{Def: A} the information concerning an underlying random field, such as the one which rules Equation \eqref{Dirac}, we deform the underlying algebraic structures, following a procedure inspired by the \emph{deformation quantization} approach to field theories \cite{Fredenhagen:2014lda, HaRe20, Rejzner:2016hdj}. In order to better grasp the underlying rationale, let us consider the solutions to the linear stochastic equations
\begin{equation}
	\label{Eq: linear Dirac}
	\begin{cases}
		\slashed{D}^{\rho}_{\sigma} \widehat{\varphi}^{\sigma}= \xi^{\rho}, \\
		\slashed{D}^{*\, \sigma}_{\rho} \widehat{\bar{\varphi}}_{\sigma} = \bar{\xi}_{\rho}, \\
	\end{cases}
\end{equation}
where the noise abides by Definition \ref{Def: fermionic white noise}. Employing the fundamental solutions of the operator $D$ and $D^*$, it descends that
\begin{align}\label{Eq: random fields}
	&\widehat{\varphi}^\lambda:=(G_\psi)^\lambda_{\lambda'}\circledast\xi^{\lambda'},\\
	&\widehat{\bar{\varphi}}_\lambda:=(G_{\bar{\psi}})^{\lambda'}_\lambda\circledast\bar{\xi}_{\lambda'}.
\end{align}
Using once more Definition \ref{Def: fermionic white noise}, for all $f_1,f_2\in\mathcal{D}(M)$ it holds
\begin{align*}
	\mathbb{E}[\widehat{\varphi}^\lambda&(f_1)\widehat{\varphi}^{\lambda'}(f_2)]\\
	&=\mathbb{E}\left[\int_{M^4}d\mu_g(x)d\mu_g(y)d\mu_g(x')d\mu_g(y')\,(G_\psi)^\lambda_{\lambda''}(x,x')\xi^{\lambda''}(x')(G_\psi)^{\lambda'}_{\lambda'''}(y,y')\xi^{\lambda'''}(y')f_1(x)f_2(y)\right]\\
	&=\int_{M^4}d\mu_g(x)d\mu_g(y)d\mu_g(x')d\mu_g(y')\,(G_\psi)^\lambda_{\lambda''}(x,x')(G_\psi)^{\lambda'}_{\lambda'''}(y,y')\underbrace{\mathbb{E}[\xi^{\lambda''}(x')\xi^{\lambda'''}(y')]}_{0}f_1(x)f_2(y)=0.
\end{align*}
Observe that, since we are considering random valued distributions, it is legit to exchange the operations of smearing against test-functions and of taking the expectation value. An analogous calculation yields
\begin{equation*}
	\mathbb{E}[\widehat{\bar{\varphi}}_\lambda(f_1)\widehat{\bar{\varphi}}_{\lambda'}(f_2)]=0\,.
\end{equation*}
On the contrary, it holds that
\begin{align}\label{Eq: 2-point function 1}
	\mathbb{E}[\widehat{\varphi}^\lambda(f_1)\,&\widehat{\bar{\varphi}}_{\lambda'}(f_2)]\nonumber\\
	&=\int_{M^4}d\mu_g(x)d\mu_g(y)d\mu_g(x')d\mu_g(y')\,(G_\psi)^\lambda_{\lambda''}(x,x')(G_{\bar{\psi}})_{\lambda'}^{\lambda'''}(y,y')\underbrace{\mathbb{E}[\xi^{\lambda''}(x')\bar{\xi}_{\lambda'''}(y')]}_{\delta^{\lambda''}_{\lambda'''}\delta(x'-y')}f_1(x)f_2(y)\nonumber\\
	&=\int_{M^3}d\mu_g(x)d\mu_g(y)d\mu_g(x')\,(G_\psi)^\lambda_{\lambda''}(x,x')(G_{\bar{\psi}})_{\lambda'}^{\lambda''}(y,x')f_1(x)f_2(y)=(Q_{\psi\bar{\psi}})^\lambda_{\lambda'}(f_1\otimes f_2),
\end{align}
where, working at the level of integral kernels, we defined 
\begin{equation}\label{Eq: Q}
	(Q_{\psi\bar{\psi}})^\lambda_{\lambda'}(x,y):=\int_M d\mu_g(x')\,(G_\psi)^\lambda_{\lambda''}(x,x')(G_{\bar{\psi}})_{\lambda'}^{\lambda''}(y,x').
\end{equation}
An identical procedure yields
\begin{equation}\label{Eq: 2-point function 2}
	\mathbb{E}[\widehat{\bar{\varphi}}_{\lambda}(f_1)\,\widehat{\varphi}^{\lambda'}(f_2)]=(Q_{\bar{\psi}\psi})^{\lambda'}_\lambda(f_1\otimes f_2),
\end{equation}
whose integral kernel is 
\begin{equation}\label{Eq: tildeQ}
	(Q_{\bar{\psi}\psi})^{\lambda'}_\lambda(x,y):=-\int_M d\mu_g(x')\,(G_{\bar{\psi}})^{\lambda''}_\lambda(x,x')(G_\psi)^{\lambda'}_{\lambda''}(y,x').
\end{equation}

\begin{remark}\label{Rem: All over Q}
	One can interpret $\widehat{\varphi}$ and $\widehat{\bar{\varphi}}$ as being the component of a centered vector-valued Gaussian distribution
	 \begin{equation*}
		\underline{\widehat{\varphi}} := \begin{pmatrix}
			\widehat{\varphi} \\
			\widehat{\bar{\varphi}}
		\end{pmatrix},
	\end{equation*}
whose covariance matrix $C$ is such that $C^{11}=C^{22}=0$ while $C^{12}=Q_{\psi\bar{\psi}}$ and $C^{21}=Q_{\bar{\psi}\psi}$ as per Equations \eqref{Eq: 2-point function 1} and \eqref{Eq: 2-point function 2} -- see Definitions \ref{cov} and \ref{Def: fermionic white noise}. For notational simplicity, henceforth we adopt the convention
	\begin{equation}\label{Eq: Q-notation}
		Q_\lambda^{\lambda'}=(Q_{\psi\bar{\psi}})_\lambda^{\lambda'},\qquad \widetilde{Q}_\lambda^{\lambda'}=(Q_{\bar{\psi}\psi})_\lambda^{\lambda'}, 
	\end{equation} 
where we drop the subscript representing the ordering of the fields involved since this should not constitute a source of confusion.  In addition, using \cite[Thm. 8.2.14]{Hormander-I-03} and Equation \eqref{Eq: WF Dirac} one can infer that
	\begin{align}\label{Eq: WF Q}
		&\mathrm{WF}(Q), \mathrm{WF}(\widetilde{Q})\subseteq \mathrm{WF}(G_\psi)\equiv \mathrm{WF}(G_{\bar{\psi}})\equiv \mathrm{WF}(\delta_{\mathrm{Diag}_2}).
	\end{align}
\end{remark}

We are now ready to introduce the sought deformation codifying the stochastic correlations of the vector-valued Gaussian white noise. 

\begin{definition}[Deformed product]\label{Def: deformed product}
	Given the algebra $\mathcal{A}^W$ as per Definition \ref{Def: A}, for all $u_1,u_2\in\mathcal{A}^W$, $f\in\mathcal{D}(M)$ and for all field configurations $(\psi,\bar{\psi})\in\Gamma(DM\oplus D^\ast M)$, we define
	\begin{equation}\label{Eq: deformed product}
		u_1\cdot_{\mathsf{Q}} u_2(f;\psi,\bar{\psi}):=\sum_{k\geq 0}\sum_{k_1+k_2=k}\frac{1}{k_1!k_2!}(\delta_{\mathrm{Diag}_2}\otimes Q^{\otimes k_1}\otimes \widetilde{Q}^{\otimes k_2})\cdot [u_1^{(k_1,k_2)}\tilde{\otimes} u_2^{(k_2,k_1)}](f\otimes \mathbf{1}_{1+2k};\psi,\bar{\psi}),
	\end{equation}
where $\delta_{\mathrm{Diag}_2}$ is the Dirac delta distribution supported at the diagonal of $M\times M$, while $Q$ and $\widetilde{Q}$ are the integral kernels introduced in Equations \eqref{Eq: Q} and \eqref{Eq: tildeQ}, respectively. Furthermore, $\tilde{\otimes}$ denotes a modified tensor product
characterized, at the level of integral kernel, by the following expression
\begin{align}\label{Eq: expression deformed product}
	(\delta_{\mathrm{Diag}_2}\otimes &Q^{\otimes k_1}\otimes \widetilde{Q}^{\otimes k_2})\cdot [u_1^{(k_1,k_2)}\tilde{\otimes} u_2^{(k_2,k_1)}]\nonumber\\
	&:=\delta(x_1,x_2)\prod_{j=1}^{k_1}\prod_{l=1}^{k_2}Q(z_j,z_j')\widetilde{Q}(y_l,y_l')u_1^{(k_1,k_2)}(x_1,\widehat{z}_{k_1},\widehat{y}_{k_2})u_2^{(k_2,k_1)}(x_2,\widehat{z'}_{k_1},\widehat{y'}_{k_2}),
\end{align}
being $\widehat{z}_{k_1}:=(z_1,\ldots, z_{k_1})$, $\widehat{y}_{k_2}:=(y_1,\ldots, y_{k_2})$ and analogously for primed variables.
\end{definition}

\begin{remark}
	Since $u_1,u_2\in\mathcal{A}^W\subset \mathcal{D}'_C(M;\mathsf{Pol}_W)$ are polynomial functionals, see Definition \ref{Def: functional derivatives}, the sum in Equation \eqref{Eq: deformed product} consists only of a finite number of terms. Let us stress that Equation \eqref{Eq: deformed product} is a mere formal expression, since it involves the vector-valued integral kernels $Q$ and $\widetilde{Q}$, which are ill-defined on the diagonal of $M\times M$. Such hurdle will be bypassed using a renormalization procedure.
\end{remark}

\begin{example}\label{Ex: correlation function}
As an application of Equation \eqref{Eq: deformed product}, let us consider $u_1^\lambda=\Phi^\lambda$ and $(u_2)_\lambda=\bar{\Phi}_\lambda$, see Remark \ref{Rem: Components of a functional} and Example \ref{Ex: basic functionals}. It holds that
\begin{align*}
	&\Phi^\lambda\cdot_{\mathsf{Q}}\bar{\Phi}_{\lambda'}(f;\psi,\bar{\psi})=\Phi^\lambda\bar{\Phi}_{\lambda'}(f;\psi,\bar{\psi})+Q^\lambda_{\lambda'}(f\delta_{\mathrm{Diag}_2}).
\end{align*}
Recalling that, in our formalism, taking the expectation value amounts to evaluating the underlying functional at $\psi=\bar{\psi}=0$, one recovers \underline{formally} the correlations as in Equation \eqref{Eq: 2-point function 1} with $f_1\otimes f_2$ replaced by $f\delta_{\mathrm{Diag}_2}$, namely
$$\Phi^\lambda\cdot_{\mathsf{Q}}\bar{\Phi}_{\lambda'}(f;0,0)=Q^\lambda_{\lambda'}(f\delta_{\mathrm{Diag}_2}).$$
As one can infer per direct inspection the right hand side is an ill-defined quantity, which needs to be tamed.
\end{example}

The following result establishes the existence of a deformation map which will play a prominent r\^ole in the definition of an algebra of  functionals encoding the information of the underlying random field.
\begin{theorem}\label{Prop: deformation map}
	Let us consider the algebra $\mathcal{A}^W$ introduced in Definition \ref{Def: A} and let $\mathcal{M}_{r,r'}$ be the $\mathcal{E}(M;W)$-moduli defined in Remark \ref{Rem: M}. Bearing in mind Definition \ref{Def: algebra WF}, there exists a linear map $\Gamma_{\cdot_{\mathsf{Q
}}}^W:\mathcal{A}^W\rightarrow\mathcal{D}'_C(M;\mathsf{Pol}_W)$ such that
	\begin{enumerate}\label{Eq: identity}
		\item for all $u:=u^\lambda\in\mathcal{M}_{1,0}$ or $u:=u_{\lambda'}\in\mathcal{M}_{0,1}$,
		\begin{equation}\label{Eq: identity}
			\Gamma_{\cdot_{\mathsf{Q}}}^W(u)=u,
		\end{equation}
	\item for all $u\in\mathcal{A}^W$, it holds that
	\begin{equation}\label{Eq: convolution G}
	\begin{cases}
		\Gamma_{\cdot_{\mathsf{Q}}}^W(G_\psi\circledast u)=G_\psi\circledast \Gamma_{\cdot_{\mathsf{Q}}}^W(u) \\
		\Gamma_{\cdot_{\mathsf{Q}}}^W(G_{\bar{\psi}}\circledast u)=G_{\bar{\psi}}\circledast \Gamma_{\cdot_{\mathsf{Q}}}^W(u)
	\end{cases}\,,
	\end{equation}
	\item for all $u\in\mathcal{A}^W$ and for every $\psi\in\Gamma(DM)$, $\bar{\psi}\in\Gamma(D^\ast M)$
	\begin{equation}\label{Eq: deformation derivatives}
		\begin{cases}
			\Gamma_{\cdot_{\mathsf{Q}}}^W\circ \delta_\psi=\delta_\psi\circ\Gamma_{\cdot_{\mathsf{Q}}}^W \\
			\Gamma_{\cdot_{\mathsf{Q}}}^W\circ\delta_{\bar{\psi}}=\delta_{\bar{\psi}}\circ \Gamma^W_{\cdot_{\mathsf{Q}}}
		\end{cases}\,,
	\end{equation}
	where we denoted by $\delta_\psi$ (resp. $\delta_{\bar{\psi}}$) the directional derivatives along $\psi$ (resp. $\bar{\psi}$) as per Definition \ref{Def: functional derivatives}. Furthermore, for every $\varphi\in C^\infty(M;W)$, it holds that
	\begin{equation}\label{Eq: product smooth function}
		\Gamma_{\cdot_{\mathsf{Q}}}^W(\varphi u)=\varphi\Gamma_{\cdot_{\mathsf{Q}}}^W(u),
	\end{equation}
	\item denoting by $\sigma_{(k,k')}(u):=\textrm{sd}_{\mathrm{Diag}_{k+k'+1}}(u^{(k,k')})$ the scaling degree with respect to the thin diagonal of $M^{k+k'+1}$ as per Equation \eqref{Eq: vector scaling degree}, it holds that
	\begin{equation}\label{Eq: finiteness scaling degree}
		\sigma_{(k,k')}(\Gamma_{\cdot_{\mathsf{Q}}}^W(u))<\infty,\quad \forall u\in\mathcal{A}^W.
	\end{equation}
	\end{enumerate}
\end{theorem}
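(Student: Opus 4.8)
The plan is to construct the deformation map $\Gamma^W_{\cdot_{\mathsf{Q}}}$ inductively along the filtration $\mathcal{A}^W_j$ (equivalently, along the module decomposition of Remark \ref{Rem: M}), mirroring the construction in the scalar case \cite{Dappiaggi:2020gge} and its complex-valued analogue \cite{BDR23}. The base case is dictated by property (1): on the generators $\mathbf{1}$, $\Phi^\lambda$, $\bar{\Phi}_{\lambda'}$ — which span $\mathcal{M}_{0,0}\oplus\mathcal{M}_{1,0}\oplus\mathcal{M}_{0,1}$ at level $\mathcal{A}^W_0$ — one simply sets $\Gamma^W_{\cdot_{\mathsf{Q}}}$ to be the identity; these functionals have trivial higher functional derivatives (Example \ref{Ex: basic functionals}), so (4) is immediate and (2), (3) are vacuous here. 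The inductive step then has two parts corresponding to the two ways an element of $\mathcal{A}^W_j$ is built from $\mathcal{A}^W_{j-1}$: applying $G_\psi\circledast$ or $G_{\bar\psi}\circledast$, and taking pointwise products (i.e. increasing the polynomial degree in $\Phi,\bar\Phi$).

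\textbf{Handling the convolutions.} For elements of the form $G_\psi\circledast u$ (or $G_{\bar\psi}\circledast u$), property (2) forces the definition: $\Gamma^W_{\cdot_{\mathsf{Q}}}(G_\psi\circledast u):=G_\psi\circledast\Gamma^W_{\cdot_{\mathsf{Q}}}(u)$. One must check this is well-posed — i.e. independent of the representation of a given algebra element as a convolution — which follows from the grading of Remark \ref{Rem: M} (the number $l$ of occurrences of $G_\psi$ is part of the bookkeeping), and that it stays inside $\mathcal{D}'_C(M;\mathsf{Pol}_W)$, which is exactly the content of the Lemma preceding Theorem \ref{Thm: inclusion A}. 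Compatibility with directional derivatives in (3) holds because $G_\psi\circledast$ and $\delta_\varphi$, $\delta_{\bar\varphi}$ act on disjoint "slots" of the functional and hence commute, and (4) — finiteness of the scaling degree of $(G_\psi\circledast\Gamma^W_{\cdot_{\mathsf{Q}}}(u))^{(k,k')}$ — is precisely Lemma \ref{Lem: stability} applied to $\Gamma^W_{\cdot_{\mathsf{Q}}}(u)$, using the inductive hypothesis that $\sigma_{(k,k')}(\Gamma^W_{\cdot_{\mathsf{Q}}}(u))<\infty$.

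\textbf{Handling the products — the main obstacle.} The genuinely hard part is defining $\Gamma^W_{\cdot_{\mathsf{Q}}}$ on a pointwise product $u_1 u_2$ of lower-degree elements so that it is compatible with the deformed product, i.e. so that evaluating $\Gamma^W_{\cdot_{\mathsf{Q}}}$ on $u_1 u_2$ reproduces the formal expression $u_1\cdot_{\mathsf{Q}} u_2$ of Definition \ref{Def: deformed product} with the singular kernels $\delta_{\mathrm{Diag}_2}\otimes Q^{\otimes k_1}\otimes\widetilde Q^{\otimes k_2}$ replaced by genuine distributions. The obstruction is that the products $Q(z_j,z_j')\,u_1^{(k_1,k_2)}(\cdots)\,u_2^{(k_2,k_1)}(\cdots)$ in Equation \eqref{Eq: expression deformed product} are a priori ill-defined on the total diagonal, because $\mathrm{WF}(Q),\mathrm{WF}(\widetilde Q)\subseteq\mathrm{WF}(\delta_{\mathrm{Diag}_2})$ by Equation \eqref{Eq: WF Q}. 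The resolution is the Epstein–Glaser / Brunetti–Fredenhagen extension procedure \cite{EpsteinGlaser,Brunetti-Fredenhagen-00}: one first defines the product on the complement of the thin diagonal (where by Definition \ref{Def: algebra WF} and the $C_m$-constraint the wavefront sets are transverse and Hörmander's criterion applies), then extends across the diagonal using that the relevant distribution has \emph{finite} scaling degree — this finiteness is itself guaranteed inductively by property (4) together with the scaling-degree bounds for $Q,\widetilde Q$ (each contributes $d-1$, cf. Equation \eqref{Eq: scaling degree fundamental solutions}) and Leibniz-type estimates. The extension is non-unique precisely when the degree of divergence $\rho_N$ is non-negative, and the arbitrary polynomial-in-derivatives-of-$\delta$ terms are exactly the renormalization freedoms the framework is designed to track. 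Linearity of $\Gamma^W_{\cdot_{\mathsf{Q}}}$ is preserved throughout, and one checks at each stage that the three structural properties (2)–(4) are stable under this extension — the scaling-degree estimate (4) being the one that must be propagated with care, since it is both a hypothesis feeding the extension and a conclusion to be re-established for the extended object. Finally, the vector-valued nature introduces no new difficulty beyond bookkeeping of the $W$-indices: by Remark \ref{Rem: vector WF} and Equation \eqref{Eq: vector scaling degree} all microlocal and scaling-degree statements reduce componentwise to the scalar case, so the entire scalar argument of \cite{Dappiaggi:2020gge,BDR23} transports mutatis mutandis.
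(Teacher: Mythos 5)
Your proposal is essentially correct and follows the same route as the paper: you identify the inductive construction along the grading, the base case on $\mathcal{M}_{1,0}\oplus\mathcal{M}_{0,1}$, the forced definition on convolutions with $G_\psi$, $G_{\bar\psi}$ together with Lemma \ref{Lem: stability} for the scaling-degree bound, and the Epstein--Glaser / Brunetti--Fredenhagen extension \cite{Brunetti-Fredenhagen-00} across the thin diagonal as the resolution for the a priori ill-defined products $Q\cdot u_1^{(k_1,k_2)}\tilde\otimes u_2^{(k_2,k_1)}$. You also correctly note that the wavefront-set constraint of Definition \ref{Def: algebra WF} guarantees transversality away from the diagonal so that H\"ormander's criterion applies there, that finiteness of scaling degree (propagated through the induction via item 4) is what enables the extension, and that the vector-valued bookkeeping reduces componentwise to the scalar case via Equations \eqref{Eq: vectorial wf} and \eqref{Eq: vector scaling degree}.

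Two minor points where the paper is more explicit: first, the paper singles out the case $d=1$, where $G_\psi\cdot G_{\bar\psi}$ is already $L^1_{\mathrm{loc}}$ (so no extension is required) — an observation you omit, though it is not strictly needed since the general argument subsumes it. Second, the paper organizes the induction as a nested double induction: outer induction on the polynomial degree $(r,r')$ through the moduli $\mathcal{M}_{r,r'}$, and inner induction on the depth index $j$ of the filtration $\mathcal{A}^W_j$; your phrasing conflates these two filtrations somewhat (they are not equivalent: one tracks polynomial degree, the other convolution depth), but your overall handling of both the convolution step and the product step is compatible with this organization and the argument goes through.
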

\begin{proof}	
We discuss only the key steps of the proof, since the line of reasoning that we adopt mirrors that illustrated in \cite{BDR23} for the scalar field.

\noindent\underline{\emph{Strategy}}. Since it has been established in Remark \ref{Rem: M} that $\mathcal{A}^W = \lim_{r, r' \rightarrow \infty} \mathcal{M}_{r, r'}$
	the proof can proceed by induction with respect to the indices $r$ and $r'$. We define formally 
		\begin{equation}
			\label{eq1pr}
			\Gamma^W_{\cdot_\mathsf{Q}} (u_1 \cdot ... \cdot u_n) := \Gamma^W_{\cdot_\mathsf{Q}} (u_1) \cdot_{\mathsf{Q}} ... \cdot_{\mathsf{Q}} \Gamma^W_{\cdot_\mathsf{Q}}(u_n),
		\end{equation}
		for all $u_1, ..., u_n \in \mathcal{A}^W$, where $\cdot_{\mathsf{Q}}$ denotes the \emph{a priori} ill-defined product characterized in Equation \eqref{Eq: deformed product}.\\
		
		\noindent \underline{\emph{The Case $d=1$}}. In the massive one-dimensional scenario with $M \equiv \mathbb{R}$, the fundamental solution of the Dirac operator $\slashed{D}$ and that of its formal adjoint $\slashed{D}^*$ are 
		\begin{equation*}
			\begin{cases}
				G_{\psi}(x) = -i e^{-imx} \Theta(-x+m), \\
				G_{\bar{\psi}}(x) = i e^{imx} \Theta (x+m),
			\end{cases}
		\end{equation*}
		As a consequence, a direct computation of the the correlation functions $Q \delta_{\mathrm{Diag}_2} := Q_{\psi \bar{\psi}} \delta_{\mathrm{Diag}_2}$ and $\widetilde{Q} \delta_{\mathrm{Diag}_2} := Q_{\bar{\psi} \psi} \delta_{\mathrm{Diag}_2}$ at the coinciding point limit yields 
		\begin{flalign*}
			&Q (f\delta_{\mathrm{Diag}_2}) := \int_{\mathbb{R}} dx \,  G_{\psi}(x) G_{\bar{\psi}}(x) f(x) = \int_{\mathbb{R}} dx \, \Theta(-x+m) \Theta (x+m) f(x) = \\ &= \int_{\mathbb{R}} dx \, \chi_{[-m,m]} f(x) = \int_{-m}^m dx \, f(x), \, \, \, \forall f \in \mathcal{D}(\mathbb{R})
		\end{flalign*}
		where $\chi_{[-m,m]}$ is the characteristic function of the interval $[-m,m]$. An analogous result can be derived for $\widetilde{Q}(f\delta_{\mathrm{Diag}_2})$. Hence, the composition $G_{\psi} \circ G_{\bar{\psi}}$ lies in $L^1_{loc}(\mathbb{R})$ for any $m > 0$. Therefore, there is no need to invoke a renormalization procedure and the proof of the theorem follows immediately taking into account Equation \eqref{eq1pr}.\\
		
		\noindent \underline{\emph{The Case $d \ge 2$: Step 1.}} The first step of the proof consists in showing that, under the hypotheses that the action of the map $\Gamma^W_{\cdot_\mathsf{Q}}$ is well-defined for any $u \in \mathcal{A}^W$ and that Equation \eqref{Eq: convolution G} is satisfied, all the requirements in the statement of the theorem hold true also when considering algebra elements of the form $G_{\psi} \circledast u$ and $G_{\bar{\psi}} \circledast u$. Specifically, we shall focus only on $G_{\psi} \circledast u$, the reasoning translating slavishly to the case $G_{\bar{\psi}} \circledast u$ -- see Equation \eqref{Eq: functional convolution}. Yet, $\Gamma^W_{\cdot_\mathsf{Q}}(G_{\psi} \circledast u)$ is completely specified by means of Equation \eqref{Eq: convolution G}, while Equation \eqref{Eq: finiteness scaling degree} is a direct consequence of Lemma \ref{Lem: stability}. Thus, we are left to prove the validity of the relations appearing in item 3. of the theorem. Observe that, for all $f \in \mathcal{D}(M)$, $(\eta, \bar{\eta}) \in \Gamma(DM\oplus D^\ast M)$ and for any $(\zeta, \bar{\zeta}) \in \Gamma(DM\oplus D^\ast M)$, it holds that
		\begin{flalign*}
			\delta_{\zeta} \circ \Gamma^W_{\cdot_{\mathsf{Q}}} (G_{\psi} \circledast u) (f; \eta, \bar{\eta}) &= \delta_{\zeta} \circ \Gamma^W_{\cdot_{\mathsf{Q}}} (u) (G_{\psi} \circledast f; \eta, \bar{\eta}) = \Gamma^W_{\cdot_{\mathsf{Q}}} (\delta_{\zeta} u) (G_{\psi} \circledast f; \eta, \bar{\eta}) \\ 
			&= \Gamma^W_{\cdot_{\mathsf{Q}}} (G_{\psi} \circledast  \delta_{\zeta} u) (f; \eta, \bar{\eta}) = \Gamma^W_{\cdot_{\mathsf{Q}}} (\delta_{\zeta} G_{\psi} \circledast  u) (f; \eta, \bar{\eta}) \\ 
			&= \Gamma^W_{\cdot_{\mathsf{Q}}} \circ \delta_{\zeta} (G_{\psi} \circledast  u) (f; \eta, \bar{\eta}), 
		\end{flalign*}
	A similar result can be devised for $ \delta_{\bar{\zeta}} \circ \Gamma^W_{\cdot_{\mathsf{Q}}}$. 
		\noindent For what concerns Equation \eqref{Eq: product smooth function}, its validity for elements of the form $G_{\psi} \circledast u$ descends from 
		\begin{flalign*}
			\varphi \Gamma^W_{\cdot_{\mathsf{Q}}} (G_{\psi} \circledast u) (f; \eta, \bar{\eta}) &= \varphi \Gamma^W_{\cdot_{\mathsf{Q}}} (u) (G_{\psi} \circledast f; \eta, \bar{\eta}) = \Gamma^W_{\cdot_{\mathsf{Q}}} (\varphi u) (G_{\psi} \circledast f; \eta, \bar{\eta}) \\
			&= \Gamma^W_{\cdot_{\mathsf{Q}}} (\varphi G_{\psi} \circledast u) (f; \eta, \bar{\eta}),
		\end{flalign*}
		After these preliminary observations, we are ready to discuss the inductive argument.\\
		\noindent \underline{\emph{Step 2: $(r, r') = (1,1)$}}. First of all, let us observe that Equation \eqref{Eq: identity} completely characterizes the restriction of the map $\Gamma^W_{\cdot_\mathsf{Q}}$ to the moduli $\mathcal{M}_{1,0}$ and $\mathcal{M}_{0,1}$ introduced in Remark \ref{Rem: M} and that all other properties hold true by direct inspection. \\
		\noindent Therefore, we shall now devote our attention to proving the validity of the aforementioned properties for $\mathcal{M}_{1,1}$. To this avail, we will define the action of the map $\Gamma^W_{\cdot_{\mathsf{Q}}}$ inductively with respect to the index $j$ appearing in the decomposition
		\begin{equation*}
			\mathcal{M}_{1,1} = \bigoplus_{j \in \mathbb{N}_0} \mathcal{M}^j_{1,1}, \, \, \text{with} \, \, \mathcal{M}^j_{1,1} := \mathcal{M}_{1,1} \cap \mathcal{A}^W_j, \, \forall j \in \mathbb{N}_0,
		\end{equation*}
		as discussed in Remark \ref{Rem: M}. Considering $j=0$, one has that 
		\begin{equation*}
			\mathcal{M}^0_{1,1} := \mathcal{M}_{1,1} \cap \mathcal{A}^W_0 = \text{span}_{\mathcal{E}({M; W})} \{1, \Phi^{\rho}, \bar{\Phi}_{\rho'}, \Phi^{\rho} \bar{\Phi}_{\rho'}, \bar{\Phi}_{\rho'} \Phi^{\rho}\},
		\end{equation*}
		where $W$ is defined in Equation \eqref{Eq: W space}.
		Thus, the remaining task consists in defining the action of $\Gamma^W_{\cdot_{\mathsf{Q}}}$ on the generators $\Phi^{\rho} \bar{\Phi}_{\rho'}$ and $\bar{\Phi}_{\rho'} \Phi^{\rho}$. Let us focus on the former, the proof for the latter following suit. From Equation \eqref{Eq: pointwise product} and Equation \eqref{Eq: deformed product} it descends that we can define formally
		\begin{flalign}
			\label{eq2proof}
			\Gamma^W_{\cdot_{\mathsf{Q}}} (\Phi^{\rho} \bar{\Phi}_{\rho'}) (f; \eta,\bar{\eta}) &:= [ \Gamma^W_{\cdot_{\mathsf{Q}}} (\Phi^{\rho}) \cdot_{\mathsf{Q}} \Gamma^W_{\cdot_{\mathsf{Q}}}(\bar{\Phi}_{\rho'})] (f; \eta,\bar{\eta}) \\ \notag &= (\Phi^{\rho} \bar{\Phi}_{\rho'}) (f; \eta,\bar{\eta}) + \underbrace{((G_{\psi})^{\rho}_{\rho''} \cdot (G_{\bar{\psi}})^{\rho''}_{\rho'})(f \otimes 1_{M})}_{(1)}, 
		\end{flalign}
		where $1_M$ is the constant function identically equal to $1$ on $M$, whilst $(G_{\psi})^{\rho}_{\rho''} \cdot (G_{\bar{\psi}})^{\rho''}_{\rho'}$ denotes the pointwise product of the fundamental solution of the Dirac operator and of that of its adjoint. For the sake of notational conciseness, the spinor indexes will be omitted in the following. The expression involving a product of kernels in Equation \eqref{eq2proof} is \emph{a priori} ill-defined due to the singular behaviour of $G_{\psi}$ and $G_{\bar{\psi}}$ at the coinciding point limit. Being $G_{\psi}, G_{\bar{\psi}} \in \mathcal{D}'(M; \Sigma_d \otimes \Sigma^*_d)$, from the H\"ormander criterion, see \cite[Thm. 8.2.10]{Hormander-I-03}, we can only assert that the product $G_{\psi} \cdot G_{\bar{\psi}} \in \mathcal{D'}(M \times M \setminus \mathrm{Diag}_2; \Sigma_d\otimes\Sigma^*_d)$. According to \cite[Rem. B.7]{Dappiaggi:2020gge} generalized to the scaling degree at an embedded submanifold, one can conclude that
		\begin{equation*}
			\text{\textrm{sd}}_{\mathrm{Diag}_2} (G_{\psi} \cdot G_{\bar{\psi}}) \le \text{\textrm{sd}}_{\mathrm{Diag}_2}(G_{\psi}) + \text{\textrm{sd}}_{\mathrm{Diag}_2}(G_{\bar{\psi}}) < \infty,
		\end{equation*}
		due to the finiteness of the scaling degree of each term on the right hand-side - see Equation \ref{Eq: scaling degree fundamental solutions}. Observe that, in this last chain of inequalities, we are considering scaling degrees for vector valued distributions as per Equation \eqref{Eq: vector scaling degree}, but, to lighten the notation, we omit indicating explicitly the counterpart of the vector space $V$.
		\noindent Hence, \cite[Thm. 6.9]{Brunetti-Fredenhagen-00} entails that there exists at least one extension of the vector-valued bi-distribution $G_{\psi} \cdot G_{\bar{\psi}}$, preserving its wavefront set and its scaling degree. Let us select one of such extensions, which we will denote by $\widetilde{G_{\psi} \cdot G_{\bar{\psi}}}$, and let us set 
		\begin{equation}\label{eq3proof}
			\Gamma^W_{\cdot_{\mathsf{Q}}} (\Phi \bar{\Phi}) (f; \eta,\bar{\eta}) := (\Phi \bar{\Phi}) (f; \eta,\bar{\eta}) + (\widetilde{G_{\psi} \cdot G_{\bar{\psi}}}) (f \otimes 1_M), \, \, \forall f \in \mathcal{D}(M), \forall (\eta,\bar{\eta}) \in \Gamma(DM\oplus D^*M).
		\end{equation}
      
		We outline that, taking into account the singular structure of $G_{\psi}$ and $G_{\bar{\psi}}$ investigated in Section \ref{Sec: wavefront set}, one can conclude that $\Gamma^W_{\cdot_{\mathsf{Q}}} (\Phi \bar{\Phi}) \in \mathcal{D}'_C(M; \text{Pol}_W)$. By linearity, $\Gamma^W_{\cdot_{\mathsf{Q}}}$ can be extended to the whole $ \mathcal{M}^0_{1,1}$. \\
		\noindent At this point, being $\Gamma^W_{\cdot_{\mathsf{Q}}}$ coherently defined on the modulus $\mathcal{M}_{1,1}^0$, let us assume that the action of this map is well-defined also on $\mathcal{M}_{1,1}^j$ with $j \ge 0$. On account of Remark \ref{Rem: M}, it is immediate to deduce that any $u \in \mathcal{M}^{j+1}_{1,1}$ can be realized as a linear combination of functionals of the following form: $G_{\psi} \circledast u'$ or $G_{\bar{\psi}} \circledast u'$ for some $u' \in \mathcal{M}^j_{1,1}$ or products $u := u_1 u_2$, where $u_1 \in \mathcal{M}^j_{1,0} \cup G_{\psi} \circledast \mathcal{M}^j_{1,0} \cup G_{\bar{\psi}} \circledast \mathcal{M}^j_{1,0}$ whilst $u_2 \in \mathcal{M}^j_{0,1} \cup G_{\psi} \circledast \mathcal{M}^j_{0,1} \cup G_{\bar{\psi}} \circledast \mathcal{M}^j_{0,1}$. In the first instance, namely if $u:= G_{\psi} \circledast u'$ or $u:= G_{\bar{\psi}} \circledast u'$ with $u' \in \mathcal{M}^j_{1,1}$, the map $\Gamma^W_{\cdot_{\mathsf{Q}}}$ can be defined as in \emph{Step 1.}, at the same time exploiting the inductive hypothesis. In the second case, Equation \eqref{eq1pr} - which is still to be read in a purely formal sense - entails that
		\begin{equation*}
			\Gamma^W_{\cdot_{\mathsf{Q}}} (u) (f; \eta, \bar{\eta}) = [\Gamma^W_{\cdot_{\mathsf{Q}}} (u_1) \cdot_{\mathsf{Q}} \Gamma^W_{\cdot_{\mathsf{Q}}} (u_2)] (f; \eta, \bar{\eta}) = [\Gamma^W_{\cdot_{\mathsf{Q}}} (u_1) \cdot \Gamma^W_{\cdot_{\mathsf{Q}}} (u_2)] (f; \eta, \bar{\eta}) + \mathfrak{Q}(f \otimes 1_3), 
		\end{equation*}
		where 
		\begin{equation}
			\label{illprod}
			\mathfrak{Q}(f \otimes 1_3) := [(\delta_{\mathrm{Diag}_2} \otimes Q) \cdot ( \Gamma^W_{\cdot_{\mathsf{Q}}} (u_1)^{(1,0)} \tilde{\otimes} \Gamma^W_{\cdot_{\mathsf{Q}}} (u_2)^{(0,1)})] (f \otimes 1_3),
		\end{equation}
	Notwithstanding, Equation \eqref{illprod} involves the product of singular vector-valued distributions and, hence, it needs to undergo a renormalization procedure. As a preliminary step, we outline that the microlocal behaviour of the vector-valued bi-distribution $Q$ highlighted in Remark \ref{Rem: All over Q} combined with the inductive hypothesis, \textit{i.e.},
		\begin{equation}\label{Eq: WF estimate}
			\mathrm{WF}(\Gamma^W_{\cdot_{\mathsf{Q}}}(u_1)^{(1,0)}) \cup \mathrm{WF}(\Gamma^W_{\cdot_{\mathsf{Q}}}(u_2)^{(0,1)}) \subseteq C_2 = \mathrm{WF}(\delta_{\mathrm{Diag}_2}) = \mathrm{WF}(Q),
		\end{equation}
		implies that
		\begin{equation*}
			\mathfrak{Q} \in \mathcal{D}'(M^4 \setminus \mathrm{Diag}^{big}_4; W), 
		\end{equation*}
		being
		\begin{equation*}
			\mathrm{Diag}^{big}_4 := \{(x_1, ..., x_4) \in M^4 \, | \, \exists i,j \in \{1,2,3,4\}, x_i = x_j\}.
		\end{equation*}
		It is worth noting that Equation \eqref{Eq: WF estimate} can be further improved by observing that, for any $x \in \mathrm{Diag}^{big}_4 \setminus \mathrm{Diag}_4$, exactly one of the factors among $\delta_{\mathrm{Diag}_2} \otimes Q$, $\Gamma^W_{\cdot_{\mathsf{Q}}}(u_1)^{(1,0)}$ and $\Gamma^W_{\cdot_{\mathsf{Q}}}(u_2)^{(1,0)}$ is generated by a smooth function, whereas the product of the others is well-defined. This implies that $\mathfrak{Q} \in \mathcal{D}'(M^4 \setminus \mathrm{Diag}_4; W)$, hence it suffices to extend it to the thin diagonal of $M^4$. Moreover, for what concerns the scaling degree of $\mathfrak{Q}$, \cite[Rem. B.7]{Dappiaggi:2020gge} guarantees that 
		\begin{flalign*}
			\text{\textrm{sd}}_{\mathrm{Diag}_4} (\mathfrak{Q}) &\le \text{\textrm{sd}}_{\mathrm{Diag}_4} (\delta_{\mathrm{Diag}_2} \otimes Q) + \text{\textrm{sd}}_{\mathrm{Diag}_4}(\Gamma^W_{\cdot_{\mathsf{Q}}}(u_1)^{(1,0)} \otimes \Gamma^W_{\cdot_{\mathsf{Q}}}(u_2)^{(0,1)}) \\
			&\le \text{\textrm{sd}}_{\mathrm{Diag}_2} (\delta_{\mathrm{Diag}_2}) + \text{\textrm{sd}}_{\mathrm{Diag}_2} (Q) + \text{\textrm{sd}}_{\mathrm{Diag}_2}(\Gamma^W_{\cdot_{\mathsf{Q}}}(u_1)^{(1,0)}) + \text{\textrm{sd}}_{\mathrm{Diag}_2}(\Gamma^W_{\cdot_{\mathsf{Q}}}(u_2)^{(0,1)}) < \infty.
		\end{flalign*}
		being each of the addend on the right hand-side of the above estimate finite. Thus, applying again \cite[Thm. 6.9]{Brunetti-Fredenhagen-00}, we conclude the existence of at least one extension of $\mathfrak{Q}$ preserving both the scaling degree and the wavefront set. Choosing one among such extensions - denoted by $\widetilde{\mathfrak{Q}}$ - we can set eventually 
		\begin{equation*}
			\Gamma^W_{\cdot_{\mathsf{Q}}} (u) (f; \eta, \bar{\eta}) = [\Gamma^W_{\cdot_{\mathsf{Q}}}(u_1) \Gamma^W_{\cdot_{\mathsf{Q}}}(u_2)](f; \eta, \bar{\eta}) + \widetilde{\mathfrak{Q}}(f \otimes 1_3),
		\end{equation*}
		for all $f \in \mathcal{D}(M)$ and for any field configurations $(\eta, \bar{\eta}) \in \Gamma(DM\oplus D^*M)$. By direct inspection, it can be seen that the previous definition satisfies all the requirements of the theorem. Therefore, this concludes the proof in the case $(r,r') = (1,1)$. \\
		\noindent \underline{\emph{Step 3: Generic $(r,r')$.}} The remaining task consists in proving the inductive step with respect to the indexes $(r,r')$. Specifically, let us suppose that the map $\Gamma^W_{\cdot_{\mathsf{Q}}}$ has been coherently assigned on $\mathcal{M}_{r,r'}$ and let us show that it can be defined consistently on $\mathcal{M}_{r+1,r'}$. We outline that one should prove the sought result also for $\mathcal{M}_{r,r'+1}$. However, since the proof of this last case mirrors that of the former, we omit it. \\
		\noindent In the spirit of the proof of the case $(r,r')=(1,1)$ we consider the following decomposition 
		\begin{equation*}
			\mathcal{M}_{r+1, r'} = \bigoplus_{j \in \mathbb{N}_0} \mathcal{M}^j_{r+1,r'} \, \, \text{with} \, \, \mathcal{M}^j_{r+1,r'} := \mathcal{M}_{r+1, r'} \cap \mathcal{A}^W_j, \, \forall j \in \mathbb{N}_0,
		\end{equation*}
		and we work inductively over the index $j \ge 0$. For $j=0$, one has that 
		\begin{equation*}
			\mathcal{M}^0_{r+1,r'} := \text{span}_{\mathcal{E}({M; W})} \{1, \Phi, \bar{\Phi}, ..., \Phi^{r+1} \bar{\Phi}^{r'}, \bar{\Phi}^{r'} \Phi^{r+1}\}.
		\end{equation*}
		The inductive hypothesis entails that the action of $\Gamma^W_{\cdot_{\mathsf{Q}}}$ has been already established on all the generators of  $\mathcal{M}^0_{r+1,r'}$, except for $\Phi^{r+1} \bar{\Phi}^{r'}$ and $\bar{\Phi}^{r'} \Phi^{r+1}$. Once the action of the map $\Gamma^W_{\cdot_{\mathsf{Q}}}$ has been defined on these terms, it can be extended by linearity, thus obtaining the sought after result. \\
		\noindent Plugging into Equation \eqref{eq1pr} the explicit form of Equation \eqref{Eq: deformed product}, one obtains the following
		\begin{equation*}
			[\Gamma^W_{\cdot_{\mathsf{Q}}} (\Phi^{r+1} \bar{\Phi}^{r'})] (f; \eta, \bar{\eta}) = \sum_{k=0}^{\min \{r+1,r'\}} k! \begin{pmatrix}
				r+1 \\
				k
			\end{pmatrix} \begin{pmatrix}
				r'\\
				k
			\end{pmatrix} [Q_{2k} \cdot (\Gamma^W_{\cdot_{\mathsf{Q}}} (\Phi)^{r+1-k} \Gamma^W_{\cdot_{\mathsf{Q}}} (\bar{\Phi})^{r'-k})](f; \eta, \bar{\eta}), 
		\end{equation*}
		where 
		\begin{equation}
			\label{Q2k}
			Q_{2k}(f) := (G_{\psi} G_{\bar{\psi}})^{\otimes k} \cdot (\delta_{\mathrm{Diag}_k} \otimes 1_k) (f \otimes 1_{2k-1}), \, \forall f \in \mathcal{D}(M).
		\end{equation}
		Reasoning as in \emph{Step 2} of this proof, one can conclude that, despite being \textit{a priori} ill-defined, the bi-distribution $G_{\psi} G_{\bar{\psi}} \in \mathcal{D}'(M^2 \setminus \mathrm{Diag}_2; W)$ can be extended to the whole $M\times M$ in such a way to preserve the scaling degree. Let us choose one among such admissible extensions and let us denote it by $\widetilde{G_{\psi} G_{\bar{\psi}}}$. Thus, replacing $G_{\psi} G_{\bar{\psi}}$ with $\widetilde{G_{\psi} G_{\bar{\psi}}}$ in Equation \eqref{Q2k}, one obtains 
		\begin{equation*}
			\widetilde{Q_{2k}}(f) := (\widetilde{G_{\psi} G_{\bar{\psi}}})^{\otimes k} \cdot (\delta_{\mathrm{Diag}_k} \otimes 1_k) (f \otimes 1_{2k-1}), \, \forall f \in \mathcal{D}(M),
		\end{equation*}
		which is the renormalized version of $Q_{2k}$. 
		Therefore, for any $f\in\mathcal{D}(M)$ and for any $\eta, \bar{\eta}\in\Gamma(DM\oplus D^*M)$, we can set 
		\begin{equation*}
			[\Gamma^W_{\cdot_{\mathsf{Q}}} (\Phi^{r+1} \bar{\Phi}^{r'})] (f; \eta, \bar{\eta}) = \sum_{k=0}^{\min \{r+1,r'\}} k! \begin{pmatrix}
				r+1 \\
				k
			\end{pmatrix} \begin{pmatrix}
				r'\\
				k
			\end{pmatrix} [\widetilde{Q}_{2k} \cdot (\Gamma^W_{\cdot_{\mathsf{Q}}} (\Phi)^{r+1-k} \Gamma^W_{\cdot_{\mathsf{Q}}} (\bar{\Phi})^{r'-k})](f; \eta, \bar{\eta}). 
		\end{equation*}
		We note that, thanks to the renormalization of $Q_{2k}$, this expression contains only products of distributions generated by smooth functions. Hence, it is well-defined. In addition, it is manifest that it satisfies all the requirements in the statement of the theorem. \\
		\noindent We are left with one last task, namely proving through an inductive argument on the index $j$ that the map $\Gamma^W_{\cdot_{\mathsf{Q}}}$ can be coherently constructed on $\mathcal{M}^{j+1}_{r+1,r'}$, once it has been assigned on $\mathcal{M}^j_{r+1,r'}$. Given $u \in \mathcal{M}^{j+1}_{r+1,r'}$, similarly to \emph{Step 2.} of this proof, the linearity of $\Gamma^W_{\cdot_{\mathsf{Q}}}$ entails that it suffices to work with elements either of the form $u = G_{\psi} \circledast u'$ - or, similarly, $u = G_{\bar{\psi}} \circledast u'$ - with $u' \in \mathcal{M}^{j}_{r+1,r'}$ or of the form 
		\begin{equation}
			\label{eq4proof}
			u = u_1 ... u_l, \, \, \text{with} \, \, u_i \in \mathcal{M}^j_{r_i, r'_i} \cup G_{\psi} \circledast \mathcal{M}^j_{r_i, r'_i} \cup G_{\bar{\psi}} \circledast \mathcal{M}^j_{r_i, r'_i}, \, i \in \{1,...,l\}, l \in \mathbb{N},
		\end{equation}
		where $r_i, r'_i \in \mathbb{N}$ with $\sum_{i=1}^l r_i = r+1$, $\sum_{i=1}^l r'_i = r'$, for any $l \in \mathbb{N}$. \\
		\noindent In the first case, the sought conclusion can be drawn exploiting the inductive hypothesis combined with \emph{Step 1.} of this proof. Therefore, we can devote our attention to studying the case in which $u$ is realized as in Equation \eqref{eq4proof}. The inductive procedure entails that the map $\Gamma^W_{\cdot_{\mathsf{Q}}}$ is well-defined on each of the factors $u_i$, $\forall i \in \{1, ..., l\}$. Using once more Equation \eqref{eq1pr}, one can formally write
		\begin{flalign}
			\label{eq5proof}
			\Gamma^W_{\cdot_{\cdot_{\mathsf{Q}}}}(u) (f; \eta, \bar{\eta}) &= \sum_{N, \widetilde{N}\ge 0} \sum_{\substack{N_1 + ... + N_l = N+ \widetilde{N} \\ \widetilde{N}_1 + ... +\widetilde{N}_l = N + \widetilde{N}}} \mathsf{F} (N_1, ..., N_l; M_1, ..., M_l) \cdot [(\delta_{\mathrm{Diag}_l} \otimes Q^{\otimes N} \otimes \widetilde{Q}^{\otimes \widetilde{N}}) \\ \notag
			& \left( (\Gamma^W_{\cdot_{\mathsf{Q}}}(u_1))^{(N_1, \widetilde{N}_1)} \tilde{\otimes} ... \tilde{\otimes} (\Gamma^W_{\cdot_{\mathsf{Q}}}(u_l))^{(N_l, \widetilde{N}_l)} \right) ] (f \otimes 1_{l-1+2N+2M}; \eta, \bar{\eta}),     
		\end{flalign}
		for all $f \in \mathcal{D}(M)$, $(\eta, \bar{\eta}) \in \Gamma(DM\oplus D^\ast M)$, where $\mathsf{F} (N_1, ..., N_l; M_1, ..., M_l) \in \mathbb{C}$ are multiplicative coefficients emerging from the underlying combinatorics. It is worth highlighting that, working with polynomial functional-valued vector distributions, only a finite number of terms contributes to the sums on the right hand-side of Equation \eqref{eq5proof}. Let us define 
		\begin{equation*}
			\mathfrak{Q}_{N, \widetilde{N}} := [(\delta_{\mathrm{Diag}_l} \otimes Q^{\otimes N} \otimes \widetilde{Q}^{\otimes \widetilde{N}}) \left( (\Gamma^W_{\cdot_{\mathsf{Q}}}(u_1))^{(N_1, \widetilde{N}_1)} \tilde{\otimes} ... \tilde{\otimes} (\Gamma^W_{\cdot_{\mathsf{Q}}}(u_l))^{(N_l, \widetilde{N}_l)} \right)].
		\end{equation*}
		As a consequence, \cite[Thm. 8.2.9]{Hormander-I-03} implies that 
		\begin{flalign*}
			\mathrm{WF} (\delta_{\mathrm{Diag}_l} \otimes Q^{\otimes N} \otimes \widetilde{Q}^{\otimes \widetilde{N}}) &\subseteq \{ (\widetilde{x}_l, \widetilde{z}_{2(N + \widetilde{N})}; \widetilde{k}_l, \widetilde{q}_{2(N + \widetilde{N})} ) \in \mathring{T}^*(M)^{l + 2(N + \widetilde{N})} \, | \, \\ 
			& (\widetilde{x}_l, \widetilde{k}_l) \in \mathrm{WF}(\delta_{\mathrm{Diag}_l}), (\widetilde{z}_{2(N + \widetilde{N})}, \widetilde{q}_{2(N + \widetilde{N})}) \in \mathrm{WF}(Q^{\otimes N} \otimes \widetilde{Q}^{\otimes \widetilde{N}}) \}, 
		\end{flalign*}
		while by the inductive hypothesis, one has that $\mathrm{WF}(\Gamma^W_{\cdot_{\mathsf{Q}}}(u_i))^{(N_i, \widetilde{N}_i)} \subseteq C_{N_i + \widetilde{N}_i +1}$ for all $i \in \{1, ..., l\}$, where the the set $C_{N_i + \widetilde{N}_i +1}$ are in agreement with Equation \eqref{Eq: C_m}. Hence, on the account of the H\"ormander criterion, we can assert that 
		\begin{equation*}
			\mathfrak{Q}_{N, \widetilde{N}} \in \mathcal{D}'(M^{l + 2(N+ \widetilde{N})} \setminus \mathrm{Diag}^{big}_{l + 2(N + \widetilde{N})};W),
		\end{equation*}
		where 
		\begin{equation*}
			\mathrm{Diag}^{big}_{l + 2(N + \widetilde{N})} := \{ (x_1, ..., x_{l + 2(N + \widetilde{N})}) \in M^{l + 2(N + \widetilde{N})} \, | \, \exists i,j \in \{1, ..., l + 2(N + \widetilde{N})\}, x_i = x_j\}. 
		\end{equation*}
		Hence
		\begin{flalign*}
			\mathrm{WF} (\mathfrak{Q}_{N, \widetilde{N}}) &= \{ (\widetilde{x}_l, \widetilde{z}_{2(N + \widetilde{N})}; \widetilde{k}_l + \widetilde{k'}_l, \widetilde{q}_{N_1 + \widetilde{N}_1} + \widetilde{q'}_{N_1 + \widetilde{N}_1}, ..., \widetilde{q}_{N_l + \widetilde{N}_l} + \widetilde{q'}_{N_l + \widetilde{N}_l}) \in \\ & T^*(M)^{l + 2(N + \widetilde{N})}\setminus\{0\} \, | \, (\widetilde{x}_l, \widetilde{k}_l) \in \mathrm{WF}(\delta_{\mathrm{Diag}_l}),
			(\widetilde{z}_{2(N + \widetilde{N})}, \widetilde{q}_{2(N + \widetilde{N})}) \in \mathrm{WF}(Q^{\otimes N} \otimes \widetilde{Q}^{\otimes \widetilde{N}}), \\ &(x_i, \widetilde{z}_{N_i, \widetilde{N}_i)}, \widetilde{q'}_{N_i + \widetilde{N}_i)}) \in C_{1+N_i +\widetilde{N}_i}, \, \forall i \in \{1,...,l\}\}. 
		\end{flalign*}
		By a generalization of the strategy discussed in \emph{Step 2.} of this proof, the above estimate for the wavefront set of $\mathfrak{Q}_{N, \widetilde{N}}$ can be additionally improved. Consider a partition $\{I,J\}$ of $\{1, ..., l + N + \widetilde{N}\}$ into two disjoint subsets such that if $\{x_1, ..., x_{l +2(N + \widetilde{N})}\} = \{\widetilde{x}_I, \widetilde{x}_J\}$, then  $x_i \ne x_j$ for all $x_i \in \widetilde{x}_I$ and $x_j \in \widetilde{x}_J$. Hence, we can write 
		\begin{equation*}
			\mathfrak{Q}_{N, \widetilde{N}}(x_1, ..., x_{l +2(N + \widetilde{N})}) = K_{N, \widetilde{N}}^I (\widetilde{x}_I) S_{N, \widetilde{N}} (\widetilde{x}_I, \widetilde{x}_J) K_{N, \widetilde{N}}^J (\widetilde{x}_J),  
		\end{equation*}
		where $S_{N, \widetilde{N}} (\widetilde{x}_I, \widetilde{x}_J)$ is smooth on the aforementioned partition, whilst $K_{N, \widetilde{N}}^I$ and $K_{N, \widetilde{N}}^J$ are kernels of vector-valued distributions in the definition of $\Gamma^{W}_{\cdot_{\mathsf{Q}}}$ on $\mathcal{M}^n_{(k,k')}$ for every $k < r+1$, $k' < r'+1$ and $n < j+1$. Taking into account also the inductive hypothesis, it descends that the product $(K_{N, \widetilde{N}}^I \otimes K_{N, \widetilde{N}}^J) \cdot S_{N, \widetilde{N}}$ is well-defined. Being the above argument independent from the choice of the partition $\{I,J\}$, one has that $\mathfrak{Q}_{N, \widetilde{N}} \in \mathcal{D}'(M^{l + 2(N+ \widetilde{N})} \setminus \mathrm{Diag}_{l + 2(N + \widetilde{N})}; W)$. Eventually, we observe that 
		\begin{flalign*}
			\text{\textrm{sd}}_{\mathrm{Diag}_{l + 2(N + \widetilde{N})}} (\mathfrak{Q}_{N, \widetilde{N}}) &\le \text{\textrm{sd}}_{\mathrm{Diag}_{l + 2(N + \widetilde{N})}} (\delta_{\mathrm{Diag}_l} \otimes Q^{\otimes N} \otimes \widetilde{Q}^{\otimes \widetilde{N}}) \\ &+ \sum_{i=1}^l \text{\textrm{sd}}_{\mathrm{Diag}_{l + 2(N + \widetilde{N})}} ((\Gamma^W_{\cdot_{\mathsf{Q}}})^{(N_i, \widetilde{N}_i)}) < \infty.
		\end{flalign*}
		Thus, invoking once again \cite[Thm. 6.9]{Brunetti-Fredenhagen-00} there exists at least one extension of the singular vector-valued bi-distribution $\mathfrak{Q}_{N, \widetilde{N}}$, preserving both its wavefront set and its scaling degree. Denoting by $\widetilde{\mathfrak{Q}}_{N, \widetilde{N}} \in \mathcal{D}'(M^{l+2(N+\widetilde{N})};W)$ one among such extensions, we are able to set
		\begin{equation}
			\label{eq6proof}
			\Gamma^W_{\cdot_{\mathsf{Q}}}(u) (f; \eta, \bar{\eta}) = \sum_{N, \widetilde{N} \ge 0} \sum_{\substack{N_1 + ... + N_l = N+ \widetilde{N} \\ \widetilde{N}_1 + ... +\widetilde{N}_l = N + \widetilde{N}}} \mathsf{F} (N_1, ..., N_l; M_1, ..., M_l) \cdot \widetilde{\mathfrak{Q}}_{N, \widetilde{N}}. 
		\end{equation}
		As in the proof of \cite[Thm. 3.4]{BDR23}, it can be shown that Equation \eqref{eq6proof} satisfies all the defining properties required in the statement of this theorem. The thesis is thus proven. 
\end{proof}

The deformation map $\Gamma_{\cdot_{\mathsf{Q}}}^W$, whose existence has been proven in Proposition \ref{Prop: deformation map}, is a fundamental tool to encode the information on the underlying white noise as a suitable deformation of the algebra $\mathcal{A}^W$ introduced in Definition \ref{Def: A}.
\begin{theorem}\label{Thm: local algebra}
	Let $\Gamma_{\cdot_{\mathsf{Q}}}^W:\mathcal{A}^W\rightarrow \mathcal{D}'_C(M;\mathsf{Pol}_W)$ be the deformation map as per Proposition \ref{Prop: deformation map} and let $\mathcal{A}_{\cdot_{\mathsf{Q}}}^W:=\Gamma_{\cdot_{\mathsf{Q}}}^W(\mathcal{A}^W)$ be such that, for all $u_1,u_2\in\mathcal{A}^W_{\cdot_{\mathsf{Q}}}$,
	\begin{equation}
		u_1\cdot_{\Gamma_{\cdot_{\mathsf{Q}}}^W}u_2:=\Gamma_{\cdot_{\mathsf{Q}}}^W[{(\Gamma_{\cdot_{\mathsf{Q}}}^{W})}^{-1}(u_1) {(\Gamma_{\cdot_{\mathsf{Q}}}^{W})}^{-1}(u_2)].
	\end{equation}
Then, $(\mathcal{A}_{\cdot_{\mathsf{Q}}}^W,\cdot_{\Gamma_{\cdot_{\mathsf{Q}}}^W})$ is a unital, associative algebra over W.
\end{theorem}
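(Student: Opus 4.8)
The plan is to exhibit $\Gamma_{\cdot_{\mathsf{Q}}}^W$ as an algebra isomorphism from $(\mathcal{A}^W,\cdot)$, the unital, associative $\mathbb{C}$-algebra of Definition \ref{Def: A} whose product is the pointwise product \eqref{Eq: pointwise product} (well defined on $\mathcal{A}^W$ by Theorem \ref{Thm: inclusion A}), onto $\mathcal{A}_{\cdot_{\mathsf{Q}}}^W=\Gamma_{\cdot_{\mathsf{Q}}}^W(\mathcal{A}^W)$ endowed with $\cdot_{\Gamma_{\cdot_{\mathsf{Q}}}^W}$; all the algebraic properties in the statement then follow by transport of structure. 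Indeed, the defining relation \eqref{eq1pr} reads precisely $\Gamma_{\cdot_{\mathsf{Q}}}^W(u_1 u_2)=\Gamma_{\cdot_{\mathsf{Q}}}^W(u_1)\cdot_{\Gamma_{\cdot_{\mathsf{Q}}}^W}\Gamma_{\cdot_{\mathsf{Q}}}^W(u_2)$ for all $u_1,u_2\in\mathcal{A}^W$, so that $\cdot_{\Gamma_{\cdot_{\mathsf{Q}}}^W}$ is by construction the push-forward of $\cdot$ along $\Gamma_{\cdot_{\mathsf{Q}}}^W$, and it lands in $\mathcal{A}_{\cdot_{\mathsf{Q}}}^W$ simply because $\mathcal{A}^W$ is closed under $\cdot$. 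Hence the whole substance reduces to one point: $\Gamma_{\cdot_{\mathsf{Q}}}^W$ must be injective, for only then is $(\Gamma_{\cdot_{\mathsf{Q}}}^W)^{-1}\colon\mathcal{A}_{\cdot_{\mathsf{Q}}}^W\to\mathcal{A}^W$ an unambiguous linear map, making the product $u_1\cdot_{\Gamma_{\cdot_{\mathsf{Q}}}^W}u_2:=\Gamma_{\cdot_{\mathsf{Q}}}^W\big[(\Gamma_{\cdot_{\mathsf{Q}}}^W)^{-1}(u_1)\,(\Gamma_{\cdot_{\mathsf{Q}}}^W)^{-1}(u_2)\big]$ well posed.

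For the injectivity I would filter $\mathcal{A}^W$ by the overall polynomial degree in $\psi$ and $\bar{\psi}$, using the submodules $\mathcal{M}_{r,r'}$ of Remark \ref{Rem: M}. A glance at Definition \ref{Def: deformed product} shows that the $k=k_1+k_2=0$ summand of the deformed product \eqref{Eq: deformed product} is exactly the pointwise product $u_1 u_2$, whereas every summand with $k\ge 1$ carries $k$ contractions, each of which pairs a functional derivative of $u_1$ with one of $u_2$ via $Q$ or $\widetilde{Q}$ and so lowers the total polynomial degree. Combining this with property \eqref{Eq: identity} — which says that $\Gamma_{\cdot_{\mathsf{Q}}}^W$ restricts to the identity on $\mathcal{M}_{1,0}$ and $\mathcal{M}_{0,1}$, hence on $\mathbf{1},\Phi^\lambda,\bar{\Phi}_{\lambda'}$ — with the intertwining relations \eqref{Eq: convolution G} and with \eqref{eq1pr}, the very induction over $(r,r')$ and over $j$ carried out in the proof of Theorem \ref{Prop: deformation map} yields $\Gamma_{\cdot_{\mathsf{Q}}}^W(u)=u+R(u)$ for every $u\in\mathcal{A}^W$, with $R(u)$ a finite sum of functional-valued distributions of strictly lower total polynomial degree than $u$. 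Thus $\Gamma_{\cdot_{\mathsf{Q}}}^W$ preserves the top-degree component of each element: if $u\neq 0$ and $n$ is its top degree, then the degree-$n$ part of $\Gamma_{\cdot_{\mathsf{Q}}}^W(u)$ equals the degree-$n$ part of $u$, which is non-zero, so $\Gamma_{\cdot_{\mathsf{Q}}}^W(u)\neq 0$. Equivalently, $\Gamma_{\cdot_{\mathsf{Q}}}^W=\mathrm{id}+N$ with $N$ degree-lowering and nilpotent on each finite-degree layer, so $\Gamma_{\cdot_{\mathsf{Q}}}^W$ is a linear bijection onto $\mathcal{A}_{\cdot_{\mathsf{Q}}}^W$, its inverse being given layer by layer by the finite Neumann series $\sum_{m\ge 0}(-N)^m$. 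In particular $\mathcal{A}_{\cdot_{\mathsf{Q}}}^W$ is a $\mathbb{C}$-linear subspace of $\mathcal{D}'_C(M;\mathsf{Pol}_W)$.

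With injectivity in hand, the rest is formal. For $u_1,u_2,u_3\in\mathcal{A}_{\cdot_{\mathsf{Q}}}^W$ set $v_i:=(\Gamma_{\cdot_{\mathsf{Q}}}^W)^{-1}(u_i)$; using $(\Gamma_{\cdot_{\mathsf{Q}}}^W)^{-1}\circ\Gamma_{\cdot_{\mathsf{Q}}}^W=\mathrm{id}$ one gets $(u_1\cdot_{\Gamma_{\cdot_{\mathsf{Q}}}^W}u_2)\cdot_{\Gamma_{\cdot_{\mathsf{Q}}}^W}u_3=\Gamma_{\cdot_{\mathsf{Q}}}^W[(v_1v_2)v_3]=\Gamma_{\cdot_{\mathsf{Q}}}^W[v_1(v_2v_3)]=u_1\cdot_{\Gamma_{\cdot_{\mathsf{Q}}}^W}(u_2\cdot_{\Gamma_{\cdot_{\mathsf{Q}}}^W}u_3)$, the central equality being associativity of the pointwise product on $\mathcal{A}^W$; bilinearity over $\mathbb{C}$ follows from linearity of $\Gamma_{\cdot_{\mathsf{Q}}}^W$ and of its inverse together with bilinearity of $\cdot$. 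Since $\mathbf{1}\in\mathcal{M}_{0,0}\subset\mathcal{M}_{1,0}$, Equation \eqref{Eq: identity} gives $\Gamma_{\cdot_{\mathsf{Q}}}^W(\mathbf{1})=\mathbf{1}$ and then $\mathbf{1}\cdot_{\Gamma_{\cdot_{\mathsf{Q}}}^W}u=\Gamma_{\cdot_{\mathsf{Q}}}^W[\mathbf{1}\,(\Gamma_{\cdot_{\mathsf{Q}}}^W)^{-1}(u)]=u$, so $\mathbf{1}$ is the unit; finally, compatibility with the $\mathcal{E}(M;W)$-module structure — the sense in which the statement speaks of an algebra ``over $W$'' — is inherited from property \eqref{Eq: product smooth function}, which yields $(\Gamma_{\cdot_{\mathsf{Q}}}^W)^{-1}(\varphi u)=\varphi\,(\Gamma_{\cdot_{\mathsf{Q}}}^W)^{-1}(u)$ and hence $(\varphi u_1)\cdot_{\Gamma_{\cdot_{\mathsf{Q}}}^W}u_2=\varphi\,(u_1\cdot_{\Gamma_{\cdot_{\mathsf{Q}}}^W}u_2)$ for $\varphi\in C^\infty(M;W)$. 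The one genuinely non-formal ingredient, and thus the main obstacle, is establishing the triangular form $\Gamma_{\cdot_{\mathsf{Q}}}^W=\mathrm{id}+(\text{strictly lower polynomial degree})$ uniformly over $\mathcal{A}^W$; this is essentially the bookkeeping already performed in the proof of Theorem \ref{Prop: deformation map}, and the whole argument runs in parallel with the scalar and complex-valued cases treated in \cite{BDR23}.
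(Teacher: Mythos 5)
Your proof is correct, and the core algebraic scaffolding (transport of structure along an injective map, unit $\mathbf 1$ via \eqref{Eq: identity}, associativity inherited from the pointwise product) matches the paper's proof. The genuine difference lies in how you establish injectivity, which is indeed the only non-formal step. You argue directly that $\Gamma_{\cdot_{\mathsf{Q}}}^W=\mathrm{id}+N$ with $N$ strictly lowering the polynomial degree in $\Phi,\bar\Phi$ (and annihilating $\mathcal{E}(M;W)\cdot\mathbf 1$), so that $\Gamma_{\cdot_{\mathsf{Q}}}^W$ is unipotent with explicit inverse given by a finite Neumann series on each filtration layer; this makes bijectivity onto the image manifest and gives the inverse constructively. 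The paper instead argues by contradiction: assuming a non-zero $u\in\ker\Gamma_{\cdot_{\mathsf{Q}}}^W$ of polynomial degree $(k,k')$, it takes the top-order functional derivative to reduce $u$ to a multiple $f_{(\xi,\bar\xi)}\mathbf 1$ of the unit, then uses the intertwining relations \eqref{Eq: deformation derivatives} to place that multiple in the kernel, contradicting \eqref{Eq: identity} and \eqref{Eq: product smooth function} which force $\Gamma_{\cdot_{\mathsf{Q}}}^W$ to act as the identity on $\mathcal{E}(M;W)\cdot\mathbf 1$. Both routes exploit the same structural facts (the graded decomposition of Remark \ref{Rem: M}, the identity on generators, and the fact that renormalization only affects the lower-degree contractions), but yours packages them as an explicit triangularity statement, while the paper's is a leaner differentiation argument requiring no explicit description of the inverse. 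One small caveat in your write-up: Equation \eqref{eq1pr} in the paper relates $\Gamma^W_{\cdot_{\mathsf{Q}}}$ to the a priori ill-defined product $\cdot_{\mathsf{Q}}$, not to the renormalized product $\cdot_{\Gamma^W_{\cdot_{\mathsf{Q}}}}$ of the theorem; the identity $\Gamma^W_{\cdot_{\mathsf{Q}}}(u_1u_2)=\Gamma^W_{\cdot_{\mathsf{Q}}}(u_1)\cdot_{\Gamma^W_{\cdot_{\mathsf{Q}}}}\Gamma^W_{\cdot_{\mathsf{Q}}}(u_2)$ is rather an immediate consequence of the theorem's own definition of $\cdot_{\Gamma^W_{\cdot_{\mathsf{Q}}}}$ once injectivity is known, so the logic should run in that order. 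Your additional observation about $\mathcal{E}(M;W)$-linearity via \eqref{Eq: product smooth function} is a welcome refinement not spelled out in the paper.
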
 
\begin{proof}
	As a preliminary step, let us observe that the map 	$\Gamma_{\cdot_{\mathsf{Q}}}^W:\mathcal{A}^W\rightarrow \mathcal{D}'_C(M;\mathsf{Pol}_W)$ of Proposition \ref{Prop: deformation map} is injective, namely that $\ker(\Gamma_{\cdot_{\mathsf{Q}}}^W)=\emptyset$. Let us consider $u\in\ker(\Gamma_{\cdot_{\mathsf{Q}}}^W)\setminus\{0\}$ a polynomial functional-valued vector distribution of degree $(k,k')\in\mathbb{N}_0\times\mathbb{N}_0$. 
	
	This entails that $\delta^{(k,k')}_{(\xi,\bar{\xi})}u\neq 0$, while $\delta^{(k+1,k')}_{(\xi,\bar{\xi})}u=\delta^{(k,k'+1)}_{(\xi,\bar{\xi})}u=0$ for all $(\xi,\bar{\xi})\in\Gamma(DM\oplus D^\ast M)$. Equation \eqref{Eq: identity} entails that $\mathbf{1}\notin\ker(\Gamma_{\cdot_{\mathsf{Q}}}^W)$, hence $k,k'>0$. Thus, for all $(\xi,\bar{\xi})\in\Gamma(DM\oplus D^\ast M)$, there must exist $0\neq f_{(\xi,\bar{\xi})}\in\mathcal{E}(M^{k+k'+1})$ such that
	\begin{equation}
		\delta^{(k,k')}_{(\xi,\bar{\xi})}u=f_{(\xi,\bar{\xi})}\mathbf{1}\,.
	\end{equation}
Furthermore, Equation \eqref{Eq: deformation derivatives} implies that if $u\in\ker(\Gamma_{\cdot_{\mathsf{Q}}}^W)$, then also $\delta_{(\xi,\bar{\xi})}u\in\ker(\Gamma_{\cdot_{\mathsf{Q}}}^W)$. We can infer that $f_{(\xi,\bar{\xi})}\mathbf{1}\in\ker(\Gamma_{\cdot_{\mathsf{Q}}}^W)$ and, as a consequence, that $f_{(\xi,\bar{\xi})}=0$, leading to a contradiction. Injectivity of $\Gamma_{\cdot_{\mathsf{Q}}}^W$ entails the
well-definiteness of the product $\cdot_{\Gamma_{\cdot_{\mathsf{Q}}}^W}$. The algebra is unital since $u\cdot_{\Gamma_{\cdot_{\mathsf{Q}}}^W}\mathbf{1}=\mathbf{1}\cdot_{\Gamma_{\cdot_{\mathsf{Q}}^W}} u=u$ for 
all $u\in\mathcal{A}_{\cdot_{\mathsf{Q}}}^W$, while associativity is a direct consequence of the defining properties of $\mathcal{A}^W$, see \cite{Dappiaggi:2020gge}. 
\end{proof}
 In Proposition \ref{Prop: deformation map} we proved existence of a deformation map with the desired properties. In its proof we faced the problem of renormalization of a priori ill-defined structures, dealing with it via microlocal techniques, which are often used in the algebraic approach of quantum field theory. Most notably, this procedure allows to account for renormalization ambiguities directly at the level of the algebra deformation, avoiding in particular the choice of a  regularization scheme. Yet, in the proof of Proposition \ref{Prop: deformation map} we highlighted how the extension of singular distributions over their singular support may be non-unique. This arbitrariness reflects at the level of the deformation map as follows.
 \begin{theorem}\label{Thm: uniqueness}
 	Given two linear maps $\Gamma_{\cdot_{\mathsf{Q}}}^W,\widetilde{\Gamma}_{\cdot_{\mathsf{Q}}}^W :\mathcal{A}^W\rightarrow \mathcal{D}'_C(M;\mathsf{Pol}_W)$ built as per Proposition \ref{Prop: deformation map} there exists a collection of linear maps $\{C_{l,l'}\}_{l,l'\in\mathbb{N}_0}$, $C_{l,l'}:\mathcal{A}^W\rightarrow\mathcal{M}_{l,l'}$, see Remark \ref{Rem: M}, such that
 	\begin{enumerate}
 		\item for any $r, r'\in\mathbb{N}$ such that either $r\leq j + 1$ or $r'\leq j'+1$ , one has that
 		\begin{equation}
 			C_{j,j'}[\mathcal{M}_{r,r'}]=0,
 		\end{equation}
 	\item for all $l,l'\in\mathbb{N}_0$ and for any $u\in\mathcal{A}^W$, it holds that
 	\begin{equation}
 		C_{l,l'}(G_\psi\circledast u)=G_\psi\circledast C_{l,l'}(u),\qquad C_{l,l'}(G_{\bar{\psi}}\circledast u)=G_{\bar{\psi}}\circledast C_{l,l'}(u),
 	\end{equation}
 \item for any $l,l'\in\mathbb{N}$ and for all $(\xi,\bar{\xi})\in\Gamma(DM\oplus D^\ast M)$, one has that
 \begin{equation}
 	\delta_\xi \circ C_{l,l'}=C_{l-1,l'}\circ\delta_{xi},\qquad\delta_{\bar{\xi}}\circ C_{l,l'}=C_{l,l'-1}\circ\delta_{\bar{\xi}},
 \end{equation}
	\item for all $u\in\mathcal{M}_{l,l'}$ the following identity holds true
	\begin{equation}\label{Eq: relation renormalization}
		\widetilde{\Gamma}_{\cdot_{\mathsf{Q}}}^W(u)=\Gamma_{\cdot_{\mathsf{Q}}}^W(u+C_{l-1,l'-1}(u)).
	\end{equation}
 	\end{enumerate}
 \end{theorem}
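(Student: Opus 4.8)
The plan is to mirror the double induction used in the proof of Theorem~\ref{Prop: deformation map}: an outer induction over the polynomial bidegree $(r,r')$ labelling the moduli $\mathcal{M}_{r,r'}$ of Remark~\ref{Rem: M}, and, for each fixed $(r,r')$, an inner induction over the index $j$ counting the occurrences of the fundamental solutions $G_\psi,G_{\bar\psi}$. The starting observation is that, by Equation~\eqref{Eq: identity}, both $\Gamma^W_{\cdot_{\mathsf{Q}}}$ and $\widetilde{\Gamma}^W_{\cdot_{\mathsf{Q}}}$ are forced to act as the identity on $\mathcal{M}_{1,0}$ and $\mathcal{M}_{0,1}$, so they coincide there; hence any family $\{C_{l,l'}\}$ realizing Equation~\eqref{Eq: relation renormalization} must vanish in low bidegree, which fixes the base of the induction and identifies $C_{l,l'}$ as the object bookkeeping the mismatch between the two choices of extension made when passing to bidegree $(l+1,l'+1)$.

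First I would treat the case $(r,r')=(1,1)$. There both maps are built, following Equations~\eqref{eq2proof}--\eqref{eq3proof}, by selecting an extension of the a priori ill-defined product $G_\psi\cdot G_{\bar\psi}$ (and its iterates $Q_{2k}$). By the uniqueness counterpart of \cite[Thm.~6.9]{Brunetti-Fredenhagen-00}, any two such extensions preserving the scaling degree differ by a vector-valued distribution supported on $\mathrm{Diag}_2$ with bounded degree of divergence, hence of the form $\sum_{|\alpha|\le\rho}c_\alpha\,\partial^\alpha\delta_{\mathrm{Diag}_2}$ with smooth $c_\alpha$. After the prescribed smearing against $1_M$, this shows that $\widetilde{\Gamma}^W_{\cdot_{\mathsf{Q}}}(\Phi\bar\Phi)-\Gamma^W_{\cdot_{\mathsf{Q}}}(\Phi\bar\Phi)$ is a smooth multiple of $\mathbf{1}$; the same holds for $\bar\Phi\Phi$. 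This defines $C_{0,0}$ on these generators, and it is propagated to the rest of $\mathcal{M}_{1,1}$ by linearity and, along the $j$-induction, by imposing compatibility with Equation~\eqref{Eq: convolution G}, i.e. $C_{0,0}(G_\psi\circledast u):=G_\psi\circledast C_{0,0}(u)$ and likewise for $G_{\bar\psi}$. The intertwining properties~2 and~3 for $C_{0,0}$ then follow because both $\Gamma^W_{\cdot_{\mathsf{Q}}}$ and $\widetilde{\Gamma}^W_{\cdot_{\mathsf{Q}}}$ intertwine $\circledast$ and the directional derivatives $\delta_\xi,\delta_{\bar\xi}$, so their difference does too; property~4 holds by construction and property~1 because no fresh ambiguity appears in strictly lower bidegree.

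The inductive step for generic $(r,r')$ is carried out exactly as in \emph{Step~3} of the proof of Theorem~\ref{Prop: deformation map}. Assuming $\Gamma^W_{\cdot_{\mathsf{Q}}}$, $\widetilde{\Gamma}^W_{\cdot_{\mathsf{Q}}}$ and the $C_{l,l'}$ at lower levels have been matched, one writes a generator of $\mathcal{M}^{j+1}_{r+1,r'}$ either as $G_\psi\circledast u'$ (or $G_{\bar\psi}\circledast u'$), where the conclusion follows from the $\circledast$-compatibility as in \emph{Step~1}, or as a product $u_1\cdots u_l$ of lower-degree elements; in the latter case Equation~\eqref{eq1pr} and the explicit form of the deformed product~\eqref{Eq: deformed product} split $\widetilde{\Gamma}^W_{\cdot_{\mathsf{Q}}}(u)-\Gamma^W_{\cdot_{\mathsf{Q}}}(u)$ into two contributions: one from the already-matched factors $\widetilde{\Gamma}^W_{\cdot_{\mathsf{Q}}}(u_i)=\Gamma^W_{\cdot_{\mathsf{Q}}}\bigl(u_i+C(u_i)\bigr)$, absorbed into $C$ by multilinearity of the pointwise product, and one from the fresh extension ambiguity of the top-degree bi-distribution $\mathfrak{Q}_{N,\widetilde N}$, which — being supported on $\mathrm{Diag}_{l+2(N+\widetilde N)}$ with finite degree of divergence — produces a derivative-of-delta term that, after contraction with $\delta_{\mathrm{Diag}_l}\otimes Q^{\otimes N}\otimes\widetilde{Q}^{\otimes\widetilde N}$, lands in a modulus of lower bidegree and is therefore of the form $\Gamma^W_{\cdot_{\mathsf{Q}}}(C_{\cdot,\cdot}(\,\cdot\,))$. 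Collecting the two pieces defines $C_{l,l'}$ at the new level, yields Equation~\eqref{Eq: relation renormalization}, and the properties~1--3 propagate along the induction by the same argument as in the base case.

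The main obstacle I expect is the \emph{coherence} of the construction: one must check that the counterterms extracted generator by generator assemble into well-defined linear maps $C_{l,l'}$ on all of $\mathcal{A}^W$, simultaneously compatible with the pointwise product (so that the ambiguity on $u_1\cdots u_l$ is consistent with those on the $u_i$), with $G_\psi\circledast\,\cdot\,$ and $G_{\bar\psi}\circledast\,\cdot\,$, and with $\delta_\xi,\delta_{\bar\xi}$ — i.e. that the independent families of extension choices underlying $\widetilde{\Gamma}^W_{\cdot_{\mathsf{Q}}}$ and $\Gamma^W_{\cdot_{\mathsf{Q}}}$ cannot conspire to violate Equation~\eqref{Eq: deformation derivatives}. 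This is the vector-valued analogue of the cocycle bookkeeping of Epstein--Glaser renormalization, and it is handled by exploiting that $\delta_\xi$ and $\delta_{\bar\xi}$ strictly lower the bidegree: the relations $\delta_\xi\circ C_{l,l'}=C_{l-1,l'}\circ\delta_\xi$ and $\delta_{\bar\xi}\circ C_{l,l'}=C_{l,l'-1}\circ\delta_{\bar\xi}$, together with the already-established lower-order identities, pin down $C_{l,l'}$ uniquely on all derivatives, leaving only the $\mathbf{1}$-direction, which is controlled directly through Equation~\eqref{Eq: product smooth function}. This is essentially the mechanism already used to prove injectivity of $\Gamma^W_{\cdot_{\mathsf{Q}}}$ in the proof of Theorem~\ref{Thm: local algebra}, and adapting it here simultaneously verifies the commutation relations in items~2 and~3 and the vanishing statement in item~1.
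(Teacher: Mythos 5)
The paper in fact omits the proof, remarking only that it follows \cite{DDR20} ``barring minor modifications,'' so there is no in-text argument to compare against; your reconstruction, however, is precisely the argument that reference (and its scalar antecedent in \cite{Dappiaggi:2020gge}) carries out: mirror the double induction of Theorem~\ref{Prop: deformation map}, invoke the Brunetti--Fredenhagen characterization that two scaling-degree-preserving extensions of a distribution off the diagonal differ by a $\delta_{\mathrm{Diag}}$-supported distribution with finitely many derivatives, pair that against $1_M$ to identify the counterterm as a lower-bidegree element, and propagate the $\circledast$- and $\delta_\xi$-intertwining relations along the induction, using the degree-lowering property of the directional derivatives to enforce coherence as in the injectivity argument of Theorem~\ref{Thm: local algebra}. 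This matches both the structure and the key ingredients, so your proposal is essentially the intended proof; the only thing I would flag for precision is that at the $(1,1)$ base step, among the terms $c_\alpha\partial^\alpha\delta_{\mathrm{Diag}_2}$, only the $\alpha=0$ piece survives pairing against $f\otimes 1_M$ (higher transverse derivatives annihilate the constant function), which is exactly why the ambiguity lands in $\mathcal{E}(M;W)\cdot\mathbf{1}\subset\mathcal{M}_{0,0}$ and why Equation~\eqref{Eq: product smooth function} suffices to control the $\mathbf{1}$-direction — worth stating explicitly so the target modulus of $C_{0,0}$ is manifest.
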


\noindent The proof of this theorem is the same as in \cite{DDR20} barring minor modifications and therefore we omit it.

\begin{remark}
	The relevance of Theorem \ref{Thm: uniqueness} resides in the characterization of the arbitrariness in the choice of a deformation map. In particular, Equation \eqref{Eq: relation renormalization} must be read as the relation that elapses between two different deformation maps, hence translating to the present setting the renormalization freedom of the field theory under analysis. Making an analogy with quantum field theory, the maps $C_{l,l'}$ coincide with the \textit{renormalization freedoms}. 
\end{remark}
\begin{corollary}
	Under the same hypotheses of Theorem \ref{Thm: uniqueness}, there exists an algebra isomorphism between $\mathcal{A}_{\cdot_{\mathsf{Q}}}^W:=\Gamma_{\cdot_{\mathsf{Q}}}^W(\mathcal{A}^W)$ and $\widetilde{\mathcal{A}}_{\cdot_{\mathsf{Q}}}^W:=\widetilde{\Gamma}_{\cdot_{\mathsf{Q}}}^W(\mathcal{A}^W)$.
\end{corollary}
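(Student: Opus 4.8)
The plan is to exhibit the isomorphism explicitly as the composite $\Psi := \widetilde{\Gamma}_{\cdot_{\mathsf{Q}}}^W \circ (\Gamma_{\cdot_{\mathsf{Q}}}^W)^{-1}$ and to verify it is a unital algebra homomorphism with inverse of the same form. The first point to settle is that $(\Gamma_{\cdot_{\mathsf{Q}}}^W)^{-1}$ is meaningful on $\mathcal{A}_{\cdot_{\mathsf{Q}}}^W=\Gamma_{\cdot_{\mathsf{Q}}}^W(\mathcal{A}^W)$, i.e. the injectivity of $\Gamma_{\cdot_{\mathsf{Q}}}^W$ already established in the proof of Theorem \ref{Thm: local algebra}. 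I would observe that that argument invokes only properties \eqref{Eq: identity} and \eqref{Eq: deformation derivatives} of Proposition \ref{Prop: deformation map}, together with $\Gamma_{\cdot_{\mathsf{Q}}}^W(\mathbf{1})=\mathbf{1}$, all of which hold verbatim for $\widetilde{\Gamma}_{\cdot_{\mathsf{Q}}}^W$; hence $\widetilde{\Gamma}_{\cdot_{\mathsf{Q}}}^W$ is injective as well. Consequently $\Psi$ is a well-defined linear bijection of $\mathcal{A}_{\cdot_{\mathsf{Q}}}^W$ onto $\widetilde{\mathcal{A}}_{\cdot_{\mathsf{Q}}}^W=\widetilde{\Gamma}_{\cdot_{\mathsf{Q}}}^W(\mathcal{A}^W)$, with $\Psi^{-1}=\Gamma_{\cdot_{\mathsf{Q}}}^W\circ(\widetilde{\Gamma}_{\cdot_{\mathsf{Q}}}^W)^{-1}$.

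The core of the argument is multiplicativity. Given $u_1,u_2\in\mathcal{A}_{\cdot_{\mathsf{Q}}}^W$, set $v_i:=(\Gamma_{\cdot_{\mathsf{Q}}}^W)^{-1}(u_i)\in\mathcal{A}^W$. By the defining formula in Theorem \ref{Thm: local algebra} one has $u_1\cdot_{\Gamma_{\cdot_{\mathsf{Q}}}^W}u_2=\Gamma_{\cdot_{\mathsf{Q}}}^W(v_1v_2)$, where $v_1v_2$ is the \emph{undeformed} pointwise product of $\mathcal{A}^W$, Equation \eqref{Eq: pointwise product}; applying $\Psi$ returns $\widetilde{\Gamma}_{\cdot_{\mathsf{Q}}}^W(v_1v_2)$. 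On the other hand $\Psi(u_i)=\widetilde{\Gamma}_{\cdot_{\mathsf{Q}}}^W(v_i)$, so, using $(\widetilde{\Gamma}_{\cdot_{\mathsf{Q}}}^W)^{-1}\circ\widetilde{\Gamma}_{\cdot_{\mathsf{Q}}}^W=\mathrm{id}$, the deformed product on $\widetilde{\mathcal{A}}_{\cdot_{\mathsf{Q}}}^W$ gives $\Psi(u_1)\cdot_{\widetilde{\Gamma}_{\cdot_{\mathsf{Q}}}^W}\Psi(u_2)=\widetilde{\Gamma}_{\cdot_{\mathsf{Q}}}^W(v_1v_2)$ as well. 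The two sides coincide, so $\Psi$ intertwines the two deformed products; unitality follows from $\Psi(\mathbf{1})=\widetilde{\Gamma}_{\cdot_{\mathsf{Q}}}^W(\mathbf{1})=\mathbf{1}$, since $\mathbf{1}$ has vanishing functional derivatives and the deformation acts trivially on it.

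Finally I would close the loop with Theorem \ref{Thm: uniqueness}: restricting to the graded summand $\mathcal{M}_{l,l'}$ and writing $u=\Gamma_{\cdot_{\mathsf{Q}}}^W(v)$, Equation \eqref{Eq: relation renormalization} identifies $\Psi(u)=\widetilde{\Gamma}_{\cdot_{\mathsf{Q}}}^W(v)=\Gamma_{\cdot_{\mathsf{Q}}}^W\big(v+C_{l-1,l'-1}(v)\big)$, so that $\Psi=\Gamma_{\cdot_{\mathsf{Q}}}^W\circ(\mathrm{id}+C)\circ(\Gamma_{\cdot_{\mathsf{Q}}}^W)^{-1}$ with $C$ the collection of renormalization freedoms, making the isomorphism entirely explicit. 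I do not expect a genuine obstacle here: the two delicate bookkeeping points are the transfer of the injectivity argument to $\widetilde{\Gamma}_{\cdot_{\mathsf{Q}}}^W$ and the observation that both deformed products are built through the \emph{common} pointwise product of $\mathcal{A}^W$, which is precisely what forces $\Psi$ to be a homomorphism.
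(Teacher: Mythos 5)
Your proof is correct and supplies precisely the argument the paper leaves implicit: the map $\Psi=\widetilde{\Gamma}_{\cdot_{\mathsf{Q}}}^W\circ(\Gamma_{\cdot_{\mathsf{Q}}}^W)^{-1}$ is a bijection because both deformation maps are injective (by the argument in Theorem \ref{Thm: local algebra}), and it is multiplicative because both deformed products are, by the construction in Theorem \ref{Thm: local algebra}, pushforwards of the \emph{same} undeformed pointwise product on $\mathcal{A}^W$. One minor bookkeeping note: the injectivity argument you transfer to $\widetilde{\Gamma}_{\cdot_{\mathsf{Q}}}^W$ actually uses Equation \eqref{Eq: product smooth function} as well (to pass from $f\mathbf{1}\in\ker$ to $f=0$), in addition to Equations \eqref{Eq: identity} and \eqref{Eq: deformation derivatives}; since that property is likewise shared by $\widetilde{\Gamma}_{\cdot_{\mathsf{Q}}}^W$, the conclusion is unaffected, and your closing identification of $\Psi$ with $\Gamma_{\cdot_{\mathsf{Q}}}^W\circ(\mathrm{id}+C)\circ(\Gamma_{\cdot_{\mathsf{Q}}}^W)^{-1}$ via Theorem \ref{Thm: uniqueness} is exactly the link the paper's placement of the corollary intends.
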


\section{The Multi-Local Algebra $\mathcal{A}_{\bullet_{\mathsf{Q}}}^W$}
\label{Sec: nonlocal deformation algebra}
The procedure outlined in Section \ref{Sec: local deformation}, especially Theorem \ref{Thm: local algebra}, suffices to encompass at the algebraic level the information on the underlying random field. It turn this shall allow us to compute in Section \ref{Sec: Thirring} the expectation value of the perturbative solution to the stochastic Thirring model. Nonetheless, the aforementioned algebraic construct, being intrinsically local, does not suffice to compute multi-local correlation functions of the fields, which play a pivotal r\^ole in a vast range of physical applications. The most natural way of circumventing this hurdle, \textit{i.e.}, deriving explicit expressions of the multi-local correlation functions, is to resort once more to the deformation of a suitable algebra of functionals. The first step in this direction consists of introducing a suitable non local version of the algebraic structures introduced in Sections \ref{Sec: algebras} and \ref{Sec: local deformation}.
\begin{definition}\label{Def: tensor algebras}
	Let $\mathcal{D}'_C(M;\mathsf{Pol}_W)$ be the space of functional-valued vector distributions introduced in Definition \ref{Def: algebra WF} and consider the deformed local algebra $\mathcal{A}^W_{\cdot_{\mathsf{Q}}}$ as per Theorem \ref{Thm: local algebra}. We define
	\begin{equation}\label{Eq: multilocal space}
		(\mathcal{T}^W_C)'(M; \mathsf{Pol}_W) :=W\oplus\bigoplus_{l\geq1}\mathcal{D}_C'(M;\mathsf{Pol}_W)^{\otimes l},
	\end{equation}
where $W$ us defined in Equation \eqref{Eq: W space}. Similarly we introduce the universal tensor algebra
\begin{equation}\label{Eq: universal tensor module}
	\mathcal{T}^W(\mathcal{A}_{\cdot_{\mathsf{Q}}}^W):=\mathcal{E}(M;W)\oplus\bigoplus_{l\geq1}(\mathcal{A}_{\cdot_{\mathsf{Q}}}^W)^{\otimes l}.
\end{equation}
\end{definition}
\begin{remark}
	Exploiting the graded nature of $\mathcal{A}^W$ as per Remark \ref{Rem: M}, we can decompose $\mathcal{T}^W(\mathcal{A}_{\cdot_{\mathsf{Q}}}^W)$ as
	\begin{align}\label{Eq: graded decomposition tensor algebra}
		\mathcal{T}^W(\mathcal{A}_{\cdot_{\mathsf{Q}}}^W)=\mathcal{E}(M;W)\oplus\bigoplus_{l>0}\bigoplus_{r,r'=0}^\infty\bigoplus_{\substack{r_1,\ldots, r_l,r'_1,\ldots, r_l'\\r_1+\ldots+ r_l=r\\ r'_1+\ldots+ r'_l=r'}}\Gamma_{\cdot_{\mathsf{Q}}}^W(\mathcal{M}_{r_1,r_1'})\otimes\ldots\otimes \Gamma_{\cdot_{\mathsf{Q}}}^W(\mathcal{M}_{r_l,r_l'}).
	\end{align}
\end{remark}

Following the line of reasoning adopted in Section \ref{Sec: local deformation} we endow the algebra $\mathcal{T}^W(\mathcal{A}_{\cdot_{\mathsf{Q}}}^W)$ with $\bullet_{\mathsf{Q}}$, a deformation of the tensor product, encoding the information of Equations \eqref{Eq: 2-point function 1} and \eqref{Eq: 2-point function 2}. To this end, for all $u_1\in\mathcal{D}'_C(M^{l_1};\mathsf{Pol}_W)$ and $u_2\in\mathcal{D}'_C(M^{l_2};\mathsf{Pol}_W)$ with $l_1,l_2\in\mathbb{N}$, for any $f_1\in\mathcal{D}(M^{l_1})$, $f_2\in\mathcal{D}(M^{l_2})$ and for all $(\eta,\bar{\eta})\in\Gamma(DM\oplus D^\ast M)$ we set
\begin{align}\label{Eq: bullet product}
	(u_1\bullet_{\mathsf{Q}} u_2)(&f_1\otimes f_2;\eta,\bar{\eta})\nonumber\\
	&:=\sum_{\substack{k\geq0\\k_1+k_2=k}}\frac{1}{k_1! k_2!}\left[(\mathbf{1}_{l_1+l_2}\otimes Q^{\otimes k_1}\otimes \widetilde{Q}^{\otimes k_2})\right]\cdot \left(u_1^{(k_1,k_2)}\tilde{\otimes}u_1^{(k_2,k_1)}\right)(f_1\otimes f_2\otimes \mathbf{1}_{2k};\eta,\bar{\eta}),
\end{align}
where the symbol $\tilde{\otimes}$ has been introduced in Equation \eqref{Eq: expression deformed product}. We have all the ingredients to state the nonlocal counterpart of Proposition \ref{Prop: deformation map}. The proof can be seen both as an adaptation to the case in hand of that of Proposition \ref{Prop: deformation map} or of that in \cite[Thm. 4.4]{DDR20} for the scalar case and therefore we omit it.
\begin{proposition}\label{Prop: nonlocal deformation map}
	Consider the deformed algebra $\mathcal{A}_{\cdot_{\mathsf{Q}}}^W$ introduced in Theorem \ref{Thm: local algebra}. Furthermore, let us denote by $\mathcal{T}^W(\mathcal{A}_{\cdot_{\mathsf{Q}}}^W)$ the universal tensor algebra as per Equation \eqref{Eq: universal tensor module} and by $(\mathcal{T}^W_C)'(M; \mathsf{Pol}_W)$ the space defined in Equation \eqref{Eq: multilocal space}. Then, there exists a linear map
	\begin{equation}
 \Gamma_{\bullet_{\mathsf{Q}}}^W:\mathcal{T}^W(\mathcal{A}_{\cdot_{\mathsf{Q}}}^W)\rightarrow (\mathcal{T}^W_C)'(M; \mathsf{Pol}_W),
	\end{equation}
which satisfies the following conditions:
\begin{enumerate}
	\item for all $n\in\mathbb{N}$ and for any $u_1,\ldots, u_n\in\mathcal{A}_{\cdot_{\mathsf{Q}}}^W$ with either $u_1\in\Gamma_{\cdot_{\mathsf{Q}}}^W(\mathcal{M}_{1,0})$ or $u_1\in\Gamma_{\cdot_{\mathsf{Q}}}^W(\mathcal{M}_{0,1})$,
	see Remark \ref{Rem: M}, it holds that
	\begin{equation}
		\Gamma_{\bullet_{\mathsf{Q}}}^W(u_1\otimes\ldots\otimes u_n)=u_1\bullet_{\mathsf{Q}}\Gamma_{\bullet_{\mathsf{Q}}}^W(u_2\otimes\ldots\otimes u_n),
	\end{equation}
where $\bullet_{\mathsf{Q}}$ has been defined in Equation \eqref{Eq: bullet product}.
\item for all $n\in\mathbb{N}$ and for every $u_1,\ldots, u_n\in\mathcal{A}_{\cdot_{\mathsf{Q}}}^W$, $(\eta,\bar{\eta})\in\Gamma(DM\oplus D^\ast M)$ and $f_1,\ldots, f_n\in\mathcal{D}(M)$ such
that there exists a subset of indices $I\subset\{1,\ldots, n\}$ satisfying
\begin{equation}
	\left(\bigcup_{i\in I} supp(f_i)\right)\cap\left(\bigcup_{i\notin I} supp(f_j)\right)=\emptyset,
\end{equation}
it holds that
\begin{align}
	\Gamma_{\bullet_{\mathsf{Q}}}^W(u_1\otimes\ldots\otimes u_n)(f_1\otimes\ldots\otimes f_n;\eta,\bar{\eta})=\left[\Gamma_{\bullet_{\mathsf{Q}}}^W\Big(\bigotimes_{i\in I}u_i\Big)\bullet_{\mathsf{Q}}\Gamma_{\bullet_{\mathsf{Q}}}^W\Big(\bigotimes_{j\notin I}u_j\Big)\right](f_1\otimes\ldots\otimes f_n;\eta,\bar{\eta}).
\end{align}
\item for any $(\zeta,\bar{\zeta})\in\Gamma(DM\oplus D^\ast M)$, denoting with $\delta_\zeta$ (resp. $\delta_{\bar{\zeta}}$) the functional derivative in the direction $\zeta$ (resp. $\bar{\zeta}$) - see Definition \ref{Def: functional derivatives} - , the following properties hold true:
\begin{align}
	&\Gamma_{\bullet_{\mathsf{Q}}}^W(u)=u,\quad\forall u\in\mathcal{A}_{\cdot_{\mathsf{Q}}}^W,\\
	&\Gamma_{\bullet_{\mathsf{Q}}}^W\circ \delta_\zeta=\delta_\zeta\circ \Gamma_{\bullet_{\mathsf{Q}}}^W,\qquad \Gamma_{\bullet_{\mathsf{Q}}}^W\circ \delta_{\bar{\zeta}}=\delta_{\bar{\zeta}}\circ \Gamma_{\bullet_{\mathsf{Q}}}^W,\\
	&\Gamma_{\bullet_{\mathsf{Q}}}^W(u_1\otimes\ldots\otimes G_\psi\circledast u_l\otimes\ldots\otimes u_n)=(\delta_{\mathrm{Diag}_2}^{l-1}\otimes G_\psi\otimes\delta_{\mathrm{Diag}_2}^{n-l})\circledast\Gamma_{\bullet_{\mathsf{Q}}}^W(u_1\otimes\ldots\otimes u_l\otimes\ldots u_n)\\
	&\Gamma_{\bullet_{\mathsf{Q}}}^W(u_1\otimes\ldots\otimes G_{\bar{\psi}}\circledast u_l\otimes\ldots\otimes u_n)=(\delta_{\mathrm{Diag}_2}^{l-1}\otimes G_{\bar{\psi}}\otimes\delta_{\mathrm{Diag}_2}^{n-l})\circledast\Gamma_{\bullet_{\mathsf{Q}}}^W(u_1\otimes\ldots\otimes u_l\otimes\ldots u_n),
\end{align} 
for any $u_1\ldots, u_n\in\mathcal{A}_{\cdot_{\mathsf{Q}}}^W$, $n\in\mathbb{N}$.
\end{enumerate}
\end{proposition}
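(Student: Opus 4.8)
The plan is to reproduce, in the multi-local setting, the inductive construction carried out in the proof of Proposition~\ref{Prop: deformation map}, now adding an outer induction over the tensorial length $n$, i.e.\ over the grading of the universal tensor algebra $\mathcal{T}^W(\mathcal{A}_{\cdot_{\mathsf{Q}}}^W)$ of Definition~\ref{Def: tensor algebras}. The case $n=1$ is forced by the first identity in item~3, $\Gamma_{\bullet_{\mathsf{Q}}}^W(u)=u$ for $u\in\mathcal{A}_{\cdot_{\mathsf{Q}}}^W$, and all the remaining requirements are then either void or inherited from Theorem~\ref{Thm: local algebra}. For the inductive step I would assume $\Gamma_{\bullet_{\mathsf{Q}}}^W$ coherently defined on all tensors of length at most $n-1$ and define it on $u_1\otimes\cdots\otimes u_n$ by the formal prescription obtained by iterating Equation~\eqref{Eq: bullet product}, namely $\Gamma_{\bullet_{\mathsf{Q}}}^W(u_1\otimes\cdots\otimes u_n):=u_1\bullet_{\mathsf{Q}}\cdots\bullet_{\mathsf{Q}}u_n$, and then promote this purely formal expression to a genuine element of $(\mathcal{T}^W_C)'(M;\mathsf{Pol}_W)$ by a renormalization argument.

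Expanding that iterated product one is led, for each admissible choice of contraction multi-indices, to distributions of the schematic form
\[
\big(\mathbf{1}\otimes Q^{\otimes N}\otimes\widetilde{Q}^{\otimes\widetilde{N}}\big)\cdot\big(\Gamma_{\cdot_{\mathsf{Q}}}^W(u_1)^{(N_1,\widetilde{N}_1)}\,\tilde{\otimes}\,\cdots\,\tilde{\otimes}\,\Gamma_{\cdot_{\mathsf{Q}}}^W(u_n)^{(N_n,\widetilde{N}_n)}\big),
\]
weighted by combinatorial coefficients. Since every $u_i\in\mathcal{A}_{\cdot_{\mathsf{Q}}}^W\subset\mathcal{D}'_C(M;\mathsf{Pol}_W)$ and $\mathrm{WF}(Q),\mathrm{WF}(\widetilde{Q})\subseteq\mathrm{WF}(\delta_{\mathrm{Diag}_2})$, the wavefront-set calculus of \cite[Thm.~8.2.9]{Hormander-I-03} shows that these products are well defined away from the total diagonal of the pertinent Cartesian power of $M$; arguing exactly as in Step~2 and Step~3 of the proof of Proposition~\ref{Prop: deformation map}, a partition argument --- on any splitting of the points into two clusters at positive mutual distance, exactly one of the factors is generated by a smooth function while the product of the others is regular --- upgrades this to being well defined away from the \emph{thin} diagonal alone. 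Finiteness of the scaling degree with respect to the thin diagonal follows from item~4 of Proposition~\ref{Prop: deformation map} together with Lemma~\ref{Lem: stability}, so \cite[Thm.~6.9]{Brunetti-Fredenhagen-00} yields an extension preserving both wavefront set and scaling degree; fixing one such extension for every contribution defines $\Gamma_{\bullet_{\mathsf{Q}}}^W$ on length-$n$ tensors.

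It then remains to verify items~1--3 for the map so constructed, which I would do by direct inspection using the inductive hypothesis. Item~1 is immediate once one observes that for $u_1\in\Gamma_{\cdot_{\mathsf{Q}}}^W(\mathcal{M}_{1,0})\cup\Gamma_{\cdot_{\mathsf{Q}}}^W(\mathcal{M}_{0,1})$ only the derivatives $u_1^{(1,0)}$ or $u_1^{(0,1)}$ survive in Equation~\eqref{Eq: bullet product}, and these are generated by smooth functions, so no renormalization intervenes in the outermost contraction; item~2 follows because, by the partition argument above, the distributions $\mathfrak{Q}_{N,\widetilde{N}}$ are smooth across any splitting of the test-function supports and the chosen extensions factorize accordingly. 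The derivative identities in item~3 follow by commuting $\delta_\zeta$, $\delta_{\bar{\zeta}}$ through Equation~\eqref{Eq: bullet product} and invoking item~3 of Proposition~\ref{Prop: deformation map}, while the two $G_\psi$/$G_{\bar{\psi}}$-intertwining relations are obtained exactly as in Step~1 of that proof, reducing to the generators through the graded decomposition of Remark~\ref{Rem: M}.

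The hard part will be the \emph{coherence} of the construction: one must check that the recursive prescription of item~1, the cluster-factorization property of item~2, and the freedom in choosing the Brunetti--Fredenhagen extensions are mutually compatible, i.e.\ that the various ways of building up $\Gamma_{\bullet_{\mathsf{Q}}}^W(u_1\otimes\cdots\otimes u_n)$ --- as iterated $\bullet_{\mathsf{Q}}$ products, via disjoint clusters of supports, or via the graded decomposition of each $u_i$ into $G_\psi\circledast(\cdot)$, $G_{\bar{\psi}}\circledast(\cdot)$ and pointwise products --- all produce the same distribution. This is precisely the compatibility handled in \cite[Thm.~4.4]{DDR20} for the scalar field, by organising the induction simultaneously over $n$, over the polynomial bidegree $(r,r')$ and over the $G$-grading index $j$ and checking agreement at every merge; that argument carries over to the present vector-valued setting with only notational modifications.
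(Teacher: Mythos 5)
Your plan reproduces exactly the route the paper indicates: an outer induction over the tensorial length that re-runs, at each step, the wavefront-set, partition, scaling-degree, and Brunetti--Fredenhagen-extension argument from the proof of Proposition~\ref{Prop: deformation map}, with coherence handled as in \cite[Thm.~4.4]{DDR20}; the paper omits the proof with precisely this pair of references, so your sketch is a faithful reconstruction.

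One small imprecision worth flagging: when you argue item~1 by saying that for $u_1\in\Gamma^W_{\cdot_{\mathsf{Q}}}(\mathcal{M}_{1,0})$ the surviving derivative $u_1^{(1,0)}$ is ``generated by a smooth function,'' that is not true in general --- for $u_1=\Phi$ one has $u_1^{(1,0)}=\delta_{\mathrm{Diag}_2}$, and for $u_1=G_\psi\circledast\Phi$ one has $u_1^{(1,0)}=G_\psi$, neither of which is smooth. What actually makes the outermost $\bullet_{\mathsf{Q}}$-contraction well defined without renormalization is that, with $u_1$ of degree one, Equation~\eqref{Eq: bullet product} contains at most a single $Q$ or $\widetilde{Q}$ kernel, paired (not pointwise multiplied) with $u_1^{(1,0)}$ and with the relevant derivative of $\Gamma^W_{\bullet_{\mathsf{Q}}}(u_2\otimes\cdots\otimes u_n)$; the resulting composition of kernels satisfies H\"ormander's criterion on account of Equations~\eqref{Eq: WF Dirac}, \eqref{Eq: WF Q}, and Definition~\ref{Def: algebra WF}, with no self-contraction on the thin diagonal and hence no extension problem. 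With this correction the step is sound, and the rest of your argument stands.
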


The deformation map $\Gamma_{\bullet_{\mathsf{Q}}}^W$ introduced in Proposition \ref{Prop: nonlocal deformation map} is the building block to define a new algebraic structure out of $\mathcal{T}^W(\mathcal{A}_{\cdot_{\mathsf{Q}}}^W)$, as stated in the next theorem.
\begin{theorem}\label{Thm: nonlocal deformed algebra}
	Given a linear map $\Gamma_{\bullet_{\mathsf{Q}}}^W:\mathcal{T}^W(\mathcal{A}_{\cdot_{\mathsf{Q}}}^W)\rightarrow (\mathcal{T}^W_C)'(M;\mathsf{Pol}_W)$ as per Proposition \ref{Prop: nonlocal deformation map} we define
	\begin{equation}
		\mathcal{A}_{\bullet_{\mathsf{Q}}}^W:=\Gamma_{\bullet_{\mathsf{Q}}}^W(\mathcal{T}^W(\mathcal{A}_{\cdot_{\mathsf{Q}}}^W))\subseteq (\mathcal{T}^W_C)'(M;\mathsf{Pol}_W). 
	\end{equation}
In addition, let $\bullet_{\Gamma_{\bullet_{\mathsf{Q}}}^W}:\mathcal{A}_{\bullet_{\mathsf{Q}}}^W\times \mathcal{A}_{\bullet_{\mathsf{Q}}}^W\rightarrow \mathcal{A}_{\bullet_{\mathsf{Q}}}^W$ be such that
\begin{equation}\label{Eq: bullet product 2}
	u_1\bullet_{\Gamma_{\bullet_{\mathsf{Q}}}^W} u_2:=\Gamma_{\bullet_{\mathsf{Q}}}^W\Big({(\Gamma_{\bullet_{\mathsf{Q}}}^W)}^{ -1}(u_1)\otimes{(\Gamma_{\bullet_{\mathsf{Q}}}^W)}^{-1}(u_2)\Big),\qquad \forall u_1,u_2\in\mathcal{A}_{\bullet_{\mathsf{Q}}}^W.
\end{equation}
It descends that $(\mathcal{A}_{\bullet_{\mathsf{Q}}}^W, \bullet_{\Gamma_{\bullet_{\mathsf{Q}}}^W})$ is a unital, associative algebra over $W$.
\end{theorem}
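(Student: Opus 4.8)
The plan is to follow the blueprint of the proof of Theorem \ref{Thm: local algebra}, and of its scalar counterpart in \cite{DDR20}. Three assertions have to be checked, in this order: that the map $\Gamma_{\bullet_{\mathsf{Q}}}^W$ of Proposition \ref{Prop: nonlocal deformation map} is injective, so that $(\Gamma_{\bullet_{\mathsf{Q}}}^W)^{-1}$ is well defined on $\mathcal{A}_{\bullet_{\mathsf{Q}}}^W$ and the product \eqref{Eq: bullet product 2} makes sense and maps $\mathcal{A}_{\bullet_{\mathsf{Q}}}^W\times\mathcal{A}_{\bullet_{\mathsf{Q}}}^W$ into $\mathcal{A}_{\bullet_{\mathsf{Q}}}^W$; that $(\mathcal{A}_{\bullet_{\mathsf{Q}}}^W,\bullet_{\Gamma_{\bullet_{\mathsf{Q}}}^W})$ is unital; and that it is associative. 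Since the last two items are formal once injectivity is known, the real work is concentrated in the first one.

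For injectivity I would mirror the argument in the proof of Theorem \ref{Thm: local algebra}. First, $\Gamma_{\bullet_{\mathsf{Q}}}^W$ respects the grading by tensor length appearing in \eqref{Eq: multilocal space} and \eqref{Eq: universal tensor module}, because by \eqref{Eq: bullet product} the product $\bullet_{\mathsf{Q}}$ sends $\mathcal{D}'_C(M^{l_1};\mathsf{Pol}_W)\times\mathcal{D}'_C(M^{l_2};\mathsf{Pol}_W)$ into $\mathcal{D}'_C(M^{l_1+l_2};\mathsf{Pol}_W)$ and the recursion in item 1 of Proposition \ref{Prop: nonlocal deformation map} then propagates this to $\Gamma_{\bullet_{\mathsf{Q}}}^W$; hence it suffices to argue on each homogeneous component $(\mathcal{A}_{\cdot_{\mathsf{Q}}}^W)^{\otimes l}$, the cases $l=0,1$ being immediate since there $\Gamma_{\bullet_{\mathsf{Q}}}^W=\mathrm{id}$ by item 3 of Proposition \ref{Prop: nonlocal deformation map}. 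For $l\geq 2$ I would suppose $0\neq u\in\ker\Gamma_{\bullet_{\mathsf{Q}}}^W\cap(\mathcal{A}_{\cdot_{\mathsf{Q}}}^W)^{\otimes l}$ and let $(k,k')$ be its total polynomial bidegree in $(\psi,\bar{\psi})$, which is finite because the functionals are polynomial; choosing directions $(\zeta,\bar{\zeta})$ with $\delta^{(k,k')}_{(\zeta,\bar{\zeta})}u\neq 0$, the intertwining relations in item 3 of Proposition \ref{Prop: nonlocal deformation map} place $\delta^{(k,k')}_{(\zeta,\bar{\zeta})}u$ again in $\ker\Gamma_{\bullet_{\mathsf{Q}}}^W$. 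But this element has vanishing functional derivatives, so in \eqref{Eq: bullet product} only the $k=0$ term survives and $\Gamma_{\bullet_{\mathsf{Q}}}^W$ acts on it as the identity — equivalently, $\Gamma_{\bullet_{\mathsf{Q}}}^W$ is unipotent for the filtration by total polynomial bidegree, each correction in \eqref{Eq: bullet product} carrying at least one contraction $Q\cdot v^{(k_1,k_2)}$ or $\widetilde{Q}\cdot v^{(k_2,k_1)}$ with $k_1+k_2\geq 1$ and thus strictly lowering that bidegree — forcing $\delta^{(k,k')}_{(\zeta,\bar{\zeta})}u=0$, a contradiction. Hence $\ker\Gamma_{\bullet_{\mathsf{Q}}}^W=\{0\}$, so $(\Gamma_{\bullet_{\mathsf{Q}}}^W)^{-1}$ is well defined on $\mathcal{A}_{\bullet_{\mathsf{Q}}}^W$ and, since $\mathcal{T}^W(\mathcal{A}_{\cdot_{\mathsf{Q}}}^W)$ is closed under $\otimes$, the product \eqref{Eq: bullet product 2} indeed lands in $\mathcal{A}_{\bullet_{\mathsf{Q}}}^W$.

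With injectivity in hand, unitality follows by taking $\mathbf{1}$ to be the unit of $\mathcal{T}^W(\mathcal{A}_{\cdot_{\mathsf{Q}}}^W)$, which by the $l=0$ case lies in $\mathcal{A}_{\bullet_{\mathsf{Q}}}^W$ and is fixed by $\Gamma_{\bullet_{\mathsf{Q}}}^W$; its functional derivatives vanish, so only the $k=0$ term survives in \eqref{Eq: bullet product} and $\mathbf{1}\bullet_{\mathsf{Q}}v=v\bullet_{\mathsf{Q}}\mathbf{1}=v$, whence \eqref{Eq: bullet product 2} gives $u\bullet_{\Gamma_{\bullet_{\mathsf{Q}}}^W}\mathbf{1}=\mathbf{1}\bullet_{\Gamma_{\bullet_{\mathsf{Q}}}^W}u=u$ for all $u\in\mathcal{A}_{\bullet_{\mathsf{Q}}}^W$. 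Associativity is then purely formal: writing $v_i=(\Gamma_{\bullet_{\mathsf{Q}}}^W)^{-1}(u_i)$ and using $(\Gamma_{\bullet_{\mathsf{Q}}}^W)^{-1}\circ\Gamma_{\bullet_{\mathsf{Q}}}^W=\mathrm{id}$ on $\mathcal{T}^W(\mathcal{A}_{\cdot_{\mathsf{Q}}}^W)$ together with the associativity of $\otimes$ there, both $(u_1\bullet_{\Gamma_{\bullet_{\mathsf{Q}}}^W}u_2)\bullet_{\Gamma_{\bullet_{\mathsf{Q}}}^W}u_3$ and $u_1\bullet_{\Gamma_{\bullet_{\mathsf{Q}}}^W}(u_2\bullet_{\Gamma_{\bullet_{\mathsf{Q}}}^W}u_3)$ collapse to $\Gamma_{\bullet_{\mathsf{Q}}}^W(v_1\otimes v_2\otimes v_3)$. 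I expect the only genuine obstacle to be the injectivity step, and more precisely the verification that the top functional derivative of a putative kernel element lands in the subspace on which $\Gamma_{\bullet_{\mathsf{Q}}}^W$ reduces to the identity; carrying this out cleanly requires a careful bookkeeping of the leading term of \eqref{Eq: bullet product} and of the intertwining and factorisation properties recorded in items 2 and 3 of Proposition \ref{Prop: nonlocal deformation map}, all of which parallel the scalar analysis of \cite{DDR20}.
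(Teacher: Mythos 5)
Your proof is correct and mirrors the strategy the paper implicitly endorses: the paper omits this proof, deferring to the scalar case in \cite[Thm.\ 4.4]{Dappiaggi:2020gge}, and your argument reconstructs precisely the injectivity-then-algebra-structure template that the paper itself makes explicit for the local analogue in Theorem \ref{Thm: local algebra} (injectivity of the deformation map via the top-degree functional derivative and the unipotence of the $\mathsf{Q}$-contractions in the polynomial filtration, followed by formal unitality and associativity pulled back through $\Gamma^W_{\bullet_{\mathsf{Q}}}$ to $\otimes$ on the tensor algebra). The one point worth flagging is that Proposition \ref{Prop: nonlocal deformation map} never actually prescribes the action of $\Gamma^W_{\bullet_{\mathsf{Q}}}$ on the $l=0$ summand $\mathcal{E}(M;W)$, so your assertion that the unit ``by the $l=0$ case lies in $\mathcal{A}^W_{\bullet_{\mathsf{Q}}}$ and is fixed'' is a natural extrapolation of item 3 rather than something literally stated; this is a gap in the paper's notation rather than a defect of your argument.
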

\begin{proof}
	For the sake of brevity, we omit the proof of this theorem. Nonetheless, all the details for the scalar counterpart can be found in \cite[Thm. 4.4]{Dappiaggi:2020gge}. We outline that, being the proof merely algebraic, the
	translation of all the steps to the vector-valued scenario is straightforward.
\end{proof}

\begin{example}
	To convince the reader that the deformed product $\bullet_{\Gamma_{\bullet_{\mathsf{Q}}}^W}$ of Theorem \ref{Thm: nonlocal deformed algebra} codifies the correct information, it is instructive to compute the two-point correlation function between the random fields $\widehat{\psi}$ and $\widehat{\bar{\psi}}$ defined in Equation \eqref{Eq: random fields}. As a preliminary step, recalling that $\Phi \in \mathcal{M}_{1,0} \subset \mathcal{D}'_C(M; \text{Pol}_W)$, $\bar{\Phi} \in \mathcal{M}_{0,1} \subset \mathcal{D}'_C(M; \text{Pol}_{W})$ are the polynomial functional-valued vector distributions introduced in Example \ref{Ex: basic functionals},  item 1. in Proposition \ref{Prop: deformation map} entails that $\Gamma_{\cdot_{\mathsf{Q}}}$ acts as the identity. Hence
	\begin{align}\label{Eq: conto}
		\Gamma^W_{\bullet_{\mathsf{Q}}}((\Gamma^W_{\bullet_{\mathsf{Q}}})^{-1}(\Phi) \otimes (\Gamma^W_{\bullet_{\mathsf{Q}}})^{-1}(\bar{\Phi})) &= \Gamma^W_{\bullet_{\mathsf{Q}}}[(\Gamma^W_{\bullet_{\mathsf{Q}}})^{-1}(\Gamma^W_{\bullet_{\mathsf{Q}}}(\Phi)) \otimes (\Gamma^W_{\bullet_{\mathsf{Q}}})^{-1}(\Gamma^W_{\bullet_{\mathsf{Q}}}(\bar{\Phi}))] \\ \nonumber &= \Gamma^W_{\bullet_{\mathsf{Q}}}(\Phi \otimes \bar{\Phi}) = \Phi \bullet_{\mathsf{Q}} \bar{\Phi},
	\end{align}
where $\bullet_{\mathsf{Q}}$ is the deformation of the tensor product defined in Equation \eqref{Eq: bullet product}. Putting Equations \eqref{Eq: bullet product 2} and \eqref{Eq: conto} together, we obtain
\begin{flalign}\label{exbul}
	\Phi \bullet_{\Gamma^W_{\bullet{\mathsf{Q}}}} \bar{\Phi} (f_1 \otimes f_2; \eta,\bar{\eta}) &= \Phi \bullet_{\mathsf{Q}} \bar{\Phi}(f_1 \otimes f_2; \eta,\bar{\eta}) \\ \notag &= \Phi \otimes \bar{\Phi} (f_1 \otimes f_2; \eta,\bar{\eta}) + Q(f_1 \otimes f_2), 
\end{flalign}
for all $f_1, f_2 \in \mathcal{D}(M)$ and for any field configuration $(\eta,\bar{\eta}) \in \Gamma(DM\oplus D^\ast M)$. As expected, evaluating Equation \eqref{exbul} at the vanishing configuration $\psi=0$ and imposing the constraint $\bar{\psi}=\psi^\dagger\gamma_0$ we recover the two-point function calculated in Equation \eqref{Eq: 2-point function 1}, namely
\begin{equation}
		\Phi \bullet_{\Gamma^W_{\bullet_{\mathsf{Q}}}} \bar{\Phi} (f_1 \otimes f_2; 0, 0)=Q(f_1 \otimes f_2)\equiv\omega_2(f_1 \otimes f_2), \, \, \forall f_1, f_2 \in \mathcal{D}(M).
\end{equation}
 An analogous computation entails 
\begin{equation*}
	\bar{\Phi} \bullet_{\Gamma^W_{\bullet_{\mathsf{Q}}}} \Phi(f_1 \otimes f_2; 0,0) = \widetilde{Q}(f_1 \otimes f_2) \equiv \widetilde{\omega_2}(f_1 \otimes f_2), \, \, \forall f_1, f_2 \in \mathcal{D}(M), 
\end{equation*}
while $\Phi\bullet_{\Gamma^W_{\bullet_{\mathsf{Q}}}}\Phi(f_1 \otimes f_2; 0, 0)=\bar{\Phi}\bullet_{\Gamma^W_{\bullet{\mathsf{Q}}}}\bar{\Phi}(f_1 \otimes f_2; 0, 0)=0$, in agreement with the stochastic nature of the underlying random fields. 
\end{example}

To conclude this section, we discuss uniqueness of the linear map $\Gamma_{\bullet_{\mathsf{Q}}}^W$ and, as a by-product, of $\mathcal{A}_{\bullet_{\mathsf{Q}}}^W$ introduced in Theorem \ref{Thm: nonlocal deformed algebra}. This problem can be tackled with an approach similar to that illustrated in the Section \ref{Sec: local deformation}. Hence, we can state the following result, whose main significance resides in the fact that it codifies the renormalization freedoms in defining the deformed algebra $\mathcal{A}^W_{\bullet_{\mathsf{Q}}}$.

\begin{theorem}\label{Thm: uniqueness bullet}
Assume that $\Gamma^W_{\bullet_{\mathsf{Q}}}, (\Gamma^W_{\bullet_{\mathsf{Q}}})': \mathcal{A}^W_{\bullet_{\mathsf{Q}}} \rightarrow (\mathcal{T}^W_C)'(M; \mathsf{Pol}_W)$ are two distinct linear maps which satisfy the properties enumerated in Theorem \ref{Prop: nonlocal deformation map}. Furthermore, let $\underline{r} := (r_1, ..., r_i, ...) \in \mathbb{N}_0^{\mathbb{N}_0}$ and, analogously, $\underline{r}' := (r'_1, ..., r'_i, ...) \in\mathbb{N}_0^{\mathbb{N}_0}$. Then, there exists a collection of linear maps $\{C_{\underline{r}, \underline{r}'}\}_{\underline{r}, \underline{r}' \in \mathbb{N}_0^{\mathbb{N}_0}}$ with $C_{\underline{r}, \underline{r}'} : \mathcal{T}^W(\mathcal{A}^W_{\cdot_{\mathsf{Q}}}) \rightarrow \mathcal{T}^W(\mathcal{A}^W_{\cdot_{\mathsf{Q}}})$ - where $\mathcal{T}^W(\mathcal{A}^W_{\cdot_{\mathsf{Q}}})$ is characterized in Equation \eqref{Eq: graded decomposition tensor algebra} - which meets the following requirements:
		\begin{itemize}
			\item [1.] for any $j \in \mathbb{N}_0$,
			\begin{equation*}
				C_{\underline{r}, \underline{r}'} [((\mathcal{A}^W_{\cdot_{\mathsf{Q}}})^{\otimes j}] \subseteq \mathcal{M}_{r_1, r'_1} \otimes ..\otimes \mathcal{M}_{r_j, r'_j}, 
			\end{equation*}
			whilst
			\begin{equation*}
				C_{\underline{l}, \underline{l}'} [\mathcal{M}_{r_1, r'_1} \otimes ..\otimes \mathcal{M}_{r_j, r'_j}] = 0, 
			\end{equation*}
			whenever $r_i \le l_i -1$ or $r'_i \le l'_i - 1$ for some index $i \in \{1, ..., j\}$. 
			\item [2.] for every $j \in \mathbb{N}_0$ and for all $u_1, ..., u_j \in \mathcal{A}^W_{\cdot_{\mathsf{Q}}}$, it holds that 
			\begin{flalign*}
				C_{\underline{r}, \underline{r}'}[u_1 \otimes ... &\otimes G_{\psi} \circledast u_k \otimes ... \otimes ... \otimes u_j] =\\ &= (\delta_{\mathrm{Diag}_2}^{\otimes (k-1)} \otimes G_{\psi} \otimes \delta_{\mathrm{Diag}_2}^{\otimes (j-k)}) \circledast C_{\underline{r}, \underline{r}'}[u_1 \otimes ... \otimes u_j], \\
				C_{\underline{r}, \underline{r}'}[u_1 \otimes ... &\otimes G_{\bar{\psi}} \circledast u_k \otimes ... \otimes ... \otimes u_j] = \\ &= (\delta_{\mathrm{Diag}_2}^{\otimes (k-1)} \otimes G_{\bar{\psi}} \otimes \delta_{\mathrm{Diag}_2}^{\otimes (j-k)}) \circledast C_{\underline{r}, \underline{r}'}[u_1 \otimes ... \otimes u_j]. \\
			\end{flalign*}
			In addition,
			\begin{flalign*}
				& \delta_{\zeta} C_{\underline{r}, \underline{r}'} [u_1 \otimes ... \otimes u_j] = \sum_{p=1}^j C_{\underline{r}(p), \underline{r}'} [u_1 \otimes ... \otimes \delta_{\zeta} u_p \otimes ... \otimes u_j],\\
				&\delta_{\bar{\zeta}} C_{\underline{r}, \underline{r}'} [u_1 \otimes ... \otimes u_j] = \sum_{p=1}^j C_{\underline{r}, \underline{r}'(p)} [u_1 \otimes ... \otimes \delta_{\bar{\zeta}} u_p \otimes ... \otimes u_j],
			\end{flalign*}
			where 
			$\delta_{\zeta}$ and $\delta_{\bar{\zeta}}$ denote the functional derivatives in the directions $\zeta \in \Gamma(DM)$ and $\bar{\zeta} \in \Gamma(D^*M)$ respectively, as per Definition \ref{Def: functional derivatives}, whereas 
			\begin{equation*}
				[\underline{r}(p)]_i =
				\begin{cases}
					r_i \, \, \text{if} \, \, i \ne p \\
					r_i - 1 \, \, \text{if} \, \, i = p
				\end{cases}. 
			\end{equation*}
			Furthermore, we recall that $G_{\psi}$ and $G_{\bar{\psi}}$ denote the fundamental solutions of the Dirac operator $\slashed{D}$ and of its adjoint $\slashed{D}^*$. 
			\item [3. ] given two linear maps $\Gamma^W_{\cdot_{\mathsf{Q}}},(\Gamma^W_{\cdot_{\mathsf{Q}}})' : \mathcal{A}^W \rightarrow \mathcal{D}'_C(M; \text{Pol}_W)$ satisfying the requirements of Proposition \ref{Prop: deformation map}, it holds that 
			\begin{flalign*}
				(\Gamma^W_{\bullet_{\mathsf{Q}}})' &([(\Gamma^W_{\cdot_{\mathsf{Q}}})']^{\otimes j} (u_{r_1, r'_1} \otimes ... \otimes u_{r_j, r'_j}))(f_1 \otimes ... \otimes f_j) \\ &= \Gamma^W_{\bullet_{\mathsf{Q}}}([\Gamma^W_{\cdot_{\mathsf{Q}}}]^{\otimes j} (u_{r_1, r'_1} \otimes ... \otimes u_{r_j, r'_j}))(f_1 \otimes ... \otimes f_j) \\
				&+ \sum_{\mathcal{I} \in \mathcal{P}(1,...,j)} \Gamma^W_{\bullet_{\mathsf{Q}}} \left[ ([\Gamma^W_{\cdot_{\mathsf{Q}}}]^{\otimes |\mathcal{I}|} C_{\underline{r}_{\mathcal{I}}, \underline{r}'_{\mathcal{I}}} \left( \bigotimes_{I \in \mathcal{I}} \prod_{i \in I} u_{r_i, r'_i} \right)\right] \left( \bigotimes_{I \in \mathcal{I}} \prod_{i \in I} f_i \right), 
			\end{flalign*}
			for all $u_{r_1, r'_1}, ..., u_{r_j, r'_j} \in \mathcal{A}^W$, $u_{r_i, r'_i} \in \mathcal{M}_{r_i, r'_i}$, $\forall i \in \{1, ..., j\}$ and for any $f_1, ..., f_j \in \mathcal{D}(M)$. In addition, $\mathcal{P}(1,...,j)$ is the family of all admissible partitions of the set $\{1,..., j\}$ into non-empty disjoint subsets, whilst $\underline{r}_{\mathcal{I}} = (\underline{r}_{I})_{I \in \mathcal{I}} := (\sum_{i \in I} r_i)_{I \in \mathcal{I}}$. 
		\end{itemize}
\end{theorem}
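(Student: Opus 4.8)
The plan is to mirror the argument of Theorem \ref{Thm: uniqueness}, i.e.\ the local uniqueness statement, but carried out over the tensor algebra $\mathcal{T}^W(\mathcal{A}^W_{\cdot_{\mathsf{Q}}})$ rather than over $\mathcal{A}^W$ itself, and to leverage the fact that both $\Gamma^W_{\bullet_{\mathsf{Q}}}$ and $(\Gamma^W_{\bullet_{\mathsf{Q}}})'$ are built inductively with respect to the same grading data $(\underline{r},\underline{r}')$ described in Remark \ref{Rem: M} and its tensor-algebra extension, Equation \eqref{Eq: graded decomposition tensor algebra}. First I would fix two admissible local deformation maps $\Gamma^W_{\cdot_{\mathsf{Q}}}$ and $(\Gamma^W_{\cdot_{\mathsf{Q}}})'$ which are compatible with the given nonlocal maps in the sense of item~3 of the statement; existence of such a compatible pair is not a restriction, since by Theorem \ref{Thm: uniqueness} any two local maps differ by the renormalization cocycle $C_{l-1,l'-1}$, and the nonlocal construction of Proposition \ref{Prop: nonlocal deformation map} is built on top of whichever local map one chooses. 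The maps $C_{\underline{r},\underline{r}'}$ to be constructed will then be defined, degree by degree in the multi-index $(\underline{r},\underline{r}')$, precisely as the ``difference'' between applying $(\Gamma^W_{\bullet_{\mathsf{Q}}})'$ and applying $\Gamma^W_{\bullet_{\mathsf{Q}}}$ composed with the already-constructed lower-degree corrections — this is exactly the content of the identity in item~3, read as an inductive \emph{definition} of $C_{\underline{r},\underline{r}'}$ rather than as a property to be verified a posteriori.

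The key steps, in order, are the following. \emph{Step 1:} set up the bookkeeping. On each graded summand $\Gamma^W_{\cdot_{\mathsf{Q}}}(\mathcal{M}_{r_1,r_1'})\otimes\cdots\otimes\Gamma^W_{\cdot_{\mathsf{Q}}}(\mathcal{M}_{r_l,r_l'})$ both nonlocal maps act, and one isolates the discrepancy, which by property~3 of Proposition \ref{Prop: nonlocal deformation map} (the map acts as the identity on $\mathcal{A}^W_{\cdot_{\mathsf{Q}}}$, i.e.\ on the $l=1$ sector) and by the factorization property~2 over disjoint supports is forced to be a $\bullet_{\mathsf{Q}}$-multilinear combination of lower-degree data; this is where the map $C_{\underline{r},\underline{r}'}$ is extracted and where the vanishing condition in item~1 (it kills anything of lower total polynomial degree than the prescribed threshold) is read off. \emph{Step 2:} verify the $G_\psi,G_{\bar\psi}$-intertwining relations in item~2: these follow by applying the corresponding relations for $\Gamma^W_{\bullet_{\mathsf{Q}}}$ and $(\Gamma^W_{\bullet_{\mathsf{Q}}})'$ — namely the last two displayed identities in property~3 of Proposition \ref{Prop: nonlocal deformation map} — to both sides of the defining identity and matching, using that $\circledast$ with $\delta_{\mathrm{Diag}_2}^{\otimes(k-1)}\otimes G_\psi\otimes\delta_{\mathrm{Diag}_2}^{\otimes(j-k)}$ is linear and commutes with the extraction in Step~1. \emph{Step 3:} verify the functional-derivative (Leibniz-type) relations in item~2: differentiate the defining identity in the direction $\zeta$ (resp.\ $\bar\zeta$), use that $\Gamma^W_{\bullet_{\mathsf{Q}}}$ and $(\Gamma^W_{\bullet_{\mathsf{Q}}})'$ both commute with $\delta_\zeta$ (property~3 of Proposition \ref{Prop: nonlocal deformation map}) and that $\delta_\zeta$ acts on $\mathcal{M}_{r_1,r_1'}\otimes\cdots$ by the discrete Leibniz rule lowering exactly one factor's first index by one — this produces the sum $\sum_{p=1}^j C_{\underline{r}(p),\underline{r}'}[\cdots\delta_\zeta u_p\cdots]$ after comparing the lower-degree terms, exactly as in the local proof of Theorem \ref{Thm: uniqueness}. \emph{Step 4:} record item~3 itself, which at this point is a tautology given how $C_{\underline{r},\underline{r}'}$ was defined in Step~1, together with a check that the $\mathcal{P}(1,\dots,j)$-sum over partitions is the correct combinatorial reorganization — this is the standard ``mixing of renormalization freedoms under the cluster expansion'' identity and is purely formal.

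The main obstacle I anticipate is Step~1, specifically proving that the discrepancy between the two nonlocal maps on a given graded summand really is expressible through a \emph{well-defined} $C_{\underline{r},\underline{r}'}$ taking values in the correct module $\mathcal{M}_{r_1,r_1'}\otimes\cdots\otimes\mathcal{M}_{r_j,r_j'}$ — i.e.\ that it is generated by smooth kernels with wavefront set in the appropriate $C_m$-type set — rather than being merely some distribution. This requires combining the wavefront-set and scaling-degree bookkeeping from the proof of Proposition \ref{Prop: deformation map} (the $\mathfrak{Q}_{N,\widetilde N}$ estimates and the partition-refinement argument that upgrades $M^{\bullet}\setminus\mathrm{Diag}^{big}$ to $M^\bullet\setminus\mathrm{Diag}$) with the observation that the two extensions chosen in building $\Gamma^W_{\bullet_{\mathsf{Q}}}$ versus $(\Gamma^W_{\bullet_{\mathsf{Q}}})'$ differ only by a distribution supported on the thin diagonal with the same (bounded) scaling degree — hence, by the Epstein–Glaser classification of such extension ambiguities recalled via \cite[Thm.~6.9]{Brunetti-Fredenhagen-00}, by a compactly-supported smooth density along that diagonal, which is precisely an element of the prescribed module. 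Once this structural point is secured the rest is, as in \cite{DDR20}, a matter of carefully propagating the induction; for this reason, and because the argument is a near-verbatim transcription of the local case combined with the cluster structure already exploited in Proposition \ref{Prop: nonlocal deformation map}, I would present the details only for $j\le 2$ and indicate that the general inductive step is identical to that of \cite[Thm.~4.5]{DDR20}.
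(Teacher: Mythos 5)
Your proposal is correct and takes essentially the same approach as the paper, which in fact omits the proof entirely and refers to \cite[Thm.~5.6]{Dappiaggi:2020gge} for the scalar case: your inductive extraction of $C_{\underline{r},\underline{r}'}$ as the discrepancy between the two nonlocal maps, together with the verification of the intertwining and Leibniz properties via the corresponding properties of $\Gamma^W_{\bullet_{\mathsf{Q}}}$ and the wavefront-set/scaling-degree bookkeeping from Proposition \ref{Prop: deformation map}, is precisely the Epstein--Glaser-style adaptation being invoked. Only minor note: the reference you cite at the end should be \cite[Thm.~5.6]{Dappiaggi:2020gge} (the SPDE paper), not \cite{DDR20}.
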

\begin{proof}
	The proof is omitted since it descends, barring some minor modifications, from that illustrated in \cite[Thm. 5.6]{Dappiaggi:2020gge}.
\end{proof}

\section{Application to the stochastic Thirring Model}\label{Sec: Thirring}
Having discussed how to implement at the level of algebras of functionals the stochastic information carried by expectation values and correlation functions of the random fields, see Sections \ref{Sec: local deformation} and \ref{Sec: nonlocal deformation algebra} respectively, we apply the framework to the analysis of the stochastic Thirring model on $M=\mathbb{R}^2$ endowed with the Euclidean metric $\delta$. As mentioned in the introduction, we stress once more that the model under scrutiny is not directly connected to the stochastic quantization programme, as considered in \cite{Albeverio,DeVecchi}. Denoting with $\xi^\rho$ and $\bar{\xi}_\rho$ the components of the white noise as per Definition \ref{Def: fermionic white noise}, the stochastic equations of our interest are
\begin{equation}\label{Eq: stochastic thirring model}
	\begin{cases}
		\slashed{D}^{\rho}_{\sigma} \psi^{\sigma} - \lambda (\bar{\psi} \gamma^{\mu} \psi)(\gamma_{\mu})_{\sigma}^{\rho}
		\psi^{\sigma} = \xi^{\rho} \\
		\slashed{D}^{*\, \sigma}_{\rho} \bar{\psi}_{\sigma} - \lambda (\bar{\psi} \gamma^{\mu} \psi) \bar{\psi}_{\sigma} (\gamma_{\mu})_{\rho}^{\sigma} = \bar{\xi}_{\rho} \\
	\end{cases},
\end{equation}
where $\slashed{D} := (i \slashed{\partial} -m)$ and $\slashed{D}^* := (i \slashed{\partial} +m)$ denote the massive Dirac operator and its formal adjoint. In addition $\psi^{\rho}, \bar{\psi}_{\rho}: \mathbb{R}^2 \rightarrow \mathbb{C}^2$ are respectively the spinor and cospinor fields in the two-dimensional scenario, while $\lambda \in \mathbb{R}_{+}$ denotes the coupling constant. 
\begin{remark}\label{Rem: spinor_to_cospinor}
	While in the preceding investigation, the r\^{o}le of the spinor and of the cospinor degrees of freedom were kept independent, in the following we need to identify $\bar{\psi}_\lambda\in\Gamma(D^\ast \mathbb{R}^2)\simeq C^\infty(\mathbb{R}^2;\mathbb{C}^2)$ with  $\bar{\psi}_\lambda=\psi^\dagger_{\lambda'}(\gamma_0)_\lambda^{\lambda'}$ where $\gamma_0$ has been introduce in Section \ref{Sec: introduction}. For consistency such correspondence holds true also at the level of  white noise, namely we require that $\bar{\xi}_\lambda=\xi^\dagger_{\lambda'}(\gamma_0)^{\lambda'}_\lambda$. In view of the defining properties of the centered Gaussian random distributions as per Definition \ref{Def: fermionic white noise}, it follows that $\xi^\lambda$ and $\xi^\dagger_{\lambda'}$ are correlated only when $\lambda\neq \lambda'$. In particular
	\begin{align*}
		&\mathbb{E}[\xi^\lambda(x)\xi^\dagger_{\lambda'}(y)]=a^\lambda_{\lambda'}\delta(x-y),\\
		&\mathbb{E}[\xi^\dagger_{\lambda'}(x)\xi^\lambda(y)]=-a^\lambda_{\lambda'}\delta(x-y),
	\end{align*}
where 
\begin{equation*}
	 a^{\lambda}_{\lambda'} := 1- \delta^{\lambda}_{\lambda'} = 
	\begin{cases}
		1 & \text{if $\lambda \ne \lambda'$} \\
		0 & \text{if $\lambda = \lambda'$}
	\end{cases}.
\end{equation*}
\end{remark}
The first step towards the application of the formalism devised in the preceding sections consists of reformulating Equation \eqref{Eq: stochastic thirring model} in the language of functional-valued vector distributions. This can be achieved by replacing the random distributions $\psi^\lambda$ and $\bar{\psi}_\lambda$ appearing on the left hand-side of Equation \eqref{Eq: stochastic thirring model} with polynomial functional-valued
vector distributions $\Psi^\lambda\in\mathcal{D}'_C(\mathbb{R}^2;Pol_W)$ and $\bar{\Psi}_\lambda\in\mathcal{D}'_C(M;Pol_{W})$. Hence, recalling that $G_\psi$ (resp. $G_{\bar{\psi}}$) denotes the fundamental solution of the Dirac operator (resp. of its formal adjoint), see Equation \eqref{Eq: fundamental solution Dirac}, Equation \eqref{Eq: stochastic thirring model} translates to the following identities at the level of functionals:
\begin{equation}\label{Eq: functional thirring}
	\begin{cases}
		\Psi^{\rho} = \Phi^{\rho} + \lambda (G_{\psi})^{\rho}_{\rho'} \circledast [(\bar{\Psi} \gamma_{\mu} \Psi) (\gamma^{\mu})^{\rho'}_{\rho''} \Psi^{\rho''}]\\
		\bar{\Psi}_{\rho} = \bar{\Phi}_{\rho} + \lambda (G_{\bar{\psi}})_{\rho}^{\rho'} \circledast [(\bar{\Psi} \gamma_{\mu} \Psi) \bar{\Psi}_{\rho''} (\gamma^{\mu})_{\rho'}^{\rho''}]
	\end{cases},
\end{equation}
where $\Phi^\rho\in\mathcal{D}'(\mathbb{R}^2;\mathrm{Pol}_W)$ and $\bar{\Phi}_\rho\in\mathcal{D}'(\mathbb{R}^2;Pol_{W})$ as per Example \ref{Ex: basic functionals} are the functional counterparts of $\widehat{\varphi}^\lambda:=(G_{\psi})^\lambda_{\lambda'}\circledast \xi^{\lambda'}$ and $\widehat{\bar{\varphi}}_\lambda:=(G_{\bar{\psi}})_\lambda^{\lambda'}\circledast \bar{\xi}_{\lambda'}$, respectively. 

In the following we are interested in constructing a perturbative solution of Equation \eqref{Eq: functional thirring}, namely we write 
\begin{equation}
	\label{Eq: perturbative expansion}
	\begin{cases}
		\Psi^{\rho} [[\lambda]] = \sum_{k \ge 0} \lambda^k F^{\rho}_k \in \mathcal{A}^V [[\lambda]], \, \, \, \, \, \, \, \, \, F^{\rho}_k \in \mathcal{A}^V,\\
		\bar{\Psi}_{\rho} [[\lambda]] = \sum_{k \ge 0} \lambda^k (\widetilde{F}_{\rho})_{k} \in \mathcal{A}^{V^*}[[\lambda]], \, \, (\widetilde{F}_{\rho})_{k} \in \mathcal{A}^{V^*},
	\end{cases}
\end{equation}
where $\mathcal{A}^{V},\mathcal{A}^{V^*}$ are as per Definition \ref{Def: A} with $V\simeq V^*\simeq\mathbb{C}^2$.
The coefficients $F^{\rho}_k$ and $(\widetilde{F}_{\rho})_{k}$ are completely determined by an iterative procedure on Equation \eqref{Eq: functional thirring} or, in a slightly more systematic way, as
\begin{equation}
	F^{\rho}_k=\frac{1}{k!}\frac{d^k}{d^k\lambda}\Psi^\rho\Big\vert_{\lambda=0},\qquad (\widetilde{F}_{\rho})_{k}=\frac{1}{k!}\frac{d^k}{d^k\lambda}\bar{\Psi}_\rho\Big\vert_{\lambda=0}.
\end{equation}
The explicit expressions for the lower order coefficients read
\begin{flalign}
	\label{F0}
	&F^{\rho}_0 = \Phi^{\rho}, \, \, \, (\widetilde{F}_{\rho})_0 = \bar{\Phi}_{\rho},\\
	\label{F1}
	&F^{\rho}_1 = (G_{\psi})^{\rho}_{\rho'} \circledast [(\bar{\Phi} \gamma_{\mu} \Phi) (\gamma^{\mu})^{\rho'}_{\rho''} \Phi^{\rho''}], \, \, \, (\widetilde{F}_{\rho})_1 = (G_{\bar{\psi}})_{\rho}^{\rho'} \circledast [(\bar{\Phi} \gamma_{\mu} \Phi) \bar{\Phi}_{\rho''} (\gamma^{\mu})_{\rho'}^{\rho''}],\\
	\label{F2}
	&F^{\rho}_2 = (G_{\psi})^{\rho}_{\rho'} \circledast [(\bar{\Phi} \gamma_{\mu} \Phi) (\gamma^{\mu})^{\rho'}_{\rho''} F^{\rho''}_1 + (\widetilde{F}_1 \gamma_{\mu} \Phi) (\gamma^{\mu})^{\rho'}_{\rho''} \Phi^{\rho''} + (\bar{\Phi} \gamma_{\mu} F_1) (\gamma^{\mu})^{\rho'}_{\rho''} \Phi^{\rho''}], \\ \notag
	&(\widetilde{F}_{\rho})_2 = (G_{\bar{\psi}})_{\rho}^{\rho'} \circledast [(\bar{\Phi} \gamma_{\mu} \Phi) (\widetilde{F}_{\rho''})_1 (\gamma^{\mu})_{\rho'}^{\rho''} + (\widetilde{F}_1 \gamma_{\mu} \Phi) \bar{\Phi}_{\rho''} (\gamma^{\mu})_{\rho'}^{\rho''}  + (\bar{\Phi} \gamma_{\mu} F_1) \bar{\Phi}_{\rho''}(\gamma^{\mu})^{\rho'}_{\rho''}].
\end{flalign}
The higher order coefficients are completely specified by the lower order ones. By an induction argument, we can deduce that the $k$-th order perturbative coefficients are specified by
\begin{flalign}
	\label{Fk}
	F^{\rho}_k &= \sum_{ \substack{k_1, k_2, k_3 \in \mathbb{N}_0 \\ k_1 + k_2 + k_3 = k-1} } (G_{\psi})^{\rho}_{\rho'} \circledast [(\widetilde{F}_{k_1} \gamma_{\mu} F_{k_2}) (\gamma^{\mu})^{\rho'}_{\rho''} F^{\rho''}_{k_3}], \, \, \, \text{for $k \ge 3$}\,, \\
	\label{tildeFk}
	(\widetilde{F}_{\rho})_{k} &= \sum_{ \substack{k_1, k_2, k_3 \in \mathbb{N}_0 \\ k_1 + k_2 + k_3 = k-1} } (G_{\bar{\psi}})_{\rho}^{\rho'} \circledast [(\widetilde{F}_{k_1} \gamma_{\mu} F_{k_2}) (\widetilde{F}_{\rho''})_{k_3} (\gamma^{\mu})^{\rho''}_{\rho'}], \, \, \, \text{for $k \ge 3$}\,. 
\end{flalign} 
Having established the form of the perturbative expansion of the solutions, we are in a position to use the deformation maps $\Gamma_{\cdot_{\mathsf{Q}}}^W$ and $\Gamma_{\bullet_{\mathsf{Q}}}^W$ introduced in Propositions \ref{Prop: deformation map} and \ref{Prop: nonlocal deformation map} to compute the expectation values and correlation functions of the solutions at any order in perturbation theory.

\subsection{Expectation values}\label{Sec: expectation values}
As a warm up we compute the expectation values of $\Psi^{\rho}$ and $\bar{\Psi}_{\rho}$ at first order in the expansion with respect to the coupling constant $\lambda > 0$. To this avail, let us consider the perturbative solution $\Psi^{\rho}[[\lambda]]$ as per Equation \eqref{Eq: perturbative expansion} and, exploiting the linearity of $\Gamma_{\cdot_{\mathsf{Q}}}^W$, see Proposition \ref{Prop: deformation map}, we define
\begin{equation}\label{Eq: deformed solution}
	\Psi^{\rho}_{\cdot_{\mathsf{Q}}} [[\lambda]] := \Gamma^W_{\cdot_{\mathsf{Q}}} (\Psi^{\rho} [[\lambda]]) = \sum_{k} \lambda^{k} \Gamma^W_{\cdot_{\mathsf{Q}}} (F^{\rho}_k) \in \mathcal{A}^W_{\cdot Q} [[\lambda]]\,, 
\end{equation}
being $W: = \bigotimes_{k,k' \ge 0} (V^{\otimes k} \oplus (V^*)^{\otimes k'})$ where $V\equiv\Sigma_2\simeq\mathbb{C}^2$. 
Combining Equations \eqref{Eq: perturbative expansion} and \eqref{F1} with the properties enjoyed by $\Gamma_{\cdot_{\mathsf{Q}}}^W$, see Proposition \ref{Prop: deformation map}, it descends that
\begin{flalign}\label{Eq: expectation value solution}
	\mathbb{E} [\hat{\psi}^{\rho}_{\eta} [|\lambda |] (f)] &= \Gamma^{W}_{\cdot_\mathsf{Q}} (\Psi^{\rho} [[\lambda]]) (f; \eta, \bar{\eta}) \\ \notag &= \Gamma^{W}_{\cdot_\mathsf{Q}} (\Phi^{\rho})(f; \eta, \bar{\eta}) + \lambda \Gamma^{W}_{\cdot_\mathsf{Q}} ((G_{\psi})^{\rho}_{\rho'} \circledast [(\bar{\Phi} \gamma_{\mu} \Phi) (\gamma^{\mu})^{\rho'}_{\rho''} \Phi^{\rho''}]) (f; \eta, \bar{\eta}) + \mathcal{O}(\lambda^2) \\ \notag &
	= \Phi^{\rho} (f; \eta, \bar{\eta}) + \lambda (G_{\psi})^{\rho}_{\rho'} \circledast \Gamma^{W}_{\cdot_\mathsf{Q}} [(\bar{\Phi} \gamma_{\mu} \Phi) (\gamma^{\mu})^{\rho'}_{\rho''} \Phi^{\rho''}] (f; \eta, \bar{\eta}) + \mathcal{O}(\lambda^2). 
\end{flalign}
With $\mathbb{E}[\hat{\psi}^{\rho}_{\eta}[|\lambda |]]$ we denote the $\eta$-shifted expectation of the solution, namely considering an $\eta$-shifted Gaussian random field $\widehat{\varphi}^{\rho}_{\eta} := (G_{\psi})^{\rho}_{\rho'} \circledast \xi^{\rho} + \eta^{\rho}$ as initial condition. To recover the centred character of white noise, it suffices to evaluate $\Gamma^{W}_{\cdot_\mathsf{Q}} (\Psi^{\rho} [[\lambda]]) (f; \eta, \bar{\eta})$ at $\eta=0$. This forces also $\bar{\eta}$ to vanish on account of the constraint $\bar{\eta}_\lambda=\eta^\dagger_{\lambda'}(\gamma_0)_\lambda^{\lambda'}$. It follows that
\begin{flalign*}
	\Gamma^W_{\cdot_\mathsf{Q}} [(\bar{\Phi} \gamma_{\mu} \Phi) &(\gamma^{\mu})^{\rho'}_{\rho''} \Phi^{\rho''}] (f; \eta, \bar{\eta}) = [\Gamma^W_{\cdot_\mathsf{Q}}(\bar{\Phi} \gamma_{\mu} \Phi) \cdot_{\mathsf{Q}} (\gamma^{\mu})^{\rho'}_{\rho''} (\Phi^{\rho''})] (f; \eta, \bar{\eta}) \\ &=  [(\bar{\Phi} \gamma_{\mu} \Phi) (\gamma^{\mu})^{\rho'}_{\rho''} \Phi^{\rho''}] (f; \eta, \bar{\eta}) + (\delta_{\mathrm{Diag}_2} \otimes \widetilde{Q}) \cdot (\delta_{\mathrm{Diag}_2} \Phi \tilde{\otimes} \delta_{\mathrm{Diag}_2})(f \otimes 1_3; \eta, \bar{\eta}) \\ &=  [(\bar{\Phi} \gamma_{\mu} \Phi) (\gamma^{\mu})^{\rho'}_{\rho''} \Phi^{\rho''}] (f; \eta, \bar{\eta}) + \Phi^{\rho'} \widetilde{C}^{\rho}_{\rho'} (f; \eta, \bar{\eta})\,,
\end{flalign*}
being $\widetilde{C} \in \mathcal{E}(\mathbb{R}^2; \mathbb{C}^2 \oplus \mathbb{C}^2)$ a smooth function with integral kernel specified by 
\begin{equation}
	\label{tildeC}
	\widetilde{C} (x) = \chi(x) \, \widetilde{G_{\psi} \cdot G_{\bar{\psi}}} \, (\delta_x \otimes \chi),
\end{equation}
where $\chi \in C^{\infty}_0(\mathbb{R}^2)$ is an arbitrary cutoff function necessary to avoid infrared divergences. At the same time $\widetilde{G_{\psi} \cdot G_{\bar{\psi}}} \in \mathcal{D}'(\mathbb{R}^2 \times \mathbb{R}^2; W)$ is any but fixed extension of the ill-defined product $G_{\psi} \cdot G_{\bar{\psi}} \in \mathcal{D}'(\mathbb{R}^2 \times \mathbb{R}^2 \setminus \mathrm{Diag}_2; W)$ to the thin diagonal of $\mathbb{R}^2$, see the proof of Proposition \ref{Prop: deformation map}. Focusing once more on Equation \eqref{Eq: expectation value solution}, the result reads
\begin{equation}
	\label{Eq: expectation value final}
	\mathbb{E} [\hat{\psi}^{\rho}_{\eta}[|\lambda |](f)] = \Phi^{\rho} (f; \eta, \bar{\eta}) + \lambda (G_{\psi})^{\rho}_{\rho'} \circledast \{[(\bar{\Phi} \gamma_{\mu} \Phi) (\gamma^{\mu})^{\rho'}_{\rho''} \Phi^{\rho''}] + \Phi \widetilde{C} \} (f; \eta, \bar{\eta}) + \mathcal{O}(\lambda^2)\,.
\end{equation}
An identical line of reasoning entails that
\begin{align}\label{Eq: expectation barred solution}
	\mathbb{E} [(\hat{\bar{\psi}}_{\rho})_{\eta} [|\lambda |] (f)] = \bar{\Phi}_{\rho} (f; \eta, \bar{\eta}) + \lambda (G_{\bar{\psi}})_{\rho}^{\rho'} \circledast \{[(\bar{\Phi} \gamma_{\mu} \Phi) \bar{\Phi}_{\rho''}(\gamma^{\mu})_{\rho'}^{\rho''}] + \Phi C \} (f; \eta, \bar{\eta}) + \mathcal{O}(\lambda^2)\,, 
\end{align}
where $C \in \mathcal{E}(\mathbb{R}^2; \mathbb{C}^2 \oplus \mathbb{C}^2)$ is such that
\begin{equation}
	\label{C}
	C(x) := \chi(x) \widetilde{(G_{\bar{\psi}} \cdot G_{\psi})} (\delta_x \otimes \chi),   
\end{equation}
being $\chi \in C^{\infty}_0(\mathbb{R}^2)$ an arbitrary cutoff function. A direct inspection of Equations \eqref{Eq: expectation value final} and \eqref{Eq: expectation barred solution} entails that evaluating at $\eta=\bar{\eta}=0$, it descends that
\begin{equation}\label{Eq: 2nd order 0}
	\begin{cases}
		\mathbb{E} [\widehat{\psi}^{\rho}_0 [[\lambda]] ] = \mathcal{O}(\lambda^2) \\
		\mathbb{E} [(\widehat{\bar{\psi}}_{0})_\rho [[\lambda]] ] = \mathcal{O}(\lambda^2)
	\end{cases}.
\end{equation}
This result is a direct consequence of the cubic form of the interaction in Equation \eqref{Eq: functional thirring}. The following theorem extends the outcome of Equation \eqref{Eq: 2nd order 0} to all perturbative orders. We enunciate it for $\Psi^{\rho}[|\lambda |]$ only, though the same conclusion can be drawn for $\bar{\Psi}_{\rho}[|\lambda |]$.
\begin{theorem}
	\label{expvalue0}
	Consider the perturbative solution $\Psi^{\rho}[|\lambda |] \in \mathcal{A}^W [[\lambda]]$ of Equation \eqref{Eq: functional thirring}. Then, for any $f \in \mathcal{D}(\mathbb{R}^2)$, it descends that
	\begin{equation*}
		\mathbb{E} [\hat{\psi}^{\rho}_0 [[\lambda]] (f) ] := \Gamma^W_{\cdot_{\mathsf{Q}}} (\psi^{\rho} [[\lambda]]) (f; 0, 0) = 0.
	\end{equation*}    
\end{theorem}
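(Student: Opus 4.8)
The plan is to establish the stronger, order-by-order statement $\Gamma^W_{\cdot_{\mathsf{Q}}}(F^{\rho}_k)(f;0,0)=0$ for every $k\in\mathbb{N}_0$, whence $\mathbb{E}[\hat{\psi}^{\rho}_0[[\lambda]](f)]=\sum_{k}\lambda^k\,\Gamma^W_{\cdot_{\mathsf{Q}}}(F^{\rho}_k)(f;0,0)=0$ by \eqref{Eq: deformed solution}; the argument for $\bar{\Psi}_{\rho}[[\lambda]]$ is verbatim the same. The mechanism is a parity argument. On $\mathcal{A}^W$, reduce the total polynomial degree $r+r'$ in $\psi,\bar{\psi}$ (see Remark \ref{Rem: M}) modulo $2$, obtaining a $\mathbb{Z}_2$-grading $\mathcal{A}^W=\mathcal{A}^W_{\mathrm{ev}}\oplus\mathcal{A}^W_{\mathrm{odd}}$. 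By Example \ref{Ex: basic functionals}, $\mathbf{1}$ and all smooth coefficients in $\mathcal{E}(M;W)$ are even, while $\Phi^{\rho}$ and $\bar{\Phi}_{\rho'}$ are odd; the pointwise product adds degrees, and $G_{\psi}\circledast\cdot$, $G_{\bar{\psi}}\circledast\cdot$ act only on the test-function slot and hence preserve $r+r'$. Therefore every building block of $\mathcal{A}^W$ respects this $\mathbb{Z}_2$-grading.

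The first step is to prove, by strong induction on $k$, that $F^{\rho}_k\in\mathcal{A}^W_{\mathrm{odd}}$ and $(\widetilde{F}_{\rho})_k\in\mathcal{A}^W_{\mathrm{odd}}$. The base case $k=0$ is Equation \eqref{F0}, since $\Phi^{\rho}$ and $\bar{\Phi}_{\rho}$ have degree $1$; the cases $k=1,2$ follow by inspection of \eqref{F1} and \eqref{F2}. For the inductive step, Equations \eqref{Fk} and \eqref{tildeFk} exhibit $F^{\rho}_k$ and $(\widetilde{F}_{\rho})_k$ as convolutions with $G_{\psi}$ or $G_{\bar{\psi}}$ of finite sums of terms of the form $(\widetilde{F}_{k_1}\gamma_{\mu}F_{k_2})\,(\gamma^{\mu})^{\rho'}_{\rho''}\,F^{\rho''}_{k_3}$, that is, pointwise products of three lower-order coefficients with gamma matrices inserted. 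The gamma matrices only contract spinor indices and leave the field degree unchanged, and by the inductive hypothesis each of the three factors is odd; since a sum of three odd integers is odd and convolution does not alter the degree, the claim follows.

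The second step is to verify that $\Gamma^W_{\cdot_{\mathsf{Q}}}$ sends $\mathcal{A}^W_{\mathrm{odd}}$ into functionals of $\mathcal{D}'_C(M;\mathsf{Pol}_W)$ all of whose homogeneous components in $\psi,\bar{\psi}$ have odd degree. By the formal identity \eqref{eq1pr} used in the proof of Proposition \ref{Prop: deformation map}, it is enough to track degrees through the product $\cdot_{\mathsf{Q}}$: by Equation \eqref{Eq: expression deformed product}, if $u_1,u_2$ are homogeneous of total degrees $d_1,d_2$, the summand with $k_1$ factors $Q$ and $k_2$ factors $\widetilde{Q}$ has total degree $d_1+d_2-2(k_1+k_2)$, since each $Q$ (resp. $\widetilde{Q}$) saturates exactly one $\psi$-derivative and one $\bar{\psi}$-derivative. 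Hence $\cdot_{\mathsf{Q}}$ shifts the total degree by an even amount, and, combined with items $1$ and $2$ of Proposition \ref{Prop: deformation map}, which pin down $\Gamma^W_{\cdot_{\mathsf{Q}}}$ on $\mathcal{M}_{1,0}$, $\mathcal{M}_{0,1}$ and make it commute with the convolutions, this propagates the parity through the whole inductive construction of the deformation map; the choices of extension of the singular distributions are irrelevant here, as they affect neither the wavefront set nor the combinatorics of which derivative slots get contracted. To conclude, a functional whose homogeneous components all have degree $\ge 1$ is annihilated by evaluation at the zero configuration, because each such component, once the pointwise products are computed, carries at least one factor $\psi^{\lambda}(x)$ or $\bar{\psi}_{\mu}(x)$ vanishing at $\psi=\bar{\psi}=0$. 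Thus $\Gamma^W_{\cdot_{\mathsf{Q}}}(F^{\rho}_k)(f;0,0)=0$ for all $k$, and the theorem follows. I expect the only real care to be needed in the degree bookkeeping of the second step — in particular in checking that all the spinor-index contractions hidden in $\tilde{\otimes}$ and in Equations \eqref{Fk}–\eqref{tildeFk} genuinely leave the $\psi$/$\bar{\psi}$ degree untouched — which holds precisely because those contractions live on the finite-dimensional spinor spaces and never involve the functional arguments.
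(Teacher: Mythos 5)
Your proof is correct and follows essentially the same route as the paper: both arguments are parity arguments that (i) show $F^{\rho}_k$ has odd total polynomial degree in the fields via the cubic structure of the interaction and induction, and (ii) observe that $\Gamma^W_{\cdot_{\mathsf{Q}}}$ only contracts $\Phi$--$\bar{\Phi}$ pairs, hence shifts the degree by even amounts, so evaluation at the zero configuration kills everything. The paper packages the parity condition into a subspace $\mathsf{O}$ (those $u$ with $u^{(k_1,k_2)}(\cdot;0,0)=0$ whenever $k_1+k_2$ is even) and verifies its closure under $\Gamma^W_{\cdot_{\mathsf{Q}}}$, triple products and conjugation, which is exactly your $\mathbb{Z}_2$-grading phrased differently; your added remark that the choice of distributional extension does not affect the parity bookkeeping is a helpful point the paper leaves implicit.
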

\begin{proof}
	On account of Equations \eqref{Eq: deformed product} and \eqref{eq1pr}, the application of $\Gamma^W_{\cdot_{\mathsf{Q}}}$ to a generic element $u \in \mathcal{A}^W$ yields a sum of a finite number of terms with a decreasing polynomial degree in $\Phi^{\rho}$ and $\bar{\Phi}_{\rho}$. To wit, since the computation involves contractions between pair of fields $\Phi^{\rho}$ and $\bar{\Phi}_{\rho}$, the polynomial degree of each addend in the sum is decreased by a multiple of $2$. Therefore, to prove the thesis, it suffices to show that each of the coefficients $F^{\rho}_k$ defined in Equation \eqref{Fk} belongs to $\mathcal{M}_{r,r'}$  with $r+r' = 2n +1$, $n \in \mathbb{N}$ - see Remark \ref{Rem: M}.
	Hence, let us introduce the vector space $\mathsf{O} \subset \mathcal{A}^W$ which contains all polynomial functional-valued vector distributions having an odd polynomial degree in the fields. In other words, $u \in \mathsf{O}$ if the $(k_1, k_2)$-th order derivative of $u$ as per Definition \ref{Def: functional derivatives} is such that  $u^{(k_1, k_2)} (f; 0, 0) = 0$ with $k_1 + k_2 = 2n, n \in \mathbb{N}$, for all $f \in \mathcal{D}(\mathbb{R}^2)$. The space $\mathsf{O}$ enjoys some notable properties which are of straightforward verification, \textit{i.e.}, 
	\begin{itemize}
		\item [a.] $\mathsf{O}$ is closed under the action of the deformation map $\Gamma^W_{\cdot_{\mathsf{Q}}}$, that is the action of $\Gamma^W_{\cdot_{\mathsf{Q}}}$ on any linear combination with matrix coefficients of elements in $\mathsf{O}$ still lies in $\mathsf{O}$,
		\item [b.] for all $u_1, u_2, u_3 \in \mathsf{O}$, also their pointwise product $u_1 u_2 u_3$ belongs to $\mathsf{O}$,
		\item [c.] for any $u \in \mathsf{O}$, then $\bar{u} \in \mathsf{O}$, where the bar entails that all spinor fields are mapped into cospinors and viceversa following Remark \ref{Rem: spinor_to_cospinor}.
	\end{itemize}
	In the case of interest, this translates into proving that $\Psi^{\rho} [[\lambda]] \in \mathsf{O}$. Equation \eqref{Eq: perturbative expansion} entails that this is equivalent to showing that $F^{\rho}_k \in \mathsf{O}$ for all $k \in \mathbb{N}$. As a preliminary step, let us observe that, setting $k=0$, it holds that $F^{\rho}_0 := \Phi^{\rho} \in \mathsf{O}$. 
	By an inductive argument on the index $k$, assuming that $F^{\rho}_j \in \mathsf{O}$ for any $j \le k$, we need to prove that also $F^{\rho}_{k+1} \in \mathsf{O}$. Using Equation \eqref{Fk}, it descends that 
	\begin{equation*}
		F^{\rho}_{k+1} = \sum_{\substack{k_1, k_2, k_3 \in \mathbb{N}_0 \\ k_1 + k_2 + k_3 = k}} (G_{\psi})^{\rho}_{\rho'} \circledast [(\widetilde{F}_{k_1} \gamma_{\mu} F_{k_2}) (\gamma^{\mu})^{\rho'}_{\rho''} F^{\rho''}_{k_3}].
	\end{equation*}
	Being the indices $k_1, k_2, k_3 \le k$, the inductive hypothesis implies that $F_{k_1}, F_{k_2}, F_{k_3} \in \mathsf{O}$. From property c. of $\mathsf{O}$, it follows that also $\widetilde{F}_{k_1} \in \mathsf{O}$, whilst property b. together with property a. entail that $(\widetilde{F}_{k_1} F_{k_2}) (\gamma^{\mu})^{\rho'}_{\rho''} F^{\rho''}_{k_3} \in \mathsf{O}$. The thesis is thus proven. 
\end{proof}

\subsection{Two-point correlation functions}\label{Sec: Two-point correlation functions}
Having characterized the expectation values of the perturbative solution of Equation \eqref{Eq: stochastic thirring model}, we investigate the structure of the correlation functions. Following the approach adopted in Section \ref{Sec: expectation values}, the deformed algebra of functionals  $\mathcal{A}_{\bullet_{\mathsf{Q}}}^W:=\Gamma_{\bullet_{\mathsf{Q}}}^W(\mathcal{T}^W(\mathcal{A}_{\cdot_{\mathsf{Q}}}^W))$ as per Theorem \ref{Thm: nonlocal deformed algebra} comes into play. Henceforth, to lighten the notation, we will omit the spinor and cospinor indices in the functional expressions of the fields. Hence, for any $f_1, f_2 \in \mathcal{D}(\mathbb{R}^2)$, we can write the two-point correlation function of the $\eta-$shifted random fields $\widehat{\psi}^{\rho}_{\eta}$ and $(\widehat{\bar{\psi}}_{\rho})_{\eta}$ with $\eta \in \Gamma(D \mathbb{R}^2) \simeq C^{\infty}(\mathbb{R}^2; \mathbb{C}^2)$ as follows
\begin{flalign}
	\label{2pointeta}
	\mathbb{E}[\widehat{\psi}^{\rho}_{\eta}[[\lambda]](f_1) (\widehat{\bar{\psi}}_{\rho})_{\eta}[[\lambda]](f_2)] &= (\Psi_{\cdot_{\mathsf{Q}}} [[\lambda]] \bullet_{\Gamma^W_{\bullet_{\mathsf{Q}}}} \bar{\Psi}_{\cdot_{\mathsf{Q}}} [[\lambda]])(f_1 \otimes f_2; \eta, \bar{\eta})\\ \notag & = \Gamma^W_{\bullet_{\mathsf{Q}}}[\Gamma^W_{\cdot_{\mathsf{Q}}}(\Psi [[\lambda]]) \otimes \Gamma^W_{\cdot_{\mathsf{Q}}}(\bar{\Psi} [[\lambda]])] (f_1 \otimes f_2; \eta, \bar{\eta}) \\ \notag &= \sum_{k \ge 0} \lambda^k \sum_{\substack{k_1, k_2 \ge 0 \\ k_1 + k_2 = k}} \Gamma^W_{\bullet_{\mathsf{Q}}}[\Gamma^W_{\cdot_{\mathsf{Q}}}(F_{k_1}) \otimes \Gamma^W_{\cdot_{\mathsf{Q}}}(\widetilde{F}_{k_2})] (f_1 \otimes f_2; \eta, \bar{\eta}),
\end{flalign}
where we employed the definition of $\bullet_{\Gamma_{\bullet_{\mathsf{Q}}}^W}$ as per Equation \eqref{Eq: bullet product}, as well as the perturbative decomposition of $\Psi\llbracket\lambda\rrbracket$ and $\bar{\Psi}\llbracket\lambda\rrbracket$, see Equation \eqref{Eq: deformed solution}. Up to the second order in $\lambda$, Equation \eqref{2pointeta} reads 
\begin{flalign}\label{Eq: omega2}
	\mathbb{E}[\widehat{\psi}^{\rho}_{\eta}[[\lambda]](f_1) (\widehat{\bar{\psi}}_{\rho})_{\eta}[[\lambda]](f_2)] &= \underbrace{\Gamma^W_{\bullet_{\mathsf{Q}}}[\Gamma^W_{\cdot_{\mathsf{Q}}}(\Phi) \otimes \Gamma^W_{\cdot_{\mathsf{Q}}}(\bar{\Phi})] (f_1 \otimes f_2; \eta, \bar{\eta})}_{\text{(a)}} \\ \notag &+ \lambda \underbrace{\Gamma^W_{\bullet_{\mathsf{Q}}}\{\Gamma^W_{\cdot_{\mathsf{Q}}}[G_{\psi} \circledast ((\bar{\Phi}\gamma_{\mu} \Phi) \gamma^{\mu} \Phi)] \otimes \Gamma^W_{\cdot_{\mathsf{Q}}}(\bar{\Phi})\} (f_1 \otimes f_2; \eta, \bar{\eta})}_{\text{(b)}} \\ \notag &+ \lambda \underbrace{\Gamma^W_{\bullet_{\mathsf{Q}}} \{ \Gamma^W_{\cdot_{\mathsf{Q}}}(\Phi) \otimes \Gamma^W_{\cdot_{\mathsf{Q}}}[G_{\bar{\psi}} \circledast ((\bar{\Phi}\gamma_{\mu} \Phi) \bar{\Phi} \gamma^{\mu})] \} (f_1 \otimes f_2; \eta, \bar{\eta})}_{\text{(c)}}+\mathcal{O}(\lambda^2),
\end{flalign} 
We observe that, since $\Phi \in \mathcal{M}_{1,0}$ and $\bar{\Phi} \in \mathcal{M}_{0,1}$ - see Remark \ref{Rem: M} -, item 1. of Proposition \ref{Prop: deformation map} entails that $\Gamma^W_{\cdot_{\mathsf{Q}}} (\Phi) = \Phi$ and $\Gamma^W_{\cdot_{\mathsf{Q}}} (\bar{\Phi}) = \bar{\Phi}$. Recalling the properties enjoyed by the map $\Gamma_{\bullet_{\mathsf{Q}}}^W$ in Proposition \ref{Prop: nonlocal deformation map}, we can compute explicitly all contributions
\begin{flalign*}
	\text{(a)} &:= \Gamma^W_{\bullet_{\mathsf{Q}}}(\Phi \otimes \bar{\Phi}) (f_1 \otimes f_2; \eta, \bar{\eta}) = (\Phi \bullet_{\mathsf{Q}} \bar{\Phi}) (f_1 \otimes f_2; \eta, \bar{\eta}) \\ &= \underbrace{(\Phi \otimes \bar{\Phi}) (f_1 \otimes f_2; \eta, \bar{\eta})}_{\text{(a1)}} + \underbrace{Q(f_1 \otimes f_2)}_{\text{(a2)}},
\end{flalign*} 
\begin{flalign*}
	\text{(b)} &:= \Gamma^W_{\bullet_{\mathsf{Q}}}[\Gamma^W_{\cdot_{\mathsf{Q}}}(G_{\psi} \circledast ((\bar{\Phi}\gamma_{\mu} \Phi) \gamma^{\mu} \Phi))] \otimes \bar{\Phi}] (f_1 \otimes f_2; \eta, \bar{\eta}) \\ &=  [(G_{\psi} \circledast ((\bar{\Phi} \gamma_{\mu} \Phi) \gamma^{\mu} \Phi) + G_{\psi} \circledast \Phi \widetilde{C}))  \bullet_{\mathsf{Q}} \bar{\Phi}] (f_1 \otimes f_2; \eta, \bar{\eta}) \\&= [(G_{\psi} \circledast ((\bar{\Phi} \gamma_{\mu} \Phi) \gamma^{\mu} \Phi + \Phi \widetilde{C})) \otimes \bar{\Phi}] (f_1 \otimes f_2; \eta, \bar{\eta}) \\ &+ Q \cdot (G_{\psi} \circledast (\bar{\Phi} \gamma_{\mu} \Phi) \gamma^{\mu} \otimes 1)(f_1 \otimes f_2; \eta, \bar{\eta}) \\ &+   2Q \cdot (G_{\psi} \circledast \bar{\Phi} \Phi \otimes 1) + Q \cdot (G_{\psi} \circledast \widetilde{C}1 \otimes 1) (f_1 \otimes f_2; \eta, \bar{\eta}),
\end{flalign*}
where we used the relation $\gamma^{\mu} \gamma_{\mu} = \delta^{\mu}_{\mu} = 2$ which holds true in the two-dimensional Euclidean scenario, being $\delta$ the Euclidean metric. Eventually the last term reads
\begin{flalign*}
	\text{(c)} &:= \Gamma^W_{\bullet_{\mathsf{Q}}}[\Phi \otimes \Gamma^W_{\cdot_{\mathsf{Q}}}(G_{\bar{\psi}} \circledast ((\bar{\Phi}\gamma_{\mu} \Phi) \bar{\Phi} \gamma^{\mu}))] (f_1 \otimes f_2; \eta, \bar{\eta}) \\ &=  [\Phi \bullet_{\mathsf{Q}} (G_{\bar{\psi}} \circledast ((\bar{\Phi} \gamma_{\mu} \Phi) \bar{\Phi} \gamma^{\mu}) + C \bar{\Phi})] (f_1 \otimes f_2; \eta, \bar{\eta}) \\&= [\Phi \otimes (G_{\bar{\psi}} \circledast ((\bar{\Phi} \gamma_{\mu} \Phi) \bar{\Phi} \gamma^{\mu} + C \bar{\Phi}))] (f_1 \otimes f_2; \eta, \bar{\eta}) \\ &+ Q \cdot (1 \otimes G_{\bar{\psi}} \circledast (\bar{\Phi} \gamma_{\mu} \Phi) \gamma^{\mu})(f_1 \otimes f_2; \eta, \bar{\eta}) \\&+ Q \cdot (1 \otimes G_{\bar{\psi}} \circledast (\gamma_{\mu} \Phi \bar{\Phi} \gamma^{\mu}))(f_1 \otimes f_2; \eta, \bar{\eta}) \\ &+ Q \cdot (1 \otimes G_{\psi} \circledast \widetilde{C}1) (f_1 \otimes f_2; \eta, \bar{\eta}).
\end{flalign*}
As a result, evaluating at $\eta=0$ and thus also at $\bar{\eta}=0$, the two-point correlation function reads
\begin{flalign}\label{Eq: omega20}
	\mathbb{E}&[\widehat{\psi}^{\rho}_{0}[[\lambda]](f_1) (\widehat{\bar{\psi}}_{\rho})_{0}[[\lambda]](f_2)] = \\ \notag &= Q(f_1 \otimes f_2) + \lambda Q \cdot (G_{\psi} \circledast \widetilde{C}1 \otimes 1) (f_1 \otimes f_2) + \lambda Q (G_{\bar{\psi}} \circledast C1 \otimes 1)(f_1 \otimes f_2) + \mathcal{O}(\lambda^2).
\end{flalign}
Analogously, if we consider
\begin{flalign}
	\label{2pointeta1}
	\mathbb{E}[ (\widehat{\bar{\psi}}_{\rho})_{\eta}[[\lambda]](f_1)\widehat{\psi}^{\rho}_{\eta}[[\lambda]](f_2)] &= (\bar{\Psi}_{\cdot_{\mathsf{Q}}} [[\lambda]] \bullet_{\Gamma^W_{\bullet_{\mathsf{Q}}}} \Psi_{\cdot_{\mathsf{Q}}} [[\lambda]])(f_1 \otimes f_2; \eta, \bar{\eta}), 
\end{flalign}
an identical procedure yields
\begin{flalign}
	\label{omega201}
	\mathbb{E}[& (\widehat{\bar{\psi}}_{\rho})_{0}[[\lambda]](f_1)\widehat{\psi}^{\rho}_{0}[[\lambda]](f_2)] = \\ \notag &= \widetilde{Q}(f_1 \otimes f_2) + \lambda \widetilde{Q} \cdot (1 \otimes G_{\psi} \circledast \widetilde{C}1) (f_1 \otimes f_2) + \lambda \widetilde{Q} (G_{\bar{\psi}} \circledast C1 \otimes 1)(f_1 \otimes f_2) + \mathcal{O}(\lambda^2)\,.
\end{flalign}
\begin{remark}
	When computing the two-point correlation functions of the solutions to Equation \eqref{Eq: stochastic thirring model}, resorting to a procedure analogous to the one adopted in the proof of Theorem \ref{expvalue0}, one can show that the cubic nature of the interaction term forces $(\Psi_{\cdot_{\mathsf{Q}}} [[\lambda]] \bullet_{\Gamma^W_{\bullet_{\mathsf{Q}}}} \Psi_{\cdot_{\mathsf{Q}}} [[\lambda]])(f_1 \otimes f_2; 0, 0)=(\bar{\Psi}_{\cdot_{\mathsf{Q}}} [[\lambda]] \bullet_{\Gamma^W_{\bullet_{\mathsf{Q}}}} \bar{\Psi}_{\cdot_{\mathsf{Q}}} [[\lambda]])(f_1 \otimes f_2; 0, 0)=0$. 
\end{remark}

\subsection{Renormalized Equation}\label{Sec: Renormalized Equation}
In the preceding sections we showed how the deformation maps $\Gamma_{\cdot_{\mathsf{Q}}}^W$ and $\Gamma_{\bullet_{\mathsf{Q}}}^W$ play a key r\^ole in implementing the stochastic properties of the underlying model at the level of functionals. To wit, the expectation value of the perturbative solution coincides with the action of the deformation map $\Gamma^W_{\cdot_{\mathsf{Q}}}$ introduced in Proposition \ref{Prop: deformation map} on Equation \eqref{Eq: functional thirring}. The explicit expression reads
\begin{flalign}\label{Eq: soleqdiracdef}
	\Psi^{\rho}_{\cdot_{\mathsf{Q}}} [[\lambda]] &= \Gamma^W_{\cdot_{\mathsf{Q}}} (\Psi^{\rho} [[\lambda]]) = \Gamma^W_{\cdot_{\mathsf{Q}}} (\Phi^{\rho} + \lambda (G_{\psi})^{\rho}_{\rho'} \circledast [(\bar{\Psi} \gamma^{\mu} \Phi)(\gamma_{\mu})^{\rho'}_{\rho''} \Psi^{\rho''}]) \\ \notag &= \Phi^{\rho} + \lambda (G_{\psi})^{\rho}_{\rho'} \circledast [(\bar{\Psi}_{\cdot_{\mathsf{Q}}} \gamma^{\mu} \cdot_{\mathsf{Q}} \Psi_{\cdot_{\mathsf{Q}}})(\gamma_{\mu})^{\rho'}_{\rho''} \cdot_{\mathsf{Q}} \Psi^{\rho''}_{\cdot_{\mathsf{Q}}}],
\end{flalign} 
where $\Psi^{\rho}_{\cdot_{\mathsf{Q}}} = \Gamma^W_{\cdot_{\mathsf{Q}}}(\Psi^{\rho})$ and $(\bar{\Psi}_{\rho})_{\cdot_{\mathsf{Q}}} = \Gamma^W_{\cdot_{\mathsf{Q}}}(\bar{\Psi}_{\rho})$. Despite being well-defined, the expression in Equation \eqref{Eq: soleqdiracdef} is hardly manageable since it involves the deformed product $\cdot_{\mathsf{Q}}$ defined in Equation \eqref{Eq: deformed product}. Hence, it seems natural to wonder whether it is possible to bypass this hurdle, namely reformulating Equation \eqref{Eq: soleqdiracdef} in an equivalent way in terms of the lone pointwise tensor product of functional-valued vector distributions. The answer is positive and it requires the above deformed equation to undergo a suitable renormalization procedure.
This leads to a \textit{renormalised equation} which contains additional contributions known as \emph{counter-terms} in the physics literature. 
\begin{theorem}\label{renequcoeff}    
	On account of Equation \eqref{Eq: functional thirring}, denoting by $\Psi^{\rho} [[\lambda]] \in \mathcal{A}^W [[\lambda]]$ its perturbative solution - whose counterpart in the deformed algebra $\mathcal{A}^W_{\cdot_{\mathsf{Q}}}$ is given by $\Psi^{\rho}_{\cdot_{\mathsf{Q}}} := \Gamma^W_{\cdot_{\mathsf{Q}}} (\Psi^{\rho})$ - there exists a collection of linear operator-valued vector functionals $\{(H_{k})^{\rho}_{\rho'}\}_{k \in \mathbb{N}}$ satisfying the following properties: 
	\begin{itemize}
		\item [1.] for all $k \in \mathbb{N}$ and for any field configuration $(\eta, \bar{\eta}) \in \Gamma(D\mathbb{R}^2 \oplus D^* \mathbb{R}^2)$, it holds that
		\begin{equation}
			\label{Hkmap}
			H_k(\eta, \bar{\eta}) : \Gamma(D\mathbb{R}^2 \oplus D^* \mathbb{R}^2) \rightarrow \Gamma(D\mathbb{R}^2 \oplus D^* \mathbb{R}^2), 
		\end{equation}
		where we omitted the spinor indices for the sake of conciseness. 
		\item [2.] for all $k \in \mathbb{N}$, every $H_k$ depends polynomially on the fields $\eta, \bar{\eta}$ and, for all $i,j, k \in \mathbb{N}$ with $i+j = 2k+1$, one has that
		\begin{equation}
			\label{Hkder}
			H_k^{(i,j)} (0) = 0. 
		\end{equation}
	\end{itemize}
	Furthermore, if we define 
	\begin{equation}
		\label{H}
		H := \sum_{k=1}^{\infty} \lambda^k H_k,
	\end{equation}
	it descends that  
	\begin{equation}
		\label{psiHk}
		\Psi^{\rho}_{\cdot_{\mathsf{Q}}} = \Phi^{\rho} + \lambda (G_{\psi})^{\rho}_{\rho'} \circledast ((\bar{\Psi}_{\cdot_{\mathsf{Q}}} \gamma_{\mu} \Psi_{\cdot_{\mathsf{Q}}}) (\gamma^{\mu})^{\rho'}_{\rho''} \Psi^{\rho''}_{\cdot_{\mathsf{Q}}})+ (G_{\psi})^{\rho}_{\rho'} \circledast (H^{\rho'}_{\rho''} \Psi^{\rho''}_{\cdot_{\mathsf{Q}}}).
	\end{equation}
\end{theorem}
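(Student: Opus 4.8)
The starting point is the deformed equation \eqref{Eq: soleqdiracdef}: it holds because $\Psi^{\rho}_{\cdot_{\mathsf{Q}}}=\Gamma^W_{\cdot_{\mathsf{Q}}}(\Psi^{\rho})$ and, by items 1, 2 and 3 of Proposition \ref{Prop: deformation map} together with the defining relation \eqref{eq1pr}, the map $\Gamma^W_{\cdot_{\mathsf{Q}}}$ intertwines $G_\psi\circledast(\cdot)$ and the directional derivatives while turning every pointwise product into a $\cdot_{\mathsf{Q}}$-product. Thus the content of the theorem is to rewrite the cubic term occurring in \eqref{Eq: soleqdiracdef} by means of the ordinary pointwise product, at the cost of the extra linear contribution $(G_\psi)^{\rho}_{\rho'}\circledast(H^{\rho'}_{\rho''}\Psi^{\rho''}_{\cdot_{\mathsf{Q}}})$. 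First I would expand the two occurrences of $\cdot_{\mathsf{Q}}$ via Definition \ref{Def: deformed product}, splitting the cubic term as the undeformed cubic $(\bar{\Psi}_{\cdot_{\mathsf{Q}}}\gamma_\mu\Psi_{\cdot_{\mathsf{Q}}})(\gamma^\mu)^{\rho'}_{\rho''}\Psi^{\rho''}_{\cdot_{\mathsf{Q}}}$ — the term in which all contraction multi-indices vanish — plus a remainder $\mathcal{C}^{\rho'}$ collecting every term that carries at least one contraction through $Q$ or $\widetilde{Q}$. Since $\Psi_{\cdot_{\mathsf{Q}}},\bar{\Psi}_{\cdot_{\mathsf{Q}}}$ are polynomial functional-valued distributions, $\mathcal{C}^{\rho'}$ is, at each order in $\lambda$, a finite sum, and comparing the two equations reduces the claim to $\lambda\,\mathcal{C}^{\rho'}=H^{\rho'}_{\rho''}\Psi^{\rho''}_{\cdot_{\mathsf{Q}}}$.

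I would then build $H=\sum_{k\ge1}\lambda^k H_k$ inductively on the order $n\ge1$ in $\lambda$. Writing $\Psi_{\cdot_{\mathsf{Q}}}=\sum_{m\ge0}\lambda^m\Psi^{(m)}$ with $\Psi^{(0)}=\Phi$ and $\Psi^{(m)}=\Gamma^W_{\cdot_{\mathsf{Q}}}(F_m)$, the identity $\lambda\,\mathcal{C}^{\rho'}=H^{\rho'}_{\rho''}\Psi^{\rho''}_{\cdot_{\mathsf{Q}}}$ reads at order $\lambda^n$
\[
 H^{\rho'}_{n,\rho''}\,\Phi^{\rho''}\;=\;\big(\mathcal{C}^{\rho'}\big)^{(n-1)}\;-\;\sum_{k=1}^{n-1}H^{\rho'}_{k,\rho''}\,\big(\Psi^{(n-k)}\big)^{\rho''},
\]
where $(\cdot)^{(m)}$ denotes the order-$\lambda^m$ coefficient; the summand $k=n$ has been isolated on the left using $\Psi^{(0)}=\Phi$, and on the right only $H_1,\dots,H_{n-1}$ (already constructed) and the explicitly known $\Psi^{(m)}$, $m\le n-1$ — see \eqref{F0}--\eqref{Fk} — appear. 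Hence $H_n$ is read off from this relation, \emph{provided} its right-hand side can be expressed as a matrix-valued polynomial functional contracted against a single spinor field $\Phi^{\rho''}$ carrying the external route $\rho''$.

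Proving this divisibility is the pivotal step, and it is where the specific index pattern of the Thirring vertex is used: the interaction has exactly one cospinor leg and two spinor legs joined by $\gamma$-matrices, while by \eqref{Eq: Q}--\eqref{Eq: tildeQ} a contraction can only pair a $\psi$-leg of one factor with a $\bar\psi$-leg of another. Therefore every contraction consumes a cospinor leg — the one in $\bar{\Psi}_{\cdot_{\mathsf{Q}}}$, or one produced at higher order inside it — against a spinor leg, so that what is left has net spinor charge $+1$; tracking which surviving factor carries the index flowing into $(\gamma^\mu)^{\rho'}_{\rho''}$, and re-substituting \eqref{Eq: functional thirring} to trade the leftover $\Phi^{\rho''}$ back for $\Psi^{\rho''}_{\cdot_{\mathsf{Q}}}$ at lower order, then yields $H_n$ valued in the correct space. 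I expect this bookkeeping to be the main obstacle; it is handled exactly as in the scalar renormalized-equation constructions of \cite{DDR20,BDR23}, the only genuine novelty being the spinor-index combinatorics.

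It remains to check the stated properties. Property 1 (Equation \eqref{Hkmap}) is immediate, since $H^{\rho'}_{k,\rho''}$ is assembled from $Q$, $\widetilde{Q}$, the $\gamma$-matrices, the fundamental solutions $G_\psi,G_{\bar\psi}$ and functional derivatives of the $\Psi^{(m)}$, hence it is a matrix-valued functional defining, for every $(\eta,\bar\eta)$, a linear map on $\Gamma(D\mathbb{R}^2\oplus D^*\mathbb{R}^2)$. Property 2 (Equation \eqref{Hkder}) follows by a degree count: by the argument in the proof of Theorem \ref{expvalue0} each $F_k$ lies in $\mathsf{O}$ and, reading \eqref{F0}--\eqref{Fk} inductively, its homogeneous components have only odd degree with maximal degree $2k+1$, while $\Gamma^W_{\cdot_{\mathsf{Q}}}$ lowers degrees only by even amounts, so the same holds for $\Psi^{(k)}$; matching homogeneous degrees in the order-$\lambda^n$ identity above then forces $H_k$ to be of even degree not exceeding $2k$, whence $H_k^{(i,j)}(0)=0$ whenever $i+j=2k+1$. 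Since, for each $n$, the identity is an equality of finitely many polynomial functionals, $H=\sum_{k\ge1}\lambda^k H_k$ is a well-defined formal power series satisfying \eqref{psiHk}, which concludes the argument.
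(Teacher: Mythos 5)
Your proposal mirrors the paper's proof: both construct $H_k$ inductively by matching the order-$\lambda^k$ coefficients of the deformed and the renormalized equation, with the pivotal observation being that every contraction induced by $\Gamma^W_{\cdot_{\mathsf{Q}}}$ pairs one $\Phi$ with one $\bar\Phi$, so the cubic Thirring vertex always leaves a surviving unpaired spinor leg and the order-$\lambda^k$ correction factors as $K\Phi^{\rho''}$, whence $H_k := K$; property 2 then follows from the same even-degree count. Your ``re-substituting Equation \eqref{Eq: functional thirring}'' clause is unnecessary (since $\Psi^{(0)} = \Phi$, isolating the $k=n$ summand already produces $H_n\Phi^{\rho''}$ directly), but this is a harmless imprecision in the prose, not a gap.
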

\begin{proof} 
	The strategy of the proof is to exhibit a suitable expression for $H_k$ in Equation \eqref{H}, satisfying all the listed requirements, at every step of the induction procedure. Let us begin by studying the case $k=0$. At the zeroth order in $k$
	\begin{equation*}
		\Psi^{\rho}_{\cdot_{\mathsf{Q}}} = \Phi^{\rho} + \mathcal{O}(\lambda),
	\end{equation*}
	which does not require any correction. At order $k=1$, we can exploit Equation \eqref{F1} to frame $H_1$. More specifically, we compare the left hand-side of Equation \eqref{psiHk} computed by means of the relation $ \Psi^{\rho}_{\cdot_{\mathsf{Q}}} =  \Gamma^{W}_{\cdot_{\mathsf{Q}}}( \Psi^{\rho})$, with the right hand-side, obtaining 
	\begin{flalign*}
		\text{LHS} &= \Gamma^W_{\cdot_{\mathsf{Q}}} (\Psi^{\rho}) = \Gamma^W_{\cdot_{\mathsf{Q}}} [\Phi^{\rho} + \lambda (G_{\psi})^{\rho}_{\rho'} \circledast ((\bar{\Phi} \gamma_{\mu} \Phi) (\gamma^{\mu})^{\rho'}_{\rho''} \Phi^{\rho''})] + \mathcal{O}(\lambda^2) \\ &= \Phi^{\rho} + \lambda (G_{\psi})^{\rho}_{\rho'} \circledast ((\bar{\Phi} \gamma_{\mu} \Phi) (\gamma^{\mu})^{\rho'}_{\rho''} \Phi^{\rho''}) + \lambda (G_{\psi})^{\rho}_{\rho'} \circledast \widetilde{C}^{\rho'}_{\rho''} \Phi^{\rho''} + \mathcal{O}(\lambda^2) \\
		\text{RHS} &= \Phi^{\rho} + \lambda (G_{\psi})^{\rho}_{\rho'} \circledast ((\bar{\Phi} \gamma_{\mu} \Phi) (\gamma^{\mu})^{\rho'}_{\rho''} \Phi^{\rho''}) + \lambda (G_{\psi})^{\rho}_{\rho'} \circledast H^{\rho'}_{\rho''} \Phi^{\rho''} + \mathcal{O}(\lambda^2),
	\end{flalign*}
	being $\widetilde{C} \in \mathcal{E}(\mathbb{R}^2; \mathbb{C}^2 \oplus \mathbb{C}^2)$ as defined in the previous section. Hence, a direct inspection entails that $H_1 = \widetilde{C}$ satisfies all properties listed in the statement of the theorem up to the order $\lambda^2$. \\
	\noindent Let us assume that $H_{j}$ has been consistently assigned for any $j \le k -1$ with $k \ge 1$. We show that we can find a suitable candidate for $H_{k}$, abiding by Equations \eqref{Hkmap}, \eqref{Hkder} and \eqref{psiHk} up to order $\mathcal{O}(\lambda^{k+1})$. Similarly to the case $k=1$, we can rewrite the left hand-side of Equation \eqref{psiHk} as follows 
	\begin{flalign*}
		\text{LHS} &:= T^{\rho}_{k-1} + \Gamma^{W}_{\cdot_{\mathsf{Q}}} \{\lambda^{k} \sum_{k_1 + k_2 + k_3 = k-1} (G_{\psi})^{\rho}_{\rho'} \circledast [(\widetilde{F}_{k_1} \gamma_{\mu} F_{k_2}) (\gamma^{\mu})^{\rho'}_{\rho''} F^{\rho''}_{k_3}]\} + \mathcal{O}(\lambda^{k+1})\\ &= T^{\rho}_{k-1} + \lambda^k \sum_{k_1 + k_2 + k_3 = k-1} \Gamma^{W}_{\cdot_{\mathsf{Q}}}\{(G_{\psi})^{\rho}_{\rho'} \circledast [(\widetilde{F}_{k_1} \gamma_{\mu} F_{k_2}) (\gamma^{\mu})^{\rho'}_{\rho''} F^{\rho''}_{k_3}]\} + \mathcal{O}(\lambda^{k+1})\\
		&= T^{\rho}_{k-1} + \lambda^k \sum_{k_1 + k_2 + k_3 = k-1} (G_{\psi})^{\rho}_{\rho'} \circledast \Gamma^{W}_{\cdot_{\mathsf{Q}}}[(\widetilde{F}_{k_1} \gamma_{\mu} F_{k_2}) (\gamma^{\mu})^{\rho'}_{\rho''} F^{\rho''}_{k_3}] + \mathcal{O}(\lambda^{k+1}), 
	\end{flalign*}
	where we denote by $T_{k-1}$ all the contributions up to the order $\mathcal{O}(\lambda^{k-1})$ and where we exploited Equation \eqref{Eq: perturbative expansion} as well as Equations \eqref{Fk} and \eqref{tildeFk}. For what concerns the right hand-side, one has that 
	\begin{flalign*}
		\text{RHS} &:= T^{\rho}_{k-1} + \lambda^k \sum_{k_1 + k_2 + k_3 = k-1} (G_{\psi})^{\rho}_{\rho'} \circledast [\Gamma^{W}_{\cdot_{\mathsf{Q}}}(\widetilde{F}_{k_1}) \gamma_{\mu} \Gamma^{W}_{\cdot_{\mathsf{Q}}}(F_{k_2})) (\gamma^{\mu})^{\rho'}_{\rho''} \Gamma^{W}_{\cdot_{\mathsf{Q}}}(F^{\rho''}_{k_3})] \\& + \lambda^k \sum_{\substack{k_1 + k_2 = k \\ k_2 > 0}} (G_{\psi})^{\rho}_{\rho'} \circledast [(H_{k_1})^{\rho'}_{\rho''} \Gamma^{W}_{\cdot_{\mathsf{Q}}}(F^{\rho''}_{k_2})] + \lambda^k (G_{\psi})^{\rho}_{\rho'} \circledast [(H_k)^{\rho'}_{\rho''} \Phi^{\rho''}] + \mathcal{O}(\lambda^{k+1}).
	\end{flalign*}
	Since the inductive hypothesis implies that $T^{\rho}_{k-1}$ satisfies Equation \eqref{psiHk} up to order $\mathcal{O}(\lambda^k)$, it does not contribute to the expression for $H_k$ and we can neglect it in the ensuing analysis. \\
	\noindent In order to define the sought linear operator $H_k$ we recall that, in the proof of Theorem \ref{expvalue0}, we introduced the space of vector-valued distributions with an odd polynomial degree in the fields and we denoted it by $\mathsf{O} \subset \mathcal{A}^W$. From the inductive hypothesis it descends that $H_j$ for $j \le k-1$ has an even polynomial degree in $\Phi^{\rho}$ and $\bar{\Phi}_{\rho}$. Furthermore, the specific form of the interaction term $V^{\rho}(\psi, \bar{\psi}):= (\bar{\psi} \gamma_{\mu} \psi) (\gamma^{\mu})^{\rho}_{\rho'} \psi^{\rho'}$ appearing in Equation \eqref{Eq: functional thirring} entails that every perturbative coefficient $F_j$ with $j \in \mathbb{N}$ - see Equation \eqref{Fk} - depends exactly on $n$ cospinor fields $\bar{\Phi}_{\rho}$ and on $n+1$ spinor fields $\Phi^{\rho}$ with $n \in \mathbb{N}, n \le j$. Moreover, since the action of the deformation map $\Gamma^W_{\cdot_{\mathsf{Q}}}$ corresponds to the contraction of a pair of fields $\Phi^{\rho}$ and $\bar{\Phi}_{\rho}$, this operation always leaves a surviving unpaired spinor field. Hence, the second and third contributions in the expansion of the right hand-side of Equation \eqref{psiHk} are of the form $G_{\psi} \circledast u$, with $u \in \mathsf{O}$. More precisely $u:= K\Phi$ where $K$ is a linear operator-valued vector functional which satisfies all the requirements imposed on the renormalization counterterms in the statement of the theorem. Therefore, it holds that 
	\begin{flalign*}
		\Psi^{\rho}_{\cdot_{\mathsf{Q}}} &- \Phi^{\rho} - \lambda (G_{\psi})^{\rho}_{\rho'} \circledast ((\bar{\Psi}_{\cdot_{\mathsf{Q}}} \gamma_{\mu} \Psi_{\cdot_{\mathsf{Q}}}) (\gamma^{\mu})^{\rho'}_{\rho''} \Psi^{\rho''}_{\cdot_{\mathsf{Q}}}) -  (G_{\psi})^{\rho}_{\rho'} \circledast (H^{\rho'}_{\rho''} \Psi^{\rho''}_{\cdot_{\mathsf{Q}}}) \\&= \lambda^k (G_{\psi})^{\rho}_{\rho'} \circledast [(K - H_k)^{\rho'}_{\rho''} \Phi^{\rho''}] + \mathcal{O}(\lambda^{k+1}). 
	\end{flalign*}
	Setting $H_k := K$, all properties listed in the statement of the theorem are fulfilled. Observing that the previous equation holds true up to order $\mathcal{O}(\lambda^{k+1})$, one obtains the sought result. 
\end{proof}

\subsection{On the renormalizability of the Thirring Model}\label{Sec: Renormalizability}
Since renormalization plays a pivotal r\^{o}le in the analysis of the stochastic Thirring model, it is natural to wonder whether, in computing expectation values and correlation functions of the solution, only a finite number of distributions requires a renormalization procedure. This goes under the name of \textit{subcritical regime}. To this end, in this section we shall work on $M \equiv \mathbb{R}^d$ with $d \ge 1$, equipped with the Euclidean metric $\delta$. Yet, we highlight that our results will only depend on the underlying dimension $d$ and therefore, mutatis mutandis, they hold true if we replace $\mathbb{R}^d$ with a Riemannian manifold $(M,g)$.

In the ensuing analysis, we shall generalize the strategy illustrated in \cite[Sec. 5.2]{BDR23} for the complex scalar field to the Fermionic scenario. The problem of studying the regime of sub-criticality of the Thirring model described in Section \ref{Sec: on the model} can be conveniently tackled by resorting to a graphical approach. In other words, every contribution in the perturbative expansion of the solution of Equation \eqref{Eq: functional thirring} can be depicted as a graph, abiding by the following rules: 

\begin{itemize}
	\item the symbol $\fiammifero$ represents the occurrence of the spinor field $\Phi^{\rho} \in \mathcal{A}^W$, whilst $\fiammiferoCC$ corresponds to the cospinor field $\bar{\Phi}_{\rho} \in \mathcal{A}^W$ with $\bar{\Phi}_{\rho} := \Phi^{\dagger}_{\rho'} (\gamma_0)^{\rho'}_{\rho}$. 
	\item the convolution with the fundamental solution of the Dirac operator $G_{\psi}$ is graphically represented by the segment $\propagatore$ , while that with the fundamental solution of its adjoint $G_{\bar{\psi}}$ is denoted by $\propagatoreCC$ .
	\item to pictorially represent the pointwise product between two elements lying in the vector algebra $\mathcal{A}^W$, we join the roots of their respective graphs in a vertex. 
\end{itemize}
\begin{remark}\label{gammamatrixgraph}
	As we shall see in the following, the occurrence of the gamma matrices $\gamma^{\mu}, \mu \in \{0,1\}$ is irrelevant for the ensuing analysis, since the deformation map $\Gamma^W_{\cdot_{\mathsf{Q}}}$ does not act on the underlying matrix structure. Therefore we will omit any reference to all kind of indices in the graphical representation. However, one should bear in mind that each vertex in the aforementioned graph structure comes with a pair of suitably contracted spinor indices.   
\end{remark}

\begin{example}\label{exgraph}
	To make the reader acquainted with the pictorial representation of elements of the vector algebra $\mathcal{A}^W$, let us consider a few representatives which are involved in the perturbative expansion of the solution of Equation \eqref{Eq: functional thirring}, \textit{i.e.}, 
	\begin{equation*}
		\Phi^{\rho} (\gamma^{\mu})^{\rho'}_{\rho} \bar{\Phi}_{\rho'}=
		\begin{tikzpicture}[thick,scale=1.2]
			\draw[red] (0,0) -- (0.2,0.3);
			\filldraw[red] (0.2,0.3)circle (1pt);
			\draw (0,0) -- (-0.2,0.3);
			\filldraw (-0.2,0.3)circle (1pt);,
		\end{tikzpicture}
	\end{equation*}
	\begin{equation*}
		(G_{\psi})^{\rho}_{\rho'}\circledast [(\bar{\Phi} \gamma_{\mu} \Phi) (\gamma^{\mu})^{\rho'}_{\rho''} \Phi^{\rho''}]=
		\begin{tikzpicture}[thick,scale=1.2]
			\draw (0,0) -- (0,0.3);
			\draw[red] (0,0.3) -- (0.3,0.6);
			\filldraw[red] (0.3,0.6)circle (1pt);
			\draw (0,0.3) -- (0,0.7);
			\filldraw (0,0.7)circle (1pt);
			\draw (0,0.3) -- (-0.3,0.6);
			\filldraw (-0.3,0.6)circle (1pt);,
		\end{tikzpicture}
	\end{equation*}
	\begin{equation*}
		(G_{\bar{\psi}})^{\rho}_{\rho'}\circledast [(\bar{\Phi} \gamma_{\mu} \Phi) \bar{\Phi}_{\rho''}(\gamma^{\mu})^{\rho''}_{\rho'}]= \begin{tikzpicture}[thick,scale=1.2]
			\draw[red] (0,0) -- (0,0.3);
			\draw[red] (0,0.3) -- (0.3,0.6);
			\filldraw[red] (0.3,0.6)circle (1pt);
			\draw[red] (0,0.3) -- (0,0.7);
			\filldraw[red] (0,0.7)circle (1pt);
			\draw (0,0.3) -- (-0.3,0.6);
			\filldraw (-0.3,0.6)circle (1pt);
		\end{tikzpicture}.
	\end{equation*}
	Observe that a contraction of the roots of each contribution requires a gamma matrix - see Remark \ref{gammamatrixgraph}. The matrix index $\mu \in \{0,1\}$ plays no r\^ole in the following construction. 
\end{example}
As illustrated in Example \ref{exgraph}, the  graphical rules listed above suffice to represent the perturbative expansion of the solution as per Equation \eqref{Eq: perturbative expansion}. Notwithstanding, the necessity of a renormalization procedure arises only after acting on each contribution with the deformation map $\Gamma^W_{\cdot_{\mathsf{Q}}}$ - see Section \ref{Sec: local deformation} for additional details. Heuristically speaking, the action of $\Gamma^W_{\cdot_{\mathsf{Q}}}$ coincides with a series of contractions between the spinor and cospinor fields $\Phi^{\rho}$ and $\bar{\Phi}_{\rho}$, which are closely linked to the occurrence of powers of $Q := Q_{\psi \bar{\psi}}$ and $\widetilde{Q} := Q_{\bar{\psi} \psi}$. From a graphical point of view, this operation is tantamount to performing a progressive collapse of pairs of leaves into a loop. In addition, we remark that from the properties of the vector-valued Gaussian white noise - see Definition \ref{Def: fermionic white noise} - it descends that the only admissible contractions are between leaves of different colour. The following example illustrates how these concepts can be applied to a simplified case. 
\begin{example}
	For all $f \in \mathcal{D}(\mathbb{R}^d)$ and for any field configuration $\psi \in \Gamma(D \mathbb{R}^d) \simeq C^{\infty}(\mathbb{R}^d; \mathbb{C}^d)$, the diagrammatic counterpart of the following expression, \textit{i.e.},
	\begin{equation*}
		\Gamma^W_{\cdot_{\mathsf{Q}}}(\bar{\Phi}_{\rho'}(\gamma^{\mu})^{\rho'}_{\rho}\Phi^{\rho} )(f; \psi, \bar{\psi}) = \Phi^{\rho} (\gamma^{\mu})^{\rho'}_{\rho} \bar{\Phi}_{\rho'}(f; \psi, \bar{\psi}) + \widetilde{C} (f; \psi, \bar{\psi}) 
	\end{equation*}
	is given by 
	\begin{equation*}
		\Gamma^W_{\cdot_{\mathsf{Q}}}(\bar{\Phi}_{\rho'}(\gamma^{\mu})^{\rho'}_{\rho}\Phi^{\rho})=
		\begin{tikzpicture}[thick,scale=1.2]
			\draw[red] (0,0) -- (0.2,0.3);
			\filldraw[red] (0.2,0.3)circle (1pt);
			\draw (0,0) -- (-0.2,0.3);
			\filldraw (-0.2,0.3)circle (1pt);
		\end{tikzpicture}
		+\fish,
	\end{equation*}
	where the second contribution corresponds to the contraction of the spinor and cospinor fields codified in the action of the deformation map. 
\end{example}
At this stage, we need to isolate all contributions occurring in the perturbative expansion of the solution of the stochastic nonlinear Dirac equation. From the explicit form of Equations \eqref{Fk} and \eqref{tildeFk}, we deduce that each perturbative coefficient in the expansion is built out of elements of the form $G_{\psi} \circledast [(\widetilde{F}_{k1} \gamma_{\mu} F_{k_2}) (\gamma^{\mu})^{\rho'}_{\rho''} F^{\rho''}_{k_3}]$ or $G_{\bar{\psi}} \circledast [(\widetilde{F}_{k1} \gamma_{\mu} F_{k_2}) (\widetilde{F}_{k_3})_{\rho''}(\gamma^{\mu})^{\rho''}_{\rho'}]$. Hence, an inductive reasoning implies that their graphical counterparts must exhibit a nested tree structure with branches of the form 
\begin{equation*}
	\begin{tikzpicture}[thick,scale=1.2]
		\draw (0,0) -- (0,0.3);
		\draw[red] (0,0.3) -- (0.3,0.6);
		\filldraw[red] (0.3,0.6)circle (1pt);
		\draw (0,0.3) -- (0,0.7);
		\filldraw (0,0.7)circle (1pt);
		\draw (0,0.3) -- (-0.3,0.6);
		\filldraw (-0.3,0.6)circle (1pt);
	\end{tikzpicture}
	\hspace{1cm}\text{or}\hspace{1cm}
	\begin{tikzpicture}[thick,scale=1.2]
		\draw[red] (0,0) -- (0,0.3);
		\draw[red] (0,0.3) -- (0.3,0.6);
		\filldraw[red] (0.3,0.6)circle (1pt);
		\draw[red] (0,0.3) -- (0,0.7);
		\filldraw[red] (0,0.7)circle (1pt);
		\draw (0,0.3) -- (-0.3,0.6);
		\filldraw (-0.3,0.6)circle (1pt);
	\end{tikzpicture}.
\end{equation*}
Furthermore, we outline that, if in the computations we increase by one the perturbative order in the coupling constant $\lambda$, at a graphical level we need to add to the existing diagram a vertex of the following form:
\begin{equation*}
	\begin{tikzpicture}[thick,scale=1.2]
		\draw[red] (0,0) -- (0.3,0.3);
		\filldraw[red] (0.3,0.3)circle (1pt);
		\draw (0,0) -- (0,0.4);
		\filldraw (0,0.4)circle (1pt);
		\draw (0,0) -- (-0.3,0.3);
		\filldraw (-0.3,0.3)circle (1pt);
	\end{tikzpicture}
	\hspace{1cm}\text{or}\hspace{1cm}
	\begin{tikzpicture}[thick,scale=1.2]
		\draw[red] (0,0) -- (0.3,0.3);
		\filldraw[red] (0.3,0.3)circle (1pt);
		\draw[red] (0,0) -- (0,0.4);
		\filldraw[red] (0,0.4)circle (1pt);
		\draw (0,0) -- (-0.3,0.3);
		\filldraw (-0.3,0.3)circle (1pt);
	\end{tikzpicture}\,.
\end{equation*}
\noindent This observation allows us to restrict the pool of graphs contributing to the construction of the perturbative solution of Equation \eqref{Eq: functional thirring}. 

The strategy we adopt to estimate the number of graphs in the expansion of the perturbative solution, which yield ill-defined structures, relies on the computation of their degree of divergence, see Equation \eqref{Eq: degree of divergence}. Yet, the presence of non-contracted leaves does not contribute to the degree of divergence of a given graph. Indeed, at a distributional level, these are equivalent to a multiplication by a smooth function $\eta \in \Gamma(D \mathbb{R}^d) \simeq C^{\infty}(\mathbb{R}^d; \mathbb{C}^d)$, which does not affect the overall divergence of the diagram. Thus, it is legit to focus on the analysis of the lone \emph{maximally contracted graphs}, namely those in which all leaves are contracted. The need to select the sole maximally pathological graphs prompts the following definition.

\begin{definition}\label{Def: admissible graphs}
	We call a given graph $\mathcal{G}$ \emph{admissible} if it can be obtained from a maximally contracted diagram associated to any perturbative coefficient $F_k$ as per Equation \eqref{Fk}. 
\end{definition}
\begin{definition}\label{Def: valence}
	Given an admissible graph $\mathcal{G}$, we call \emph{valence} of one of the vertices of $\mathcal{G}$ the number of edges connected to it. 
\end{definition}
Denoting with $u_{\mathcal{G}}$ the distribution associated to the admissible graph $\mathcal{G}$, we have all the tools to delve into the study of the degree of divergence of the perturbative coefficients appearing in Equation \eqref{Eq: perturbative expansion}. They are completely characterized by the following properties:
\begin{itemize}
	\item the graph $\mathcal{G}$ has $N$ vertices of valence $1$ or $4$ as per Definition \ref{Def: valence} and $L$ edges. We pinpoint that, when counting the number of vertices, we include all internal points on which we perform the contractions but not the root of the diagram.  
	\item denoting by $s(e)$ the origin of an edge $e$ - where the letter $s$ stands for ``source'' - and by $t(e)$ its target, the graphical counterpart of each propagator of the form $G_{\psi}(x_{s(e)}, x_{t(e)})$ or of the form $G_{\bar{\psi}}(x_{s(e)}, x_{t(e)})$ is an edge $e$.
	\item let $E_{\mathcal{G}} := \{e_i \, | \, i =1, ..., N\}$ be the collection of edges of the graph $\mathcal{G}$. Then, the distribution $u_{\mathcal{G}}$ has the following associated integral kernel:
	\begin{equation*}
		u_{\mathcal{G}}(x_1, ..., x_N) := \prod_{e \in E_{\mathcal{G}}} G(x_{s(e)}, x_{t(e)}),
	\end{equation*}
	where each $G$ appearing in the above product coincides either with $G_{\psi}$ or $G_{\bar{\psi}}$, depending on the contribution under scrutiny. 
\end{itemize}
The action of $\Gamma^W_{\cdot_{\mathsf{Q}}}$ on the elements occurring in the perturbative expansion in Equation \eqref{Eq: perturbative expansion} yields distributions $u\in\mathcal{D}'(U)$, with $U \subseteq \mathbb{R}^{Nd}$. As already stressed in the proof of Proposition \ref{Prop: deformation map}, the singular support of $u$ is the thin diagonal of $\mathbb{R}^{Nd}$, \textit{i.e.}, 
\begin{equation*}
	\mathrm{Diag}_{Nd}(\mathbb{R}^{Nd}) := \{(x_1, ..., x_{Nd}) \in \mathbb{R}^{Nd} \, | \, x_1 = ... = x_{Nd}\}.
\end{equation*}
In order to tackle the problem of renormalizability, we need to resort to microlocal analytical techniques, specifically to the computation of the degree of divergence of $u$ with respect to the thin diagonal of $\mathbb{R}^{Nd}$. Since the dimension of the manifold of interest is $\dim\mathbb{R}^{Nd} = N d$, whereas that of the singular submanifold $\mathrm{Diag}_{Nd} (\mathbb{R}^{Nd})$ is equal to $d$, a direct computation yields
\begin{equation*}
	\textrm{codim} \, (\mathrm{Diag}_{Nd} (\mathbb{R}^{Nd})) = N d - \dim \, (\mathrm{Diag}_{Nd} (\mathbb{R}^{Nd})) = Nd - d = (N-1)d\,.
\end{equation*}
In the following the scaling degree of the fundamental solution of the Dirac operator and of its formal adjoint at the thin diagonal of $\mathbb{R}^d$ will play a distinguished r\^{o}le. 
Hence it is worth recalling that, on account of Equation \eqref{Eq: scaling degree fundamental solutions} and of the following discussion
	\begin{equation*}
		\text{\textrm{sd}}_{\mathrm{Diag}_2} \, (G_{\psi}) = \text{\textrm{sd}}_{\mathrm{Diag}_2} \, (G_{\bar{\psi}}) = d-1.
	\end{equation*}
For any fixed graph $\mathcal{G}$ with $L$ edges and $N$ vertices, being every convolution with $G_{\psi}$ or $G_{\bar{\psi}}$ represented by an edge, we have that the overall scaling degree of $u_{\mathcal{G}}$ is $L (d-1)$. Thus the degree of divergence associated to a maximally contracted graph $\mathcal{G}$ is
\begin{equation}\label{Eq: divgraph}
	\rho_{\mathrm{Diag}_{Nd} (\mathbb{R}^{Nd})} = L (d-1) - (N-1)d. 
\end{equation}

Our goal is thus to rewrite Equation \eqref{Eq: divgraph} in terms of the order $k$ of the perturbative expansion in the coupling constant, to find a region of the parameter space in which the degree of divergence becomes negative as $k$ grows. Since the structure of the graphs associated to the perturbative solution of the Thirring model as per Equation \eqref{Eq: functional thirring} coincides structurally to that of the stochastic nonlinear Schr\"odinger equation, studied in \cite{BDR23}, we refer to this paper for a proof of the following lemmas. 
\begin{lemma}\label{Lem: vertices}
	The number of fixed vertices of valence $1$ of an admissible graph $\mathcal{G}$ at a perturbative order $k$ is $N_1(k)=2k+1$, whilst that for those of valence $4$ is $N_4(k)=k$.
\end{lemma}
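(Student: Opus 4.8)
The plan is to set up a bijective correspondence between the admissible graphs at perturbative order $k$ and the iterative structure of the perturbative coefficients $F_k^\rho$, $(\widetilde F_\rho)_k$, and then count vertices by induction on $k$. First I would make precise the building blocks: by Equations \eqref{Fk}--\eqref{tildeFk} every perturbative coefficient $F_k^\rho$ with $k\geq 1$ is obtained by convolving $G_\psi$ (or $G_{\bar\psi}$) with a product $(\widetilde F_{k_1}\gamma_\mu F_{k_2})(\gamma^\mu)^{\rho'}_{\rho''}F^{\rho''}_{k_3}$, where $k_1+k_2+k_3=k-1$. Graphically this means that passing from order $k-1$ to order $k$ amounts to attaching a single interaction vertex of valence $4$ — one edge for the outgoing $G_\psi$-leg (the ``trunk'') and three edges towards the sub-trees $F_{k_1},F_{k_2},F_{k_3}$ — exactly as depicted by the two branch diagrams in Section \ref{Sec: Renormalizability}. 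The base case $F_0^\rho=\Phi^\rho$ (Equation \eqref{F0}) is a lone valence-$1$ vertex (a single leaf), so $N_1(0)=1$, $N_4(0)=0$.

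Next I would run the induction. Assume the claim for all orders $<k$: for a sub-coefficient of order $j$ an admissible graph has $N_1(j)=2j+1$ leaves and $N_4(j)=j$ internal valence-$4$ vertices. An admissible graph $\mathcal G$ of order $k$ decomposes as a root valence-$4$ vertex together with three admissible sub-graphs of orders $k_1,k_2,k_3$ with $k_1+k_2+k_3=k-1$. Therefore
\begin{align*}
N_4(k) &= 1 + N_4(k_1)+N_4(k_2)+N_4(k_3) = 1 + (k_1+k_2+k_3) = k,\\
N_1(k) &= N_1(k_1)+N_1(k_2)+N_1(k_3) = (2k_1+1)+(2k_2+1)+(2k_3+1) = 2(k-1)+3 = 2k+1,
\end{align*}
which closes the induction. (The three leaves of the sub-graphs are not consumed when glued to the new vertex: in the branch picture the root vertex is created fresh, so no leaf is destroyed; this is why $N_1$ simply adds.) One should also observe that the same recursion governs $(\widetilde F_\rho)_k$ by the symmetry noted in property c. of the space $\mathsf O$ in the proof of Theorem \ref{expvalue0}, so the count is independent of whether the outermost convolution is with $G_\psi$ or $G_{\bar\psi}$, and independent of the colour pattern of the edges; since maximal contraction (Definition \ref{Def: admissible graphs}) only pairs leaves of opposite colour but does not change their number, the totals $N_1(k)$ and $N_4(k)$ are unaffected by the contraction step.

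The main obstacle I anticipate is purely bookkeeping rather than conceptual: one must be careful that the word ``vertex'' in Definition \ref{Def: valence} includes all internal contraction points but excludes the root of the diagram (as stated in Section \ref{Sec: Renormalizability}), and that a leaf counts with valence $1$ both before and after it is paired into a loop during maximal contraction — a contracted pair of leaves contributes two valence-$1$ vertices joined by an edge, not a single higher-valence vertex. Once this convention is fixed the recursion above is forced by Equations \eqref{Fk}--\eqref{tildeFk}, and since this is structurally identical to the graph recursion for the stochastic nonlinear Schrödinger equation, the detailed combinatorial verification can be imported verbatim from \cite{BDR23}, as the text already signals. Hence I would present the induction in the compact form above and refer to \cite{BDR23} for the remaining routine checks.
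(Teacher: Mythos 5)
Your inductive argument is correct and is exactly the natural structural recursion one would read off from Equations~\eqref{Fk}--\eqref{tildeFk}: each passage from order $k-1$ to order $k$ adds one valence-$4$ vertex (the new pointwise-product node fed by the $G_\psi$ or $G_{\bar\psi}$ stem), while leaves simply accumulate from the three sub-trees, giving $N_4(k)=1+\sum k_i=k$ and $N_1(k)=\sum(2k_i+1)=2k+1$. The paper itself omits the proof and delegates it to \cite{BDR23}, so there is no competing argument to compare against; your write-up supplies precisely the missing bookkeeping. One small caution for the exposition: the word ``fixed'' in the lemma refers to the vertex count of the \emph{uncontracted} tree, and the subsequent passage in Section~\ref{Sec: Renormalizability} then \emph{subtracts} $k$ vertices when passing to the maximally contracted graph ($N(k)=3k+1-k=2k+1$), which indicates that each contraction fuses a leaf pair and removes one vertex rather than leaving two valence-$1$ endpoints as your parenthetical remark suggests. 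This does not affect the proof of the lemma itself, since you correctly count the pre-contraction tree, but the remark about contraction preserving the leaf count should be dropped or rephrased so as not to clash with Equation~\eqref{Nkc}.
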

The results in Lemma \ref{Lem: vertices} must be refined since each contraction affects a pair of fields. Hence there will always remain a non-contracted leaf in every branch. As a result, at any perturbative order, we need to remove $\frac{1}{2}(2k+1-1) = k$ vertices, corresponding to the maximum number of contractions we can perform on the given graph. To wit, for a maximally contracted graph, the total number of free vertices amounts to 
\begin{equation}
	\label{Nkc}
	N(k) = 3k+1 - k = 2k+1. 
\end{equation}
\begin{lemma}
	The number of edges of an admissible graph $\mathcal{G}$ does not depend on the contractions and it is related to the perturbative order $k$ as
	\begin{equation}
		L(k)=3k+1.
	\end{equation}
\end{lemma}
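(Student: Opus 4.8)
The plan is to mirror exactly the inductive bookkeeping already used for Lemma~\ref{Lem: vertices}, keeping track of how the number of edges changes when one passes from perturbative order $k$ to order $k+1$. The key observation is that the edges of an admissible graph are in bijective correspondence with the occurrences of the fundamental solutions $G_\psi$ and $G_{\bar\psi}$ in the associated functional, since by the graphical rules each convolution $G_\psi\circledast(\cdot)$ or $G_{\bar\psi}\circledast(\cdot)$ is depicted by a single segment, while contractions (the action of $\Gamma^W_{\cdot_{\mathsf{Q}}}$) only collapse pairs of leaves and do not create or destroy propagators. Hence $L(k)$ is unaffected by the contraction procedure, which is precisely the first assertion of the statement; it suffices to count the $G$-factors in the \emph{uncontracted} tree.

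First I would establish the base case directly from Equations~\eqref{F0}--\eqref{F1}: $F_0^\rho=\Phi^\rho$ carries no fundamental solution, so $L(0)=1$ only if one counts the root edge; more cleanly, $F_1^\rho=(G_\psi)^\rho_{\rho'}\circledast[(\bar\Phi\gamma_\mu\Phi)(\gamma^\mu)^{\rho'}_{\rho''}\Phi^{\rho''}]$ contains exactly one $G_\psi$, matching $L(1)=4$ once the three leaves emanating from the single internal vertex are also counted as edges in the graphical convention of Example~\ref{exgraph} (where a leaf $\fiammifero$ contributes one segment terminating in a dot). For the inductive step I would invoke the recursion \eqref{Fk}: the order-$(k+1)$ coefficient $F^\rho_{k+1}=\sum_{k_1+k_2+k_3=k}(G_\psi)^\rho_{\rho'}\circledast[(\widetilde F_{k_1}\gamma_\mu F_{k_2})(\gamma^\mu)^{\rho'}_{\rho''}F^{\rho''}_{k_3}]$ is built by taking three subgraphs of total order $k$, joining their roots at a new internal vertex, and prepending one new convolution edge. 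By Lemma~\ref{Lem: vertices} the new vertex has valence $4$ (three incoming branch-roots plus the new propagator), so passing from $k$ to $k+1$ adds the $3k+1$ edges already present in the three subgraphs minus the overlap, plus the new propagator edge; carrying out the arithmetic with $L(k_1)+L(k_2)+L(k_3)$ and the fact that $k_1+k_2+k_3=k$ yields $L(k+1)=L(k)+3=3(k+1)+1$.

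The only subtlety — and the place where I would be most careful — is the precise bookkeeping at the junction vertex: when three rooted subgraphs are glued, one must decide whether the three ``root half-edges'' are new edges or were already counted inside each subgraph, and similarly whether the fresh $G_\psi$-segment that sits below the junction is counted once or identified with one of the roots. This is exactly the combinatorial delicacy that was settled for the Schr\"odinger case in \cite{BDR23}, and since the graphical rules here (one edge per propagator, one terminal segment per leaf, junction at a shared vertex) are structurally identical to those there, the same edge-count recursion applies verbatim; for this reason the statement in the excerpt explicitly refers to \cite{BDR23} for the proof. I would therefore present the argument as: (i) note edges $=$ propagators $+$ leaves and that contractions preserve this count; (ii) verify $L(0)=1$, $L(1)=4$; (iii) run the induction on \eqref{Fk} using Lemma~\ref{Lem: vertices} to identify the valence-$4$ junction, obtaining $L(k+1)=L(k)+3$; (iv) conclude $L(k)=3k+1$. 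The main obstacle is purely notational — making the gluing convention unambiguous — rather than mathematical.
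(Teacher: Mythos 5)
Your overall inductive strategy is sound and recovers the formula, but two points must be tightened. First, the "key observation" that edges are in bijection with occurrences of $G_\psi,G_{\bar{\psi}}$ is false as stated: the segments $\fiammifero$ and $\fiammiferoCC$ terminating in a leaf correspond to $\Phi$ or $\bar{\Phi}$, not to any fundamental solution, yet they are edges. Indeed $F_0=\Phi$ contains no $G$-factor and already has one edge, and $F_1$ contains one $G_\psi$ yet has four edges. You implicitly correct this later -- "(i) note edges $=$ propagators $+$ leaves" -- which is the right invariant (the first set has cardinality $k$, the second $N_1(k)=2k+1$, summing to $3k+1$), but the opening claim should be revised, since as written it would give $L(k)=k$. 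Second, the middle of the inductive step contains an arithmetic slip: the three subgraphs $\widetilde{F}_{k_1},F_{k_2},F_{k_3}$ carry $L(k_1)+L(k_2)+L(k_3)=3(k_1+k_2+k_3)+3=3k+3$ edges in total, not "$3k+1$" as written; adding the single new propagator below the junction gives $3k+4=3(k+1)+1$, which matches your final line. There is also no "overlap" to subtract: the gluing identifies the three \emph{roots}, which are vertices (not half-edges), at the new valence-$4$ internal vertex, so all $3k+3$ pre-existing edges survive intact and exactly one new edge is created. This settles the subtlety you flag.

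For the record, a cleaner argument avoids the recursion entirely. The uncontracted graph of any order-$k$ contribution is a rooted tree whose non-root nodes are precisely the $N_1(k)=2k+1$ valence-$1$ leaves and the $N_4(k)=k$ valence-$4$ internal vertices of Lemma \ref{Lem: vertices}, hence $3k+2$ nodes including the root; a tree on $3k+2$ nodes has $3k+1$ edges, and contracting leaf pairs identifies vertices without creating or deleting edges. The paper itself omits the proof and refers to the structurally identical Schr\"odinger case in \cite{BDR23}, so there is no route to compare against; either your (corrected) induction or this vertex count would serve.
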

\noindent Inverting Equation \eqref{Nkc}, it is possible to express $k$ as a function of $N$, \textit{i.e.},
\begin{equation}\label{Eq: edges}
	k = \frac{N-1}{2}.
\end{equation}
Moreover, since Equation \eqref{Nkc} implies that every admissible configuration is characterized by an odd number of vertices, $N-1$ is even and, thus, $k(N)$ is an integer. Plugging this relation into Equation \eqref{Eq: edges}, one obtains 
\begin{equation}
	\label{Lk1}
	L(k) = 3k + 1 = 3 \left( \frac{N-1}{2} \right) + 1 = \frac{3}{2}N - \frac{1}{2}.
\end{equation}
Since we are interested in the sub-critical regime, we shall consider the case in which only a finite number of admissible graphs exhibits a singular behaviour. This  is tantamount to requiring that, if we consider a sufficiently large number of vertices $N$, the degree of divergence of the admissible graphs as per Definition \ref{Def: admissible graphs} becomes negative. 
\begin{theorem}\label{Thm: subcritical dimension}
	If the dimension $d \le 2$, the number of admissible graphs, which need to be renormalized, is finite.
\end{theorem}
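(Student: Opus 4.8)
The plan is to combine the purely combinatorial identities for $L(k)$ and $N(k)$ established in Lemma \ref{Lem: vertices} and the subsequent lemma with the scaling-degree formula \eqref{Eq: divgraph}, and to show that the degree of divergence $\rho_{\mathrm{Diag}_{Nd}(\mathbb{R}^{Nd})}$ of a maximally contracted admissible graph is eventually negative in $N$ precisely when $d\le 2$. First I would substitute $L(k)=3k+1$ and $N(k)=2k+1$ directly into Equation \eqref{Eq: divgraph}, obtaining
\begin{equation*}
\rho(k)=L(k)(d-1)-(N(k)-1)d=(3k+1)(d-1)-2kd=(d-3)k+(d-1).
\end{equation*}
This is an affine function of the perturbative order $k$, whose slope is $d-3$. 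For $d\le 2$ the slope $d-3\le -1<0$, so $\rho(k)\to-\infty$ as $k\to\infty$; hence there exists $k_0$ (explicitly $k_0=d-1$ suffices, since $\rho(k)<0$ as soon as $k>\tfrac{d-1}{3-d}$) such that every admissible graph at order $k>k_0$ has strictly negative degree of divergence, and therefore $u_\mathcal{G}$ is already well-defined (i.e. locally integrable at the thin diagonal) and needs no renormalization.

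Next I would argue that negativity of $\rho$ for the maximally contracted graph implies negativity for every admissible graph built at the same perturbative order. This is the content of the remark preceding Definition \ref{Def: admissible graphs}: non-contracted leaves correspond at the distributional level to multiplication by a smooth function $\eta\in C^\infty(\mathbb{R}^d;\mathbb{C}^d)$, which does not increase the scaling degree at the thin diagonal, so $\rho$ of any partially contracted graph is bounded above by $\rho$ of its maximally contracted counterpart. One also uses the sub-multiplicativity of the scaling degree, \cite[Rem. B.7]{Dappiaggi:2020gge} generalized to an embedded submanifold, exactly as in the proof of Theorem \ref{Prop: deformation map}, to control the scaling degree of $u_\mathcal{G}$ by $L(d-1)$ via Equation \eqref{Eq: scaling degree fundamental solutions}. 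Finally, since at each fixed perturbative order only finitely many admissible graphs can occur (the tree structure with branches of the two prescribed shapes, together with the bounded number $N(k)=2k+1$ of vertices, leaves only finitely many topologies), and since all graphs of order $k>k_0$ are non-singular at the thin diagonal, only the graphs with $k\le k_0$ — a finite set — can require renormalization. This yields the claim: for $d\le 2$ the number of admissible graphs needing renormalization is finite, i.e. the stochastic Thirring model is subcritical.

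The main obstacle I anticipate is not the arithmetic but making rigorous the reduction "negative degree of divergence $\Rightarrow$ no renormalization needed" for a multi-variable distribution whose singular support is a positive-dimensional submanifold rather than a point. One must invoke the extension theory at an embedded submanifold from \cite{Brunetti-Fredenhagen-00}: a distribution with scaling degree strictly less than the codimension of the singular submanifold extends uniquely across it, so "renormalization" is trivial there. Care is also needed in checking that the combinatorial lemmas imported from \cite{BDR23} genuinely transfer: the Thirring vertex $(\bar\psi\gamma^\mu\psi)(\gamma_\mu)^\rho_{\rho'}\psi^{\rho'}$ is cubic in the fields just as in the complex scalar / Schr\"odinger case, and—per Remark \ref{gammamatrixgraph}—the gamma matrices do not affect the counting of leaves, edges or the scaling degrees of $G_\psi$, $G_{\bar\psi}$, so the graph combinatorics (valences $1$ and $4$, $N_1(k)=2k+1$, $N_4(k)=k$, $L(k)=3k+1$) is identical. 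Once those two points are in place, the dichotomy $d\le 2$ versus $d\ge 3$ follows immediately from the sign of $d-3$ in the formula for $\rho(k)$, and no further estimate is required.
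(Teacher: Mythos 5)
Your proposal is correct and follows essentially the same route as the paper's proof: both substitute the combinatorial identities $L(k)=3k+1$, $N(k)=2k+1$ into the degree-of-divergence formula \eqref{Eq: divgraph} and read off negativity of the slope for $d<3$ (your $\rho(k)=(d-3)k+(d-1)$ is exactly the paper's $\rho=\tfrac{1}{2}(d-3)N+\tfrac{1}{2}(d+1)$ after the change of variable $N=2k+1$). You additionally spell out two steps the paper leaves implicit — the reduction to maximally contracted graphs via sub-multiplicativity of the scaling degree, and the finiteness of admissible topologies at each fixed order — which is a welcome clarification rather than a departure.
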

\begin{proof}
	On account of Equation \eqref{Eq: divgraph} as well as of Equations \eqref{Eq: edges} and \eqref{Lk1}, it holds that
	\begin{flalign}
		\rho_{\mathrm{Diag}_{Nd}} (u_{\mathcal{G}}) &= \left[3 \left(\frac{N-1}{2}\right)+1\right] (d-1) - (N-1)d = \frac{3}{2} N d - \frac{d}{2} - \frac{3}{2} N +\frac{1}{2}- Nd + d \notag \\
		&= \frac{1}{2}(d-3) N + \frac{1}{2}(d+1).\label{Eq: critical dimension}
	\end{flalign}
	In order for the degree of divergence $\rho_{\mathrm{Diag}_{Nd}} (u_{\mathcal{G}})$ to be negative for $N$ large enough, we require the coefficient of the term depending on $N$ to be negative. As a consequence, 
	\begin{equation*}
		\frac{1}{2}(d-3) < 0 \Rightarrow d < 3.
	\end{equation*}
	We can thus conclude that for $d < 3$, the model is subcritical - \textit{i.e.}, $\rho_{\mathrm{Diag}_{Nd}}(u_{\mathcal{G}}) < 0$ for $N$ large enough.     
\end{proof}

\begin{remark}
 Observe that, as a consequence of Theorem \ref{Thm: subcritical dimension}, we can also infer that, if $d = 3$, $\rho_{\mathrm{Diag}_{Nd}}(u_{\mathcal{G}})$ as per Equation \eqref{Eq: critical dimension} is independent of $N$. This entails that the model under scrutiny lies in the so called critical regime and the tools used above cannot be applied to dray any a priori conclusion on the renormalizability of the model. 
\end{remark}

\vskip.2cm

\noindent{\bf Data Availability Statement:} We declare that the manuscript has no associated data and hence
no data set has been used in the realization of this work.

\vskip.2cm

\noindent{\bf Conflict of Interests:} There are no financial and non-financial conflict of interests.

\vskip.2cm

\noindent{\bf Ethical Approval \& Informed Consent:} Not applicable.

\paragraph{Acknowledgements.}
A.B. is supported by a PhD fellowship of the University of Pavia and partly by the GNFM-Indam Progetto Giovani {\em Feynman propagator for Dirac fields: a microlocal analytic approach}, CUP E53C22001930001, whose support is gratefully acknowledged. P.R by a postdoc fellowship of the Institute for Applied Mathematics of the University of Bonn. This work is based partly on the thesis of B.C., titled {\em A Microlocal Approach to the Study of the Nonlinear Stochastic Dirac Equation} submitted to the University of Pavia in partial fulfilment of the requirements for completing the master degree program in physics.

\appendix

\section{Spinor white noise}
In this work, randomness is encoded in the information carried by an (additive) Gaussian \emph{white noise}, which plays the r\^ole of a random source term in the associated stochastic partial differential equation. Since we are interested in a dynamical model for spinors, we devote this appendix to the characterization of a suitable notion of random distribution with the desired properties. As a first step in this direction, we discuss the notion of vector-valued random distribution.
\begin{definition}\label{randist}
	Let $V$ be a finite-dimensional vector space over the real or complex field with $dim \, V = m$, $m\geq 1$, endowed with the Euclidean inner, real or sesquilinear, product. 
	Given a probability space $(\mathfrak{W}, \mathcal{F}, \mathbb{P})$, where $\mathfrak{W}$ is a set, $\mathcal{F}$ is a $\sigma-$algebra consisting of subsets of $\mathfrak{W}$ while $\mathbb{P}$ denotes the probability measure over $\mathcal{F}$, we denote by $L^2((\mathfrak{W}, \mathcal{F}, \mathbb{P});V)$ the space of square integrable vector random variables on $(\mathfrak{W}, \mathcal{F}, \mathbb{P})$. We call \textit{vector-valued random distribution} a continuous, linear map $\eta: \mathcal{D}(\mathbb{R}^d) \rightarrow L^2((\mathfrak{W}, \mathcal{F}, \mathbb{P}); V)$. 
\end{definition}
\begin{remark}
	Definition \ref{randist} suggests that a \emph{vector-valued random distribution} $\eta$ can also be interpreted as a vector-valued distribution $\eta \in \mathcal{D}'(\mathbb{R}^d; L^2((\mathfrak{W}, \mathcal{F}, \mathbb{P}); V))$, hence taking values in the space of square integrable vector random variables. Furthermore, endowing $V$ with the standard basis $\{e_i\}_{i=1, ..., m}$, we can define the $i-th$ components of $\eta$ to be a random distribution $\eta_i \in \mathcal{D}'(\mathbb{R}^d; L^2(\mathfrak{W}, \mathcal{F}, \mathbb{P}))$ with $\eta^i(f) := \langle \eta(f), e_i^* \rangle$, where $\{e_i^*\}_{i=1,...,m}$ denotes the dual basis of $V^*$, for all $i=1, ..., m$ and for all $f \in \mathcal{D}(\mathbb{R}^d)$.
\end{remark}



\noindent In the following we give a definition of white noise tailored to spinor fields, focusing only on the case investigated in this paper. Since this is a complex valued distribution, we need to take into account that a complex normal distribution is completely specified not only by its mean and covariance but also by its relation matrix.  
\begin{definition}\label{Def: fermionic white noise}
	Let 
	\begin{equation*}
		\underline{\xi} := \begin{pmatrix}
			\xi \\
			\bar{\xi}
		\end{pmatrix}
		\in \mathcal{D}'(\mathbb{R}^d; L^2((\mathfrak{W}, \mathcal{F}, \mathbb{P}); \mathbb{C}^4))
	\end{equation*}
	be a vector-valued, complex random distribution as per Definition \ref{randist}. 
	We say that $\underline{\xi}$ is a \emph{vector-valued spinor white noise} if it is a vector-valued centered Gaussian random distribution whose covariance and relation is given component by component by
	\begin{flalign} 
		&\mathbb{E}[\xi^i(f) \xi^j(g)] = 0, \\
		&\mathbb{E}[\xi^i(f) \bar{\xi}_j^* (g)] = \langle \delta^{i}_j \delta_{\mathrm{Diag}_2} \ast f, g \rangle_{L^2(\mathbb{R}^d)}=-\mathbb{E}[\bar{\xi}_j^* (f)\xi^i(g) ],\label{Eq: noncommutative white noise}
	\end{flalign}
	$\forall f, g \in \mathcal{D}(\mathbb{R}^d), \forall i,j \in \{1,...,m\}$, where $\delta_{\mathrm{Diag}_2}$ denotes the Dirac delta distribution supported on the diagonal of $\mathbb{R}^2$.
\end{definition}
We stress that Equation \eqref{Eq: noncommutative white noise} can be read as a form of anticommutativity of the noise. This translates at the level of stochastic partial differential equations the prototypical behaviour of Fermionic fields which, in quantum field theory, are built in terms of spinors.

\section{Clifford Algebras}\label{App: Clifford Algebras}

In this section we report some basic notions on Clifford algebras referring to \cite{Lawson} for a complete survey and for all proofs of the statements present in this section. Our goal is to keep the paper self-consistent especially for those readers who are not familiar with the algebraic structures at the heart of spinor fields. In the following we consider only $\mathbb{R}^d$, $d\geq 1$ as an admissible background.

\begin{definition}[Clifford algebra]\label{Def: Clifford algebra}
	Given $\mathbb{R}^d$ endowed with the Euclidean metric $\delta$, we call \textbf{Clifford algebra} $Cl(d)$ of $\mathbb{R}^d$ the real associative algebra generated by $\mathbf{1}$ and by an orthonormal basis of $\mathbb{R}^d$, whose elements $\gamma_i$, $i=1,\ldots,d$ satisfy the relation
	\begin{equation*}
		\gamma_i\cdot\gamma_j+\gamma_j\cdot\gamma_i=2 \delta_{ij}\mathbf{1}\,.
	\end{equation*}
\end{definition}

\begin{remark}
	In Definition \ref{Def: Clifford algebra} one can replace $\delta$ with any Riemannian metric $g$ constructing an algebra isomorphic to $Cl(d)$. This can be realized by direct inspection of the following alternative, albeit equivalent, definition of Clifford algebra. Let $V$ be any finite-dimensional, real vector space endowed with a symmetric bilinear form $\mu$ and let $\mathcal{T}(V):=\bigoplus_{k\geq0} V^{\otimes k}$ be the associated universal tensor algebra where $V^{\otimes 0}:=\mathbb{R}$. Denoting by $\mathcal{I}_\mu(V)$ the ideal of $\mathcal{T}(V)$ generated by $v\cdot v+\mu(v)\mathbf{1}$ for all $v\in V$, we set
	\begin{equation*}
		Cl(V,\mu):=\frac{\mathcal{T}(V)}{\mathcal{I}_\mu(V)}.
	\end{equation*}
	This is isomorphic as an algebra to $Cl(d)$ with $d=\dim V$. 
\end{remark}


\begin{definition}[Spin group]\label{Def: Spin Group}
	Let $\mathbb{R}^d$ be endowed with the standard Euclidean metric $\delta$. The \textbf{Spin group} is the subset of $Cl(d)$ as per Definition \ref{Def: Clifford algebra}, such that
	\begin{equation*}
		Spin(d):=\{v_1\cdot\ldots\cdot v_{2k}\in Cl(d)\,\vert\,v_i\in\mathbb{R}^d,\, \delta(v_i,v_i)=1,\; k\in\mathbb{N}\}.
	\end{equation*}
In addition we can introduce a degree map $\mathrm{deg}:Spin(d)\to\mathbb{N}$ such that $s\mapsto\mathrm{deg}(s)=k$ if $s=v_1\cdot\ldots\cdot v_{2k}$.
\end{definition}
	
It is important to stress that $Spin(d)$ comes endowed with a natural structure as Lie group and that one can replace $\delta$ in Definition \ref{Def: Spin Group} with a generic Riemannian metric obtaining an equivalent group. 	
	
	\begin{remark}
		Identifying $\mathbb{R}^d$ as a subspace of $Cl(d)$ via the embedding map $\gamma:\mathbb{R}^d\rightarrow Cl(d)$, $v\mapsto\gamma(v)=\sum\limits_{i=1}^n\gamma_i v_i$ see Definition \ref{Def: Clifford algebra}, we can define an homomorphism of Lie groups 
		\begin{align*}
			\lambda: Spin&(d)\rightarrow SO(d)\\
			&s\mapsto \mathcal{R}_s=\mathcal{R}(s,\cdot),
		\end{align*}
		where $R$ is the action of the spin group on $\mathbb{R}^d$, namely
		\begin{align*}
			\mathcal{R}:Spin(d)\times&\mathbb{R}^d\rightarrow\mathbb{R}^d\\
			&(s,v)\mapsto (-1)^{\text{deg}(s)}s\cdot x\cdot s^{-1}.
		\end{align*}
	Observe that, as a consequence the map $\lambda$ is a double covering map. For a detailed proof, see \cite[Thm. 2.10]{Lawson}.
	\end{remark}

In view of the preceding discussion, particularly Definition \ref{Def: Clifford algebra}, an orthonormal basis $\{e_i\}_{i=1}^d$ of $\mathbb{R}^d$ can also be read as a basis of $\{\gamma(e_i)\}_{i=1}^d$ of $Cl(d)$. Associated to it we can define the renown {\bf gamma matrices} $\gamma_i$, $i=1,\dots, d$
\begin{equation}\label{Eq: Gamma Matrices}
	\gamma_i=\sigma_d\circ\gamma(e_i)\in End(\Sigma_d),
\end{equation}
where $\sigma_d$ is the spin representation as per Definition \ref{Def: spinor space}. 




\end{document}